\documentclass[11pt]{article}

\usepackage[T1]{fontenc}
\usepackage[utf8]{inputenc}
\usepackage{lmodern}
\usepackage{amsmath,amssymb,amsthm,mathtools,bm}
\usepackage{geometry}
\usepackage{microtype}
\usepackage{needspace}
\usepackage{booktabs}
\usepackage{graphicx}
\usepackage{enumitem}
\usepackage{float}
\usepackage[numbers,sort&compress]{natbib}
\usepackage{authblk}
\usepackage{hyperref}

\hypersetup{colorlinks=true,linkcolor=blue,citecolor=blue,urlcolor=blue}
\newtheorem{theorem}{Theorem}[section]
\newtheorem{corollary}[theorem]{Corollary}
\newtheorem{proposition}[theorem]{Proposition}
\newtheorem{lemma}[theorem]{Lemma}
\theoremstyle{remark}

\newcommand{\Tr}{\operatorname{Tr}}
\newcommand{\Herm}{\operatorname{Herm}}
\newcommand{\Sym}{\operatorname{Sym}}
\newcommand{\End}{\operatorname{End}}
\newcommand{\SEP}{\mathrm{SEP}}
\newcommand{\op}{\mathrm{op}}
\newcommand{\id}{\mathrm{id}}
\newcommand{\cR}{\mathcal{R}}
\newcommand{\cH}{\mathcal{H}}
\newcommand{\cK}{\mathcal{K}}
\newcommand{\cM}{\mathcal{M}}
\newcommand{\cU}{\mathcal{U}}
\newcommand{\cV}{\mathcal{V}}
\newcommand{\cS}{\mathcal{S}}
\newcommand{\Ymn}{\mathbb{Y}_{m,n}}

\title{\textbf{Schur--Weyl Reduction of Finite-Copy Precision Bounds\\
for Multiparameter Quantum Estimation}}
\author[1]{Jianchao Zhang\thanks{Email: \texttt{jianchaozhang@cuhk.edu.cn}}}
\author[1,2,3]{Masahito Hayashi\thanks{Email: \texttt{hmasahito@cuhk.edu.cn}}}
\affil[1]{School of Data Science, The Chinese University of Hong Kong, Shenzhen, Longgang District, Shenzhen 518172, China}
\affil[2]{International Quantum Academy, Futian District, Shenzhen 518048, China}
\affil[3]{Graduate School of Mathematics, Nagoya University, Nagoya, 464-8602, Japan}

\date{October 28, 2026}

\begin{document}
\maketitle

\begin{abstract}
In multiparameter quantum estimation, incompatible measurements can impose trade-offs among estimation errors. Understanding the precision attainable with a finite number of jointly measured copies is therefore important, but direct optimization becomes costly as the tensor product space grows exponentially. For independent and identically distributed (i.i.d.) copies, the quantum statistical model is invariant under permutations of the copies. We exploit this symmetry through Schur--Weyl duality to remove permutation multiplicity spaces while using their dimensions as weights. This gives a reduced statistical model equivalent to the original under arbitrary collective measurements. On this reduced model, we derive an exact conic formulation of the tight Cram\'er--Rao bound and a semidefinite program for its Nagaoka--Hayashi (NH) relaxation. For fixed single-copy dimension, the reduced operator representation grows polynomially rather than exponentially with the number of copies. To approximate the tight bound, we use an exact quadratic identity to correct finite-ray dual solutions and obtain rigorous upper and lower bounds. For fixed copy number and parameter dimension, sufficiently fine meshes admit feasible primal points, and dual solutions with bounded objective gaps have uniformly bounded compact-search radii. These results yield quadratic convergence in the projective covering radius. Numerical examples illustrate the quadratic scaling of the mesh correction term and the computational reach of the reduced NH formulation.
\end{abstract}

\section{Introduction}
\label{sec:introduction}

Quantum parameter estimation has a long statistical foundation~\cite{Helstrom1976,Holevo2011,BraunsteinCaves1994}, and quantum sensing and quantum metrology apply this framework by optimizing the available probe and measurement resources~\cite{Degen2017,Szczykulska2016}. A central challenge in multiparameter quantum estimation is the incompatibility of measurements optimized for different parameters and the resulting trade-offs among estimation errors~\cite{Ragy2016,Albarelli2020,Demkowicz2020,Kull2020}. The allowed size of collective measurements constrains the attainable trade-offs among estimation errors~\cite{Chen2022PRA}. The number $m$ of jointly measured copies is therefore a natural finite resource. We consider the independent and identically distributed (i.i.d.) state family $\rho_\theta^{\otimes m}$, consisting of $m$ copies of a quantum state $\rho_\theta$, and allow arbitrary collective measurements on each block. Understanding this finite-copy hierarchy is necessary for connecting experimentally restricted measurements with asymptotic quantum-estimation theory~\cite{GillMassar2000,Hayashi2005}.

The Holevo--Nagaoka bound provides an asymptotic precision benchmark for multiparameter quantum estimation, and its attainability under collective measurements has been established for broad classes of quantum statistical models~\cite{Holevo2011,Yamagata2013,Yang2019}. For a fixed number of jointly measured copies, however, the achievable precision is not generally determined by this asymptotic result. Finite-copy state estimation has shown benefits of collective measurements~\cite{MassarPopescu1995,ZhuHayashi2018}, while studies of multiparameter estimation examine how trade-offs depend on the number of copies measured jointly~\cite{Chen2022PRL,Chen2022PRA} and whether gaps from asymptotic limits persist~\cite{Conlon2022Gap}. Experimental work has also demonstrated the advantages and feasibility of collective measurements in quantum sensing and state tomography~\cite{Roccia2018,Hou2018,Conlon2023}. On the computational side, semidefinite optimization has been used to evaluate the Holevo precision benchmark~\cite{AlbarelliFrielDatta2019}, while conic methods characterize the exact finite-copy precision and semidefinite relaxations yield computable lower bounds~\cite{HayashiOuyang2023,Conlon2021}. Nevertheless, direct computations become increasingly demanding as the number of copies grows, owing to the rapidly increasing size of the quantum system. Obtaining high numerical accuracy for the exact precision bound presents an additional challenge, even with existing approximation methods that provide rigorous error guarantees~\cite{HayashiOuyang2023}.

Although the Hilbert space of an $m$-copy quantum system grows exponentially with $m$, an i.i.d. quantum statistical model possesses substantial permutation symmetry. Exploiting this symmetry allows the tensor product representation to be reduced to smaller blocks, substantially simplifying many-copy calculations. Schur--Weyl duality provides a systematic framework for such reductions and has been applied in quantum state estimation and related statistical problems~\cite{KeylWerner2001,Bagan2006,BlumeKohout2013,MarvianSpekkens2014}. Related symmetry-based methods have also been used to simplify optimization in quantum metrology~\cite{Liu2023,Liu2024}. In this work, we exploit this structure to obtain reduced formulations of finite-copy multiparameter precision bounds.

Our first result establishes an exact Schur--Weyl reduction of the tight Cram\'er--Rao bound and its Nagaoka--Hayashi (NH) relaxation for finite-copy multiparameter estimation. We remove permutation multiplicity spaces and use their dimensions as weights to construct a reduced statistical model that is equivalent to the original $m$-copy model under arbitrary collective measurements. We combine this statistical equivalence with the conic framework of Hayashi and Ouyang~\cite{HayashiOuyang2023}. This gives a reduced conic program for the tight bound and a reduced semidefinite program (SDP) for the NH bound. Both reduced programs have the same optimal values as their respective full-space programs. The reduction also decreases the real dimension of the Hermitian operator space needed to represent permutation-invariant operators. For a single-copy Hilbert space of dimension $n$, the full Hermitian operator space on $m$ copies and its permutation-invariant subspace have dimensions, respectively,
\begin{equation}
 N_{\mathrm{full}}(m,n)=n^{2m},
 \qquad
 N_{\mathrm{op}}(m,n)=\binom{m+n^2-1}{n^2-1}.
\label{eq:intro-size}
\end{equation}
The second identity follows from Schur--Weyl duality~\cite{Sagan2001} and is derived in terms of its representation blocks in Sec.~\ref{subsec:size-reduction}. Thus, for fixed $n$, the dimension of the permutation-invariant Hermitian operator space grows polynomially rather than exponentially with $m$, while the finite-copy precision bounds remain unchanged.

Our second result improves the numerical approximation of the tight Cram\'er--Rao bound. Even after Schur--Weyl reduction, its conic formulation involves constraints that must hold for infinitely many directions. A natural approach is to select finitely many directions, leading to a finite semidefinite program~\cite{HayashiOuyang2023}. We exploit an exact quadratic identity to construct a correction to this finite-mesh approximation, obtaining rigorous upper and lower bounds on the tight value. For each fixed copy number and parameter dimension, we show that every sufficiently fine projective mesh admits a feasible primal point. We also establish a uniform bound on the compact-search radii of near-optimal sampled dual solutions. Together with the quadratic correction, these results yield an $O(\delta^2)$ gap between the finite-ray upper bound and a repaired dual lower bound, where $\delta$ is the mesh covering radius. This improves the earlier $O(\delta)$ estimate of Hayashi and Ouyang~\cite{HayashiOuyang2023}.

Numerical examples examine the quadratic correction for the tight bound and illustrate the large-copy computational reach of the reduced Nagaoka--Hayashi formulation. We also apply these methods to noisy qubit sensing models, including two-axis rotations and joint phase--damping estimation. Section~\ref{sec:sw-reduction} introduces the finite-copy statistical model and its Schur--Weyl reduction; Sec.~\ref{sec:reduced-bounds} derives the reduced optimization formulations; Sec.~\ref{sec:quadratic} develops the quadratic approximation; Sec.~\ref{sec:numerics} presents numerical results; Sec.~\ref{sec:applications} gives noisy sensing applications; and Sec.~\ref{sec:discussion} discusses the findings and limitations.
\section{Finite-copy multiparameter estimation and Schur--Weyl reduction}
\label{sec:sw-reduction}

\subsection{Finite-copy local statistical model}
\label{subsec:local-model}

Let $\mathcal S(\cH)$ denote the density operators on an $n$-dimensional Hilbert space $\cH$. We consider a finite-dimensional $d$-parameter quantum statistical model
\[
 \cM=\{\rho_\theta:\theta\in\Theta\subset\mathbb R^d\},
 \qquad
 \rho_\theta\in\mathcal S(\cH),
 \qquad
 \dim\cH=n.
\]
We work locally at an operating point translated to $\theta=0$ and use the one-copy state and tangent operators
\[
 \rho:=\rho_0,
 \qquad
 D_i:=\left.\frac{\partial\rho_\theta}{\partial\theta_i}\right|_{\theta=0},
 \qquad i=1,\ldots,d.
\]
We assume throughout that the model is differentiable in a neighborhood of the operating point, that $\rho$ is full rank ($\rho\succ0$), and that $D_1,\ldots,D_d$ are linearly independent. Let $G\in\mathbb R^{d\times d}$ be symmetric and positive definite.

For a fixed copy number $m$, the i.i.d. model is $\rho_\theta^{\otimes m}$ on $\cH^{\otimes m}$, with tangent operators
\begin{equation}
 D_i^{(m)}
 :=\left.\frac{\partial\rho_\theta^{\otimes m}}{\partial\theta_i}\right|_{\theta=0}
 =\sum_{r=1}^m\rho^{\otimes(r-1)}\otimes D_i\otimes\rho^{\otimes(m-r)}.
\label{eq:m-copy-tangents}
\end{equation}
The $m$-copy tangents $D_1^{(m)},\ldots,D_d^{(m)}$ remain linearly independent; a short proof is given in Lemma~\ref{app:tangent-independence} in Appendix~\ref{app:sw-details}.

A POVM $\Pi=\{\Pi_x\}$ on $\cH^{\otimes m}$ and estimator $\widehat\theta(x)\in\mathbb R^d$ are locally unbiased at $\theta=0$ if
\begin{align}
 \sum_x\widehat\theta_i(x)\Tr[\rho^{\otimes m}\Pi_x]&=0,\label{eq:local-unbiased-mean}\\
 \sum_x\widehat\theta_i(x)\Tr[D_j^{(m)}\Pi_x]&=\delta_{ij}.\label{eq:local-unbiased-derivative}
\end{align}
The first condition fixes the local mean and the second fixes the first-order response. Their mean-square-error (MSE) matrix is
\begin{equation}
 V_{ij}^{(m)}[\Pi,\widehat\theta]
 :=\sum_x\widehat\theta_i(x)\widehat\theta_j(x)\Tr[\rho^{\otimes m}\Pi_x].
\label{eq:mse}
\end{equation}
The exact finite-copy locally unbiased precision value is
\begin{equation}
 C^{(m)}[G]
 :=\inf_{(\Pi,\widehat\theta)\ \mathrm{l.u.}}
 \Tr\!\left[G V^{(m)}[\Pi,\widehat\theta]\right].
\label{eq:Cm-def}
\end{equation}
Equation~\eqref{eq:Cm-def} is the finite-copy precision quantity used throughout the paper. This locally unbiased formulation is standard in quantum estimation theory; see, e.g., Refs.~\cite{Helstrom1976,Holevo2011,Hayashi2005,Matsumoto2002}.

\subsection{Schur--Weyl decomposition of the finite-copy model}
\label{subsec:sw-decomp}

Let $\Ymn$ denote the set of partitions (Young diagrams) $\lambda$ of $m$ with at most $n$ rows,
\begin{equation}
 \Ymn:=\{\lambda\vdash m:\ell(\lambda)\le n\}.
\label{eq:young-index-set}
\end{equation}
Schur--Weyl duality gives~\cite{Sagan2001,Hayashi2017}
\begin{equation}
 \cH^{\otimes m}
 \cong
 \bigoplus_{\lambda\in\Ymn}\cU_\lambda\otimes\cV_\lambda,
\label{eq:general-sw-decomposition}
\end{equation}
where $\cU_\lambda$ carries an irreducible $U(n)$ representation and $\cV_\lambda$ the corresponding irreducible representation of the permutation group $S_m$. We call the direct-sum component indexed by $\lambda$ in Eq.~\eqref{eq:general-sw-decomposition} the $\lambda$-sector. After removing the multiplicity space $\cV_\lambda$, we call $\cU_\lambda$ the reduced block. From the $U(n)$ viewpoint, $\cV_\lambda$ is the multiplicity space: its dimension counts how many equivalent copies of $\cU_\lambda$ occur. Write $f_\lambda:=\dim\cV_\lambda$.

For every $\theta$, the i.i.d. state $\rho_\theta^{\otimes m}$ commutes with all permutations. Hence Schur's lemma gives
\begin{equation}
 \rho_\theta^{\otimes m}
 \cong
 \bigoplus_{\lambda\in\Ymn}R_{\lambda,\theta}\otimes I_{\cV_\lambda}.
\label{eq:block-state}
\end{equation}
Differentiating at $\theta=0$ gives
\begin{equation}
 D_i^{(m)}
 \cong
 \bigoplus_{\lambda\in\Ymn}D_{i,\lambda}\otimes I_{\cV_\lambda}.
\label{eq:block-tangent}
\end{equation}
The multiplicity spaces therefore repeat identical state and tangent blocks.

\subsection{Schur--Weyl reduced model and statistical equivalence}
\label{subsec:reduced-model}

Because the state and tangent operators in Eqs.~\eqref{eq:block-state}--\eqref{eq:block-tangent} act identically on the copies of $\cU_\lambda$ indexed by $\cV_\lambda$, we can represent the parameter-dependent blocks without retaining the multiplicity spaces. To account for all $f_\lambda$ identical copies, their dimensions must enter the reduced state and tangent operators as weights. Starting from the decomposition in Eq.~\eqref{eq:general-sw-decomposition}, define
\begin{equation}
 \cK_m:=\bigoplus_{\lambda\in\Ymn}\cU_\lambda,
\label{eq:Km}
\end{equation}
and the weighted block operators
\begin{equation}
 \bar R_{\lambda,\theta}:=f_\lambda R_{\lambda,\theta},
 \qquad
 \bar D_{i,\lambda}:=f_\lambda D_{i,\lambda}.
\label{eq:weighted-blocks}
\end{equation}
For example, if $H=\bigoplus_\lambda H_\lambda\otimes I_{\cV_\lambda}$ commutes with all permutations, then
\begin{equation}
 \Tr[\rho_\theta^{\otimes m}H]
 =\sum_\lambda f_\lambda\Tr[R_{\lambda,\theta}H_\lambda]
 =\sum_\lambda\Tr[\bar R_{\lambda,\theta}H_\lambda].
\label{eq:weighted-trace}
\end{equation}
Thus the multiplicity weights retain the expectation values of permutation-invariant observables. The preservation of outcome distributions for arbitrary POVMs will follow below from the compression and recovery channels, rather than from Eq.~\eqref{eq:weighted-trace} alone.
On $\cK_m$, define the reduced state family and tangent operators by
\begin{equation}
 \rho_{\theta,\mathrm{red}}^{(m)}
 :=\bigoplus_{\lambda\in\Ymn}\bar R_{\lambda,\theta},
 \qquad
 D_{i,\mathrm{red}}^{(m)}
 :=\bigoplus_{\lambda\in\Ymn}\bar D_{i,\lambda}.
\label{eq:reduced-model}
\end{equation}
We call the family $\{\rho_{\theta,\mathrm{red}}^{(m)}\}_\theta$ on $\cK_m$ the \emph{Schur--Weyl reduced model}. At the operating point we write $\bar R_\lambda:=\bar R_{\lambda,0}$ and $\rho_{\mathrm{red}}^{(m)}:=\rho_{0,\mathrm{red}}^{(m)}$. The quantity $p_\lambda(\theta):=\Tr\bar R_{\lambda,\theta}$ is the probability of the Schur--Weyl sector $\lambda$, with $\sum_\lambda p_\lambda(\theta)=1$.

\Needspace{4\baselineskip}
The concept of sufficient statistics~\cite{Buscemi2012} provides a natural interpretation of the Schur--Weyl reduction. We show that the reduced quantum statistical model preserves all information about the unknown parameters by constructing a compression channel and a parameter-independent recovery channel. Let $P_\lambda^{\mathrm{full}}$ project onto $\cU_\lambda\otimes\cV_\lambda$ in $\cH^{\otimes m}$, and let $P_\lambda^{\mathrm{red}}$ project onto $\cU_\lambda$ in $\cK_m$. Define $\tau_\lambda:=I_{\cV_\lambda}/f_\lambda$. Define the compression and recovery channels by
\begin{align}
 \Phi_m(A)
 &:=\bigoplus_{\lambda\in\Ymn}\Tr_{\cV_\lambda}
   (P_\lambda^{\mathrm{full}} A P_\lambda^{\mathrm{full}}),
 \label{eq:Phi}\\
 \Psi_m(B)
 &:=\bigoplus_{\lambda\in\Ymn}
   (P_\lambda^{\mathrm{red}} B P_\lambda^{\mathrm{red}})\otimes\tau_\lambda.
 \label{eq:Psi}
\end{align}
The channels in Eqs.~\eqref{eq:Phi}--\eqref{eq:Psi} are completely positive and trace preserving (CPTP). Using the decompositions and weights in Eqs.~\eqref{eq:block-state}, \eqref{eq:block-tangent}, and~\eqref{eq:weighted-blocks} gives
\begin{align}
 \Phi_m(\rho_\theta^{\otimes m})&=\rho_{\theta,\mathrm{red}}^{(m)},
 &\Psi_m(\rho_{\theta,\mathrm{red}}^{(m)})&=\rho_\theta^{\otimes m},
 \label{eq:family-recovery}\\
 \Phi_m(D_i^{(m)})&=D_{i,\mathrm{red}}^{(m)},
 &\Psi_m(D_{i,\mathrm{red}}^{(m)})&=D_i^{(m)}.
 \label{eq:tangent-recovery}
\end{align}
Equation~\eqref{eq:family-recovery} shows that the original state family can be exactly recovered from the reduced family for every $\theta$. Therefore, the compression channel $\Phi_m$ is sufficient for the original quantum statistical model. Proposition~\ref{prop:stat-equivalence} establishes the corresponding equivalence of outcome probability distributions under arbitrary POVMs.

\begin{proposition}[Schur--Weyl statistical equivalence]
\label{prop:stat-equivalence}
The full $m$-copy i.i.d. model on $\cH^{\otimes m}$ and the Schur--Weyl reduced model on $\cK_m$ generate the same classical outcome distributions under arbitrary collective POVMs. Consequently, they have the same locally attainable MSE matrices and the same precision value $C^{(m)}[G]$ defined in Eq.~\eqref{eq:Cm-def} for every $G\succ0$.
\end{proposition}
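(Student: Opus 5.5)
The plan is to use the two channels $\Phi_m$ and $\Psi_m$ from Eqs.~\eqref{eq:Phi}--\eqref{eq:Psi}, together with the identities \eqref{eq:family-recovery}--\eqref{eq:tangent-recovery}, to move POVMs back and forth between the two models through the Heisenberg-picture adjoints $\Phi_m^\dagger$ and $\Psi_m^\dagger$.

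First I will record that both channels are CPTP. For $\Phi_m$ this holds because it is a pinching onto the $\lambda$-sectors, followed by partial traces and a direct sum. Since $\sum_\lambda P_\lambda^{\mathrm{full}}=I$, the trace is preserved. For $\Psi_m$ it is a pinching onto the reduced blocks, followed by appending the normalized state $\tau_\lambda$ and an isometric embedding into $\cH^{\otimes m}$. The adjoints are therefore unital and completely positive. Concretely, $\Psi_m^\dagger(E)=\bigoplus_\lambda \Tr_{\cV_\lambda}[(I\otimes\tau_\lambda)P^{\mathrm{full}}_\lambda E P^{\mathrm{full}}_\lambda]$ and $\Phi_m^\dagger(F)=\bigoplus_\lambda (P^{\mathrm{red}}_\lambda F P^{\mathrm{red}}_\lambda)\otimes I_{\cV_\lambda}$, up to the fixed unitary implementing Eq.~\eqref{eq:general-sw-decomposition}.

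Given any POVM $\Pi=\{\Pi_x\}$ on $\cH^{\otimes m}$, I will define $\Pi^{\mathrm{red}}_x:=\Psi_m^\dagger(\Pi_x)$. Positivity and unitality of $\Psi_m^\dagger$ make this a POVM on $\cK_m$. By \eqref{eq:family-recovery},
\[
\Tr[\rho^{(m)}_{\theta,\mathrm{red}}\Pi^{\mathrm{red}}_x]=\Tr[\Psi_m(\rho^{(m)}_{\theta,\mathrm{red}})\Pi_x]=\Tr[\rho_\theta^{\otimes m}\Pi_x]
\]
for every $\theta$. Conversely, given a POVM $\{F_x\}$ on $\cK_m$, I will set $\Pi_x:=\Phi_m^\dagger(F_x)$ and use the left identity in \eqref{eq:family-recovery}. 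The sets of attainable outcome distributions therefore coincide for all $\theta$. By linearity of the adjoint maps and \eqref{eq:tangent-recovery}, the derivative quantities $\Tr[D_j^{(m)}\Pi_x]$ and $\Tr[D^{(m)}_{j,\mathrm{red}}\Pi^{\mathrm{red}}_x]$ also agree. This is consistent with differentiating the matched distributions in $\theta$, but the direct tangent identity avoids any differentiability issue.

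For the consequence, fix an estimator $\widehat\theta$ and keep it unchanged when transporting the POVM. The local unbiasedness conditions \eqref{eq:local-unbiased-mean}--\eqref{eq:local-unbiased-derivative} involve only $\Tr[\rho^{\otimes m}\Pi_x]$ and $\Tr[D_j^{(m)}\Pi_x]$, and the MSE \eqref{eq:mse} involves only the former. All three quantities are preserved in both directions by the previous step. Hence the sets of locally unbiased pairs map onto each other with identical MSE matrices $V^{(m)}$. Taking the infimum of $\Tr[GV^{(m)}]$ in \eqref{eq:Cm-def} over the same set of matrices then gives equal values of $C^{(m)}[G]$ for every $G\succ0$.

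The only step requiring genuine care is verifying \eqref{eq:family-recovery}--\eqref{eq:tangent-recovery} from the block forms. The key computation is
\[
\Tr_{\cV_\lambda}(R_{\lambda,\theta}\otimes I_{\cV_\lambda})=f_\lambda R_{\lambda,\theta}=\bar R_{\lambda,\theta},
\]
and, in the other direction, $\bar R_{\lambda,\theta}\otimes\tau_\lambda=R_{\lambda,\theta}\otimes I_{\cV_\lambda}$, which is exactly where the weights $f_\lambda$ are needed. One must also check that $\rho_\theta^{\otimes m}$ has no off-diagonal blocks between different sectors, so that the pinching in $\Phi_m$ loses nothing; this is supplied by Eq.~\eqref{eq:block-state}. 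With these identities in hand, the remainder of the argument is formal.
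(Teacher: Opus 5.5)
Your proposal is correct and follows essentially the same route as the paper: you transport POVMs through the unital completely positive adjoints $\Psi_m^*$ and $\Phi_m^*$, and you use Eq.~\eqref{eq:family-recovery} to match outcome distributions for every $\theta$. You then keep the estimator values fixed so that both local-unbiasedness conditions and the MSE matrix carry over in each direction. The only minor difference is that you get the derivative responses directly from the tangent identities~\eqref{eq:tangent-recovery} by trace duality, whereas the paper differentiates the matched distributions at the operating point; both are valid.
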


\begin{proof}
Because the channels in Eqs.~\eqref{eq:Phi}--\eqref{eq:Psi} are trace preserving, their adjoints are unital and completely positive and therefore map POVMs to POVMs. If $\{M_x\}$ is a POVM on the full model, define $\widetilde M_x:=\Psi_m^*(M_x)$. Equation~\eqref{eq:family-recovery} gives, for every $\theta$,
\[
 \Tr[\rho_{\theta,\mathrm{red}}^{(m)}\widetilde M_x]
 =\Tr[\rho_\theta^{\otimes m}M_x].
\]
Conversely, for any POVM $\{N_x\}$ on $\cK_m$, the operators $M_x:=\Phi_m^*(N_x)$ form a POVM on the full model and satisfy
\[
 \Tr[\rho_\theta^{\otimes m}M_x]
 =\Tr[\rho_{\theta,\mathrm{red}}^{(m)}N_x]
 \qquad\text{for every }\theta.
\]
Differentiating these equalities at the operating point shows that the derivative responses also agree. Therefore, in either direction we can keep the estimator values unchanged: the two conditions for local unbiasedness in Eqs.~\eqref{eq:local-unbiased-mean}--\eqref{eq:local-unbiased-derivative} and the MSE matrix in Eq.~\eqref{eq:mse} are preserved.
\end{proof}

Figure~\ref{fig:sw-block-diagram} illustrates the removal of permutation multiplicities in a three-copy qubit example.
\begin{figure}[H]
\centering
\includegraphics[width=0.93\linewidth]{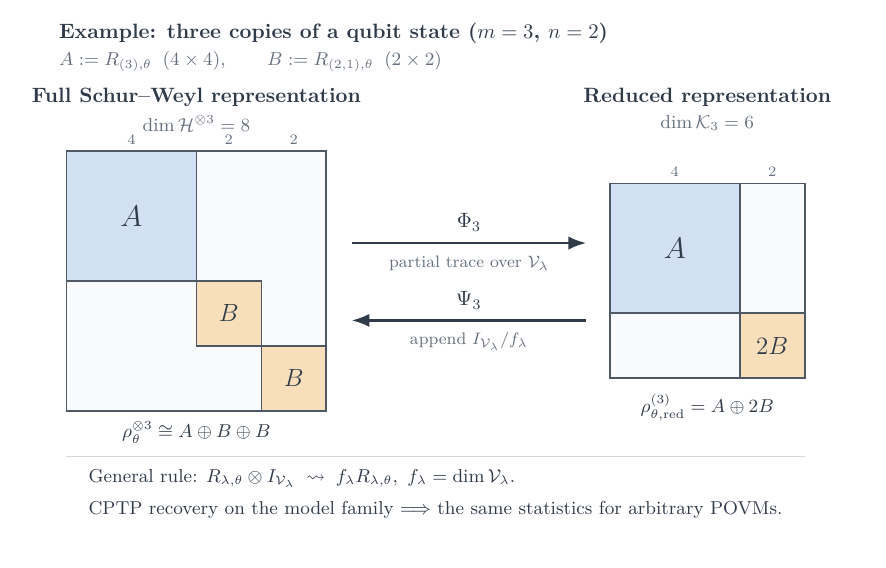}
\caption{Schur--Weyl compression for three identical qubit copies. In a Schur--Weyl basis the full state has blocks $A\oplus B\oplus B$ of sizes $4,2,2$, whereas the reduced state has weighted blocks $A\oplus 2B$ of sizes $4,2$. The channels $\Phi_3$ and $\Psi_3$ recover the two statistical model families from one another; they are not asserted to be inverse maps on arbitrary inputs. The Hilbert space dimensions change from $8$ to $6$, while the dimensions of the corresponding full and permutation-invariant Hermitian operator spaces are $64$ and $20$.}
\label{fig:sw-block-diagram}
\end{figure}

\section{Schur--Weyl reduction of finite-copy precision bounds}
\label{sec:reduced-bounds}

The tight finite-copy precision and its Nagaoka--Hayashi (NH) relaxation have the same objective and statistical constraints in a conic formulation; they differ in the cone to which the optimization variable belongs~\cite{HayashiOuyang2023}. We first formulate the two bounds with a common cone notation. We then apply the Schur--Weyl compression to this formulation and obtain the tight and NH programs as separate specializations. This order separates the symmetry reduction, which is exact for both cones, from the relaxation used only for the NH bound.

\subsection{Conic formulation}
\label{subsec:common-cone}

Introduce the real auxiliary space and the operators that encode the weight and derivative coordinates,
\begin{equation}
 \cR:=\operatorname{span}_{\mathbb R}\{|0\rangle,|1\rangle,\ldots,|d\rangle\}
 \simeq\mathbb R^{d+1},
 \qquad
 \widetilde G:=\sum_{i,j=1}^dG_{ij}|i\rangle\langle j|,
 \qquad
 Q_i:=\frac12\bigl(|0\rangle\langle i|+|i\rangle\langle0|\bigr).
\label{eq:auxiliary-data}
\end{equation}
The auxiliary space $\cR$ allows a POVM and its estimator to be encoded in a single operator. For a POVM $\{\Pi_x\}$ and estimator $\widehat\theta(x)$, define
\[
 |v_x\rangle:=|0\rangle+\sum_{i=1}^d\widehat\theta_i(x)|i\rangle,
 \qquad
 X(\Pi,\widehat\theta):=\sum_x |v_x\rangle\langle v_x|\otimes\Pi_x.
\]
Each term is positive and separable across $\cR$ and $\cH^{\otimes m}$.
For a finite-dimensional Hilbert space $\mathcal A$, let
\begin{equation}
 \cS_{\SEP}(\cR:\mathcal A)
 :=\operatorname{cone}\{A\otimes B:A\in\Sym_+(\cR),\ B\in\Herm_+(\mathcal A)\}
\label{eq:SEP-cone}
\end{equation}
be the real separable cone, and let
\begin{equation}
 \cS_{\mathrm P}(\cR:\mathcal A)
 :=\{X\in\Sym(\cR)\otimes\Herm(\mathcal A): X\succeq0\}
\label{eq:P-cone}
\end{equation}
be the positive-semidefinite cone in the same real-symmetric operator space; positivity in Eq.~\eqref{eq:P-cone} is understood after complexifying the auxiliary factor. These cones satisfy
\[
 \cS_{\SEP}(\cR:\mathcal A)\subseteq\cS_{\mathrm P}(\cR:\mathcal A).
\]
The common optimization objective and constraints below therefore give the NH bound as a lower bound on the tight value.

Using the auxiliary data in Eq.~\eqref{eq:auxiliary-data}, for either cone choice $\mathcal C\in\{\cS_{\SEP},\cS_{\mathrm P}\}$ define the full-space conic program
\begin{equation}
 P_{\mathcal C,\mathrm{full}}^{(m)}[G]
 :=\min_X\Tr[(\widetilde G\otimes\rho^{\otimes m})X]
\label{eq:common-full-primal}
\end{equation}
subject to
\begin{align}
 X&\in\mathcal C(\cR:\cH^{\otimes m}),
 \label{eq:common-full-cone}\\
 \Tr_{\cR}[(|0\rangle\langle0|\otimes I)X]&=I,
 \label{eq:common-full-completeness}\\
 \Tr[(Q_i\otimes D_j^{(m)})X]&=\delta_{ij}.
 \label{eq:common-full-derivative}
\end{align}
The cone constraint in Eq.~\eqref{eq:common-full-cone} selects the precision bound. For a POVM and estimator represented by $X(\Pi,\widehat\theta)$, Eqs.~\eqref{eq:common-full-completeness} and \eqref{eq:common-full-derivative} express POVM completeness and the local derivative condition~\eqref{eq:local-unbiased-derivative}, respectively. The objective in Eq.~\eqref{eq:common-full-primal} equals $\Tr[G V^{(m)}[\Pi,\widehat\theta]]$. The zero-mean condition~\eqref{eq:local-unbiased-mean} is not imposed explicitly, so we must also explain how to obtain it from a feasible separable solution.

First, any feasible separable $X$ admits a finite decomposition into terms $|w_s\rangle\langle w_s|\otimes X_s$, where $X_s\succeq0$. The terms with $w_{s,0}=0$ contribute neither to completeness nor to the derivative conditions. They may be discarded because their contributions to the objective are nonnegative. For each remaining term, define a measurement effect and estimator values by
\begin{equation}
 \Pi_s=|w_{s,0}|^2X_s,\qquad
 \widehat\theta_i(s)=\frac{w_{s,i}}{w_{s,0}}.
\label{eq:sep-povm-reconstruction}
\end{equation}
By Eqs.~\eqref{eq:common-full-completeness} and \eqref{eq:common-full-derivative},
these effects satisfy $\sum_s\Pi_s=I$ and
$\sum_s\widehat\theta_i(s)\Tr(D_j^{(m)}\Pi_s)=\delta_{ij}$.
Thus the reconstructed measurement is normalized and its estimator has
the required local derivatives.

Second, the reconstructed estimator may have a nonzero mean,
$\mu=\sum_s\widehat\theta(s)\Tr(\rho^{\otimes m}\Pi_s)$.
Replacing $\widehat\theta(s)$ by $\widehat\theta(s)-\mu$ sets this mean
to zero without changing the derivative condition, since
$\Tr D_j^{(m)}=0$. The resulting local second-moment matrix obeys
\begin{equation}
 V^{(m)}[\widehat\theta-\mu]
 =V^{(m)}[\widehat\theta]-\mu\mu^T.
\label{eq:sep-mean-centering}
\end{equation}
Since $G\succ0$, centering cannot increase the weighted cost.
The reconstruction therefore supplies a locally unbiased strategy whose
cost does not exceed the separable conic objective.
Appendix~\ref{app:finite-ray} gives the corresponding reconstruction
in sector notation for the finite-ray restriction.

The conic formulation of Hayashi and Ouyang~\cite{HayashiOuyang2023}, extending earlier linear programming approaches to attainable precision bounds~\cite{Hayashi1997}, identifies the full-space program in Eq.~\eqref{eq:common-full-primal} with the two precision quantities as
\begin{equation}
 C^{(m)}[G]=P_{\cS_{\SEP},\mathrm{full}}^{(m)}[G],
 \qquad
 C_{\mathrm{NH,full}}^{(m)}[G]
 =P_{\cS_{\mathrm P},\mathrm{full}}^{(m)}[G].
\label{eq:cone-specializations}
\end{equation}

\subsection{Exact Schur--Weyl reduction}
\label{subsec:common-reduction}

For the Schur--Weyl reduced model defined in Eq.~\eqref{eq:reduced-model}, define the conic program
\begin{equation}
 P_{\mathcal C,\mathrm{red}}^{(m)}[G]
 :=\min_{\{X_\lambda\}}
 \sum_{\lambda\in\Ymn}\Tr[(\widetilde G\otimes\bar R_\lambda)X_\lambda]
\label{eq:common-red-primal}
\end{equation}
subject to
\begin{align}
 X_\lambda&\in\mathcal C(\cR:\cU_\lambda)
 \quad\forall\lambda,
 \label{eq:common-red-cone}\\
 \Tr_{\cR}[(|0\rangle\langle0|\otimes I_{\cU_\lambda})X_\lambda]
 &=I_{\cU_\lambda}
 \quad\forall\lambda,
 \label{eq:common-red-completeness}\\
 \sum_{\lambda\in\Ymn}\Tr[(Q_i\otimes\bar D_{j,\lambda})X_\lambda]
 &=\delta_{ij}.
 \label{eq:common-red-derivative}
\end{align}
Equations~\eqref{eq:common-red-cone}--\eqref{eq:common-red-completeness} are blockwise in the $\lambda$-sectors, whereas Eq.~\eqref{eq:common-red-derivative} remains globally coupled through the sum over $\lambda$. This is the structural form inherited by both the NH and tight specializations.

\begin{theorem}[Exact Schur--Weyl reduction]
\label{thm:common-reduction}
For every $m\ge1$, $G\succ0$, and $\mathcal C\in\{\cS_{\SEP},\cS_{\mathrm P}\}$,
\begin{equation}
 P_{\mathcal C,\mathrm{full}}^{(m)}[G]
 =P_{\mathcal C,\mathrm{red}}^{(m)}[G].
\label{eq:common-reduction-equality}
\end{equation}
\end{theorem}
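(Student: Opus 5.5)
We prove the equality by two inequalities, transporting feasible points between the programs with the auxiliary-extended channels $\id_{\cR}\otimes\Phi_m$ and $\id_{\cR}\otimes\Psi_m$ (or their adjoints, as appropriate). In each direction we check three things: the cone is preserved, the completeness and derivative constraints are preserved, and the objective is unchanged. Both cones must work. For $\cS_{\SEP}$ this holds because a positive map acting on the second factor sends $A\otimes B$ to $A\otimes(\text{positive})$. For $\cS_{\mathrm P}$ it holds because $\Phi_m,\Psi_m$ are completely positive, so $\id_{\cR}\otimes\Phi_m$ preserves positivity after complexifying $\cR$. These maps act only on the quantum factor, so real symmetry in the $\cR$ factor is preserved.

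\textbf{Reduced $\ge$ full.} Given feasible $\{X_\lambda\}$, set
\[
X:=\bigoplus_\lambda X_\lambda\otimes I_{\cV_\lambda},
\]
embedded block-diagonally in $\Sym(\cR)\otimes\Herm(\cH^{\otimes m})$ via the isomorphism~\eqref{eq:general-sw-decomposition}. Up to the normalization of $\tau_\lambda$, this is $f_\lambda(\id\otimes\Psi_m)$ applied blockwise; equivalently it is $(\id\otimes\Phi_m^*)(\bigoplus_\lambda X_\lambda)$, since $\Phi_m^*(B)=\bigoplus_\lambda (P^{\mathrm{red}}_\lambda BP^{\mathrm{red}}_\lambda)\otimes I_{\cV_\lambda}$ on block-diagonal $B$.

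\begin{itemize}
\item \emph{Cone.} Tensoring with $I_{\cV_\lambda}$ and taking direct sums preserve both cones. Direct sums of separable elements remain separable, using $B\oplus0$.
\item \emph{Completeness.} $\Tr_\cR[(|0\rangle\langle0|\otimes I)X]=\bigoplus_\lambda I_{\cU_\lambda}\otimes I_{\cV_\lambda}=I$.
\item \emph{Derivatives and objective.} Using Eqs.~\eqref{eq:block-state}--\eqref{eq:block-tangent} and $\Tr[(A\otimes I_{\cV_\lambda})(B\otimes I_{\cV_\lambda})]=f_\lambda\Tr[AB]$, we get $\Tr[(Q_i\otimes D_j^{(m)})X]=\sum_\lambda\Tr[(Q_i\otimes\bar D_{j,\lambda})X_\lambda]$. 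The same computation with $\widetilde G$ and $\rho^{\otimes m}$ gives equality of objectives, by the weighting~\eqref{eq:weighted-blocks}, exactly as in Eq.~\eqref{eq:weighted-trace}.
\end{itemize}

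\textbf{Full $\ge$ reduced.} Given feasible $X$, define the reduced candidate
\[
\bigoplus_\lambda X_\lambda:=(\id_\cR\otimes\Psi_m^*)(X),
\qquad
X_\lambda=\Tr_{\cV_\lambda}\!\bigl[(I\otimes P^{\mathrm{full}}_\lambda)X(I\otimes P^{\mathrm{full}}_\lambda)\bigr]/f_\lambda .
\]
This is the $\lambda$-block compression followed by averaging over the multiplicity space.

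\begin{itemize}
\item \emph{Cone.} $\Psi_m^*$ is completely positive, and on product elements it maps $A\otimes B\mapsto A\otimes\Psi_m^*(B)$, so both cones are preserved.
\item \emph{Completeness.} $\Psi_m^*$ is unital, so completeness holds blockwise: $\Tr_{\cV_\lambda}[P_\lambda^{\mathrm{full}}P_\lambda^{\mathrm{full}}]/f_\lambda=I_{\cU_\lambda}$.
\item \emph{Derivatives and objective.} By duality, $\Tr[(Q_i\otimes B)(\id\otimes\Psi_m^*)(X)]=\Tr[(Q_i\otimes\Psi_m(B))X]$. Take $B=D_{j,\mathrm{red}}^{(m)}$, which satisfies $\Psi_m(D_{j,\mathrm{red}}^{(m)})=D_j^{(m)}$ by Eq.~\eqref{eq:tangent-recovery}. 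The left-hand side is then $\sum_\lambda\Tr[(Q_i\otimes\bar D_{j,\lambda})X_\lambda]$, since $D_{j,\mathrm{red}}^{(m)}$ is block diagonal and only the diagonal blocks of the candidate contribute. Hence the derivative constraints transfer, and the same identity with $\widetilde G\otimes\rho_{\mathrm{red}}^{(m)}$ and Eq.~\eqref{eq:family-recovery} shows the objectives agree.
\item \emph{Off-diagonal blocks.} $\Psi_m^*$ kills the off-diagonal blocks between sectors, which matches the direct-sum form of the reduced variable.
\end{itemize}

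\textbf{Main obstacle.} The delicate point is bookkeeping rather than analysis: we must make sure the weights $f_\lambda$ appear exactly once. In the first direction they come from the trace over $I_{\cV_\lambda}$. In the second they cancel between $\tau_\lambda=I/f_\lambda$ and $\bar D_{j,\lambda}=f_\lambda D_{j,\lambda}$. We must also verify that the optima are attained, or else argue with infima. The latter suffices because the maps send feasible points to feasible points with equal objective values, so the two infima coincide regardless of attainment.
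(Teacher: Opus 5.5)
Your proposal is correct and follows the paper's proof. Feasible points are transported by $\id_{\cR}\otimes\Psi_m^*$ (full to reduced) and by $\id_{\cR}\otimes\Phi_m^*$ (reduced to full, which yields exactly $\bigoplus_\lambda X_\lambda\otimes I_{\cV_\lambda}$); cone preservation, unitality for completeness, and trace duality with the recovery identities for the derivative constraints and objective are then checked. The only difference is that the paper applies an explicit pinching across sectors after $\Psi_m^*$, and you correctly note this step is redundant because $\Psi_m^*$ already outputs block-diagonal operators.
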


\begin{proof}
The adjoints $\Phi_m^*$ and $\Psi_m^*$ are unital and completely positive. Acting on the physical factor, each map preserves $\cS_{\mathrm P}$. They also preserve $\cS_{\SEP}$, since each positive-product operator is mapped to another positive-product operator.

Let $X$ satisfy Eqs.~\eqref{eq:common-full-cone}--\eqref{eq:common-full-derivative}. Define
\[
 Y=(\id_{\cR}\otimes\Psi_m^*)(X).
\]
Unitality of $\Psi_m^*$ preserves the completeness constraint. To check the other constraints and the objective, trace duality and Eqs.~\eqref{eq:family-recovery}--\eqref{eq:tangent-recovery} give
\begin{align}
 \Tr[(Q_i\otimes D_{j,\mathrm{red}}^{(m)})Y]
 &=\Tr[(Q_i\otimes D_j^{(m)})X],\nonumber\\
 \Tr[(\widetilde G\otimes\rho_{\mathrm{red}}^{(m)})Y]
 &=\Tr[(\widetilde G\otimes\rho^{\otimes m})X].
\label{eq:primal-transfer-identities}
\end{align}
Thus $Y$ has the same derivative responses and objective as $X$. We then pinch $Y$ across the Schur--Weyl sectors of $\cK_m$. Pinching preserves both cones and the completeness constraint. It also preserves the derivative constraint and objective value, since the reduced state and tangent operators are block diagonal. The resulting operator has the form $\bigoplus_\lambda X_\lambda$ and satisfies Eqs.~\eqref{eq:common-red-cone}--\eqref{eq:common-red-derivative}. Therefore,
\[
 P_{\mathcal C,\mathrm{red}}^{(m)}[G]
 \leq P_{\mathcal C,\mathrm{full}}^{(m)}[G].
\]

Conversely, let $\{X_\lambda\}$ satisfy Eqs.~\eqref{eq:common-red-cone}--\eqref{eq:common-red-derivative}, and set $Y=\bigoplus_\lambda X_\lambda$. Define
\[
 X=(\id_{\cR}\otimes\Phi_m^*)(Y).
\]
The map $\Phi_m^*$ preserves the cone and completeness constraints. The exact state and tangent recovery identities in Eqs.~\eqref{eq:family-recovery}--\eqref{eq:tangent-recovery} also apply in this direction. By trace duality, $X$ satisfies the derivative constraint and has the same objective value as $Y$. Thus, $X$ satisfies Eqs.~\eqref{eq:common-full-cone}--\eqref{eq:common-full-derivative} and has the same objective value, giving
\[
 P_{\mathcal C,\mathrm{full}}^{(m)}[G]
 \leq P_{\mathcal C,\mathrm{red}}^{(m)}[G].
\]
The two inequalities establish Eq.~\eqref{eq:common-reduction-equality}.
\end{proof}

\subsection{Nagaoka--Hayashi specialization}
\label{subsec:nh-specialization}

Choosing $\mathcal C=\cS_{\mathrm P}$ in Theorem~\ref{thm:common-reduction} gives the exact Schur--Weyl reduction of the NH bound. We therefore write
\begin{equation}
 C_{\mathrm{NH,red}}^{(m)}[G]
 :=P_{\cS_{\mathrm P},\mathrm{red}}^{(m)}[G]
 =C_{\mathrm{NH,full}}^{(m)}[G].
\label{eq:NH-equality}
\end{equation}
We express the reduced NH SDP in block-matrix form using the auxiliary-coordinate order $0,1,\ldots,d$ of Eq.~\eqref{eq:auxiliary-data}. For each $\lambda$-sector, define
\[
 Y_\lambda
 :=\begin{pmatrix}
 I_{\cU_\lambda}&\mathbf X_\lambda^T\\
 \mathbf X_\lambda&\mathbb L_\lambda
 \end{pmatrix},
 \qquad
 \mathbf X_\lambda:=(X_{1,\lambda},\ldots,X_{d,\lambda})^T,
 \qquad
 \mathbb L_\lambda=[L_{ij,\lambda}]_{i,j=1}^d.
\]
Here $X_{i,\lambda}\in\Herm(\cU_\lambda)$ represent first estimator moments, while $L_{ij,\lambda}=L_{ji,\lambda}\in\Herm(\cU_\lambda)$ serve as relaxed second moment variables.
The zero mean condition can be imposed without changing the optimal NH value. To see this, let
\[
 \mu_i:=\sum_{\lambda}\Tr[\bar R_\lambda X_{i,\lambda}].
\]
Recenter the variables in every sector by
\begin{align*}
 X_{i,\lambda}'&:=X_{i,\lambda}-\mu_iI_{\cU_\lambda},\\
 L_{ij,\lambda}'&:=L_{ij,\lambda}-\mu_iX_{j,\lambda}
                    -\mu_jX_{i,\lambda}+\mu_i\mu_jI_{\cU_\lambda}.
\end{align*}
With the $0,1,\ldots,d$ block order, this is a congruence by a block-lower-triangular matrix, whose upper-left block is $I_{\cU_\lambda}$ and whose lower-left blocks are $-\mu_i I_{\cU_\lambda}$. Thus $Y_\lambda\succeq0$ implies $Y_\lambda'\succeq0$. The derivative constraints are summed over sectors. In fact,
\begin{align*}
 \sum_\lambda\Tr[\bar D_{j,\lambda}X_{i,\lambda}']
 &=\sum_\lambda\Tr[\bar D_{j,\lambda}X_{i,\lambda}]
   -\mu_i\sum_\lambda\Tr\bar D_{j,\lambda}\\
 &=\delta_{ij},
\end{align*}
because $\sum_\lambda\Tr\bar D_{j,\lambda}=\Tr D_j^{(m)}=0$; the individual sector traces need not vanish. Moreover, $\sum_\lambda\Tr\bar R_\lambda=1$ gives $\sum_\lambda\Tr[\bar R_\lambda X_{i,\lambda}']=0$. Substitution in the objective shows that centering lowers its value by $\mu^TG\mu$. Hence a feasible uncentered point can be replaced by a centered one without increasing the objective, and the NH SDP can be written as follows~\cite{Nagaoka2005,Conlon2021,HayashiOuyang2023}.
\begin{equation}
 C_{\mathrm{NH,red}}^{(m)}[G]
 =\min_{\{X_{i,\lambda},L_{ij,\lambda}\}}
 \sum_{\lambda\in\Ymn}\sum_{i,j=1}^d
 G_{ij}\Tr[\bar R_\lambda L_{ij,\lambda}]
\label{eq:NH-reduced}
\end{equation}
subject to
\begin{align}
 Y_\lambda&\succeq0\quad\forall\lambda,
 \label{eq:NH-red-PSD}\\
 \sum_\lambda\Tr[\bar R_\lambda X_{i,\lambda}]&=0,
 \label{eq:NH-red-centering}\\
 \sum_\lambda\Tr[\bar D_{j,\lambda}X_{i,\lambda}]&=\delta_{ij}.
 \label{eq:NH-red-LU}
\end{align}
This is one SDP, rather than a collection of independently solved block problems: Eq.~\eqref{eq:NH-red-PSD} imposes blockwise positivity, while Eqs.~\eqref{eq:NH-red-centering}--\eqref{eq:NH-red-LU} couple the sectors.

\subsection{Tight bound: reduced primal and dual}
\label{subsec:tight-specialization}

Choosing $\mathcal C=\cS_{\SEP}$ in Eq.~\eqref{eq:common-red-primal} gives the reduced conic program for the exact locally unbiased precision value. We denote its optimal value by $P_{\mathrm{red}}^{(m)}:=P_{\cS_{\SEP},\mathrm{red}}^{(m)}[G]$.

We derive the corresponding dual bound directly from the primal constraints. Choose a real matrix $a=(a_{ij})\in\mathbb R^{d\times d}$ and, for each $\lambda$, a Hermitian operator $\bar S_\lambda\in\Herm(\cU_\lambda)$. Define
\begin{equation}
 \bar\Gamma_\lambda(a,\bar S_\lambda)
 :=\widetilde G\otimes\bar R_\lambda
 -\sum_{i,j=1}^da_{ij}Q_i\otimes\bar D_{j,\lambda}
 -|0\rangle\langle0|\otimes\bar S_\lambda.
\label{eq:bar-Gamma}
\end{equation}
For every family $\{X_\lambda\}$ satisfying the primal equality constraints in Eqs.~\eqref{eq:common-red-completeness}--\eqref{eq:common-red-derivative}, this definition gives
\begin{align}
 &\sum_\lambda\Tr[(\widetilde G\otimes\bar R_\lambda)X_\lambda]
 =\Tr a+\sum_\lambda\Tr\bar S_\lambda\nonumber\\
 &\qquad+\sum_\lambda\Tr[\bar\Gamma_\lambda(a,\bar S_\lambda)X_\lambda].
\label{eq:red-tight-bound-identity}
\end{align}
Indeed, the derivative constraints contribute $\sum_{i,j}a_{ij}\delta_{ij}=\Tr a$, while each completeness constraint contributes $\Tr\bar S_\lambda$. Here $\langle w|Q_i|w\rangle=w_0w_i$ by the definition of $Q_i$ in Eq.~\eqref{eq:auxiliary-data}.

If $\langle w|\bar\Gamma_\lambda(a,\bar S_\lambda)|w\rangle\succeq0$ for every real $w\in\cR$, then $\Tr[\bar\Gamma_\lambda(a,\bar S_\lambda)X_\lambda]\ge0$ for every separable $X_\lambda$: each such $X_\lambda$ is a finite sum of operators $|w\rangle\langle w|\otimes B$ with $B\succeq0$. Consequently, Eq.~\eqref{eq:red-tight-bound-identity} shows that $\Tr a+\sum_\lambda\Tr\bar S_\lambda$ is a lower bound on every feasible primal objective value. Maximizing this lower bound gives the reduced dual program
\begin{equation}
 D_{\mathrm{red}}^{(m)}
 :=\max_{a,\{\bar S_\lambda\}}
 \left(\Tr a+\sum_\lambda\Tr\bar S_\lambda\right)
\label{eq:red-dual}
\end{equation}
subject to
\begin{equation}
 \langle w|\bar\Gamma_\lambda(a,\bar S_\lambda)|w\rangle\succeq0
 \quad\text{for every real }w\in\cR\text{ and every }\lambda.
\label{eq:red-dual-constraint}
\end{equation}
Equation~\eqref{eq:red-dual-constraint} is exactly the block positivity condition for the dual cone $\cS_{\SEP}^*(\cR:\cU_\lambda)$. The derivative constraints involve a sum over all sectors, so the matrix $a$ is shared by them. In contrast, each sector has its own completeness constraint and operator $\bar S_\lambda$.

\begin{corollary}[Primal--dual equality for the reduced tight program]
\label{cor:tight-reduction}
Under the standing assumptions of Sec.~\ref{subsec:local-model},
\begin{equation}
 C^{(m)}[G]=P_{\mathrm{red}}^{(m)}=D_{\mathrm{red}}^{(m)}.
\label{eq:red-equality}
\end{equation}
\end{corollary}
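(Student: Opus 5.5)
The first equality is immediate from results already stated. By Eq.~\eqref{eq:cone-specializations}, $C^{(m)}[G]=P_{\cS_{\SEP},\mathrm{full}}^{(m)}[G]$. Theorem~\ref{thm:common-reduction} with $\mathcal C=\cS_{\SEP}$ then gives $P_{\cS_{\SEP},\mathrm{full}}^{(m)}[G]=P_{\mathrm{red}}^{(m)}$. Weak duality $D_{\mathrm{red}}^{(m)}\le P_{\mathrm{red}}^{(m)}$ is already contained in the identity~\eqref{eq:red-tight-bound-identity}. The only remaining content is therefore the absence of a duality gap, $D_{\mathrm{red}}^{(m)}\ge P_{\mathrm{red}}^{(m)}$, together with the fact that both optima are attained, so that ``min'' and ``max'' are justified.

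For strong duality I would use standard conic duality for the pair of closed convex cones $\prod_\lambda\cS_{\SEP}(\cR:\cU_\lambda)$ and its dual $\prod_\lambda\cS_{\SEP}^*(\cR:\cU_\lambda)$. Each $\cS_{\SEP}(\cR:\cU_\lambda)$ is closed, since it is finitely generated by a compact base in finite dimension, and its dual is described by Eq.~\eqref{eq:red-dual-constraint}. The cleanest route is a Slater point on the dual side. Take $a=0$ and $\bar S_\lambda=-I_{\cU_\lambda}$. Then
\[
\bar\Gamma_\lambda=\widetilde G\otimes\bar R_\lambda+|0\rangle\langle0|\otimes I_{\cU_\lambda}.
\]
For real $w\neq0$ we have $\langle w|\bar\Gamma_\lambda|w\rangle=(w^T_{1:d}Gw_{1:d})\bar R_\lambda+w_0^2 I\succ0$, because $G\succ0$ and $\bar R_\lambda\succ0$. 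The latter holds since $\rho\succ0$ makes $\rho^{\otimes m}$, and hence each block $R_\lambda$, full rank, and $f_\lambda>0$. By compactness of the unit sphere in $\cR$, this dual point lies in the interior of the dual cone. Strict dual feasibility gives zero gap and attainment of the primal minimum, provided the primal value is finite.

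Primal feasibility, and hence finiteness since the objective is nonnegative, follows from a locally unbiased strategy. Because $D_1^{(m)},\dots,D_d^{(m)}$ are linearly independent, the symmetric logarithmic derivatives yield a locally unbiased estimator on the full space. Equivalently, one can use the one-copy construction of a finite POVM with $\Tr[D_j\Pi_x]$ spanning, whose existence is guaranteed by linear independence. Transporting it via Proposition~\ref{prop:stat-equivalence} gives a feasible reduced primal point.

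I expect dual attainment to be the main obstacle, since the dual program is written as a max. For this I would check primal strict feasibility, that is, a reduced primal point in the relative interior of the cone satisfying the affine constraints. One way is to mix any feasible separable point with a small multiple of a strictly positive separable operator while adjusting the $a$-direction. An alternative is to bound the dual feasible set on the relevant superlevel set. Testing the constraint~\eqref{eq:red-dual-constraint} at $w=|0\rangle$ gives $\bar S_\lambda\preceq 0$ in the $|0\rangle$-direction. Testing at $w=|0\rangle+t|i\rangle$, together with linear independence of the $\bar D_{j,\lambda}$ across sectors, bounds $a$. Attainment then follows from compactness. In the write-up I would use this bounding argument directly, because it is concrete.
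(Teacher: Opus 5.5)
Your proof is correct, but it reaches the dual equality by a different route from the paper. The paper never proves strong duality for the reduced program directly. In Appendix~\ref{app:tight-cone} it transfers dual certificates instead. A full-space feasible pair $(a,S)$ is sent to $(a,\Phi_m(S))$, and a reduced pair $(a,\{\bar S_\lambda\})$ is sent to $(a,\Psi_m(\bigoplus_\lambda\bar S_\lambda))$. This uses the state and tangent recovery identities, the fact that a completely positive map on the physical factor preserves the dual separable cone, and trace preservation to keep the objective fixed. Hence the full and reduced dual values coincide, and equality with $C^{(m)}[G]$ is inherited from the cited Hayashi--Ouyang full-space strong-duality theorem.

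You instead apply finite-dimensional conic duality to the reduced program itself. Your point $a=0$, $\bar S_\lambda=-I$ is the continuous analogue of the Slater point the paper uses for the finite-ray SDP in Appendix~\ref{app:finite-ray}. Compactness of the unit sphere gives interiority in $\prod_\lambda\cS_{\SEP}^*(\cR:\cU_\lambda)$, and your compactness argument on a dual superlevel set gives attainment of the max. Your route is self-contained and does not depend on the external theorem; run on the full program, it would even reprove that theorem. The paper's route mirrors the primal transfer of Theorem~\ref{thm:common-reduction} and gives an explicit, objective-preserving correspondence between full-space and reduced dual certificates with the same multiplier $a$. However, it relies on the citation for strong duality and attainment.

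A few details need tightening in a write-up.
\begin{enumerate}
\item The cone $\cS_{\SEP}(\cR:\cU_\lambda)$ is not finitely generated. It is closed because it is the conic hull of the compact set of unit-trace products $|w\rangle\langle w|\otimes|u\rangle\langle u|$, whose convex hull avoids the origin.
\item The SLDs alone do not give a POVM. A clean feasibility witness is a one-copy locally unbiased strategy applied to the first copy, since $\Tr[D_j^{(m)}(\Pi_x\otimes I)]=\Tr[D_j\Pi_x]$. Such a strategy exists: expand operators $X_i$ with $\Tr[D_jX_i]=\delta_{ij}$ in an informationally complete POVM.
\item In the bounding step, the superlevel condition and $\bar S_\lambda\preceq0$ only give $\|\bar S_\lambda\|\le\Tr a-c$. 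Testing at $|0\rangle\pm t|i\rangle$ then yields $\|A_{i,\lambda}\|\le tG_{ii}\|\bar R_\lambda\|+(\Tr a-c)/t$, where $A_{i,\lambda}=\sum_ja_{ij}\bar D_{j,\lambda}$. You must take $t$ large enough to absorb the dependence on $a$ before invoking linear independence of the families $(\bar D_{j,\lambda})_\lambda$. With that, the argument closes.
\end{enumerate}
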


Equation~\eqref{eq:cone-specializations} identifies the full-space separable conic program with the tight precision value $C^{(m)}[G]$. The exact Schur--Weyl reduction in Eq.~\eqref{eq:common-reduction-equality} shows that the reduced primal program has the same optimal value. Equality with the dual optimal value is proved in Appendix~\ref{app:tight-cone}. 

\subsection{Reduction of optimization size}
\label{subsec:size-reduction}

The full Hermitian operator algebra on $\cH^{\otimes m}$ has real dimension $n^{2m}$. By contrast, the multiplicity-free block algebra has dimension
\begin{equation}
 N_{\mathrm{op}}(m,n)
 =\sum_{\lambda\in\Ymn}(\dim\cU_\lambda)^2
 =\binom{m+n^2-1}{n^2-1}.
\label{eq:commutant-dim}
\end{equation}
The blockwise formula in Eq.~\eqref{eq:commutant-dim} establishes the expression for $N_{\mathrm{op}}(m,n)$ announced in Eq.~\eqref{eq:intro-size}. This is the dimension of the permutation commutant in Schur--Weyl duality~\cite{Sagan2001}. The permutation-invariant operator algebra is isomorphic to $\bigoplus_\lambda\End(\cU_\lambda)$. As a vector space, it is also identified with $\Sym^m(\End\cH)$, which gives the dimension formula above.

\begin{corollary}[Copy number scaling]
\label{cor:copy-number-scaling}
For fixed one-copy dimension $n$,
\begin{equation}
 N_{\mathrm{op}}(m,n)=\Theta(m^{n^2-1}),
\end{equation}
whereas the corresponding full operator-algebra dimension is $n^{2m}$.
\end{corollary}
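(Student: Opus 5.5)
The proof takes the closed form $N_{\mathrm{op}}(m,n)=\binom{m+n^2-1}{n^2-1}$ from Eq.~\eqref{eq:commutant-dim} as given and reduces the corollary to an elementary estimate of a binomial coefficient with fixed lower index. Set $k:=n^2-1$, which is constant because $n$ is fixed. Then
\[
 N_{\mathrm{op}}(m,n)=\binom{m+k}{k}=\frac{(m+1)(m+2)\cdots(m+k)}{k!},
\]
a polynomial in $m$ of degree exactly $k$ with leading coefficient $1/k!>0$.

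The two-sided $\Theta$ bound follows from bounding each factor. For $m\ge1$ and $1\le j\le k$ we have $m\le m+j\le(1+k)m$. Taking the product over $j$ gives
\[
 \frac{m^{k}}{k!}\le N_{\mathrm{op}}(m,n)\le\frac{(k+1)^{k}}{k!}\,m^{k}.
\]
This is $N_{\mathrm{op}}(m,n)=\Theta(m^{n^2-1})$, with constants depending only on $n$. Equivalently, $N_{\mathrm{op}}(m,n)/m^{k}\to1/k!$ as $m\to\infty$.

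The statement about the full space is the count already given in Sec.~\ref{subsec:size-reduction}: $\Herm(\cH^{\otimes m})$ has real dimension $(n^m)^2=n^{2m}$. To make the contrast explicit, one can note that for $n\ge2$ this grows exponentially in $m$, whereas $N_{\mathrm{op}}$ is polynomial. Hence $N_{\mathrm{op}}/n^{2m}\to0$.

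There is no real obstacle once Eq.~\eqref{eq:commutant-dim} is assumed. The only point needing a remark is the degenerate case $n=1$, where $k=0$ and both dimensions equal $1$; this is consistent with $\Theta(m^0)$.
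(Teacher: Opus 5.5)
Your proposal is correct and follows the paper's own argument: the paper gives no separate proof and reads the corollary directly off the closed form $\binom{m+n^2-1}{n^2-1}$ in Eq.~\eqref{eq:commutant-dim}, a degree-$(n^2-1)$ polynomial in $m$, and sets it against the $n^{2m}$ real dimension of the full Hermitian operator space. Your explicit bounds $m^{k}/k!\le N_{\mathrm{op}}(m,n)\le (k+1)^{k}m^{k}/k!$ with $k=n^2-1$, and your check of the degenerate case $n=1$, just make the $\Theta$ constants concrete.
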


Before Schur--Weyl reduction, the NH SDP imposes a positive-semidefinite constraint on a block matrix of size $(d+1)n^m$. After reduction, each $\lambda$-sector has a positive-semidefinite constraint on a block matrix of size $(d+1)\dim\cU_\lambda$. For a qubit model, $\max_\lambda\dim\cU_\lambda=m+1$, so the largest NH block changes from $(d+1)2^m$ to $(d+1)(m+1)$. In the finite-ray tight program introduced below, each sampled direction introduces one matrix variable on each block $\cU_\lambda$. Equation~\eqref{eq:commutant-dim} therefore gives the total real dimension of these physical matrix variables per sampled direction. The matrix sizes and variable dimensions above describe the optimization formulations. They do not establish bounds on solver runtime.

\section{Quadratic finite-ray approximation of the reduced tight bound}
\label{sec:quadratic}

After Schur--Weyl reduction, the tight dual still requires Eq.~\eqref{eq:red-dual-constraint} for every real direction $w\in\cR$. The continuum of constraints is the remaining numerical difficulty. A finite-ray restriction gives an SDP whose primal value is an upper bound, but the sampled dual constraints alone do not certify a lower bound for the continuous problem. We therefore first construct the finite-ray program, then use the quadratic dependence of its dual constraint on affine estimator coordinates to restore positivity at unsampled rays. The size of this correction is controlled by the mesh covering radius. Throughout this section we take $G=I_d$; the extension to a fixed $G\succ0$ is given in Appendix~\ref{app:general-G}. We use $\|\cdot\|$ for the Euclidean vector norm or operator norm, and $\|\cdot\|_1$ for the trace norm.

\subsection{Finite-ray approximation}
\label{subsec:finite-ray}

Let
\[
 W=(|w_s\rangle)_{s=1}^{N_W},\qquad |w_s\rangle\in\cR,
 \qquad \langle w_s|w_s\rangle=1,
\]
be an indexed family of unit vectors representing sampled projective directions. We use $w\in W$ to mean an entry of this family, so the covering-radius minimum is taken over its indices. Vectors $w$ and $-w$ represent the same rank-one projector $|w\rangle\langle w|$; repeated representatives are permitted and counted in $N_W$. Write $|w_s\rangle=\sum_{i=0}^d w_{s,i}|i\rangle$, so $w_{s,0}=\langle0|w_s\rangle$. Restricting the separable primal variable to
\[
 X_\lambda
 =\sum_{s=1}^{N_W}|w_s\rangle\langle w_s|\otimes X_{s,\lambda},
 \qquad X_{s,\lambda}\succeq0,
\]
gives the finite-ray SDP
\begin{align}
 U_W^{(m)}:=\min_{\{X_{s,\lambda}\}}\;&
 \sum_{\lambda,s}\langle w_s|\widetilde G|w_s\rangle
 \Tr[\bar R_\lambda X_{s,\lambda}],
\label{eq:finite-ray-primal}\\
\text{subject to }\;&
 \sum_s|w_{s,0}|^2X_{s,\lambda}=I_{\cU_\lambda}
 \quad\forall\lambda,
\label{eq:finite-completeness}\\
&\sum_{\lambda,s}\langle w_s|Q_i|w_s\rangle
 \Tr[\bar D_{j,\lambda}X_{s,\lambda}]=\delta_{ij}
 \quad i,j=1,\ldots,d.
\label{eq:finite-LU}
\end{align}
Equations~\eqref{eq:finite-ray-primal}--\eqref{eq:finite-LU} are the finite-ray counterparts of the reduced tight constraints in Sec.~\ref{subsec:tight-specialization}. Because $W$ contains finitely many directions, the resulting problem has finitely many matrix variables $X_{s,\lambda}\succeq0$. Its objective, completeness constraints, and derivative constraints are linear in the matrices $X_{s,\lambda}$. Hence, the finite-ray problem is an ordinary SDP. Not every choice of $W$ admits positive-semidefinite operators $X_{s,\lambda}$ satisfying Eqs.~\eqref{eq:finite-completeness} and~\eqref{eq:finite-LU}. Whenever $U_W^{(m)}$ is used below, such operators are assumed to exist. Any such solution reconstructs a locally unbiased measurement strategy after recentering the estimator values, so
\begin{equation}
 C^{(m)}[I_d]\le U_W^{(m)}.
\label{eq:upper-bound}
\end{equation}
The reconstruction, including rays with $w_{s,0}=0$, is given in Appendix~\ref{app:finite-ray}.

The dual of this finite SDP relaxes the continuous constraint~\eqref{eq:red-dual-constraint} by requiring positivity only on the sampled rays:
\begin{equation}
 \langle w_s|\bar\Gamma_\lambda(a,\bar S_\lambda)|w_s\rangle\succeq0,
 \qquad s=1,\ldots,N_W.
\label{eq:finite-ray-dual-rays}
\end{equation}
The sampled dual objective is not in general a lower bound on the original continuous tight problem: its constraints have been checked only at the rays in $W$. For any $(a,\bar S)$ satisfying the sampled dual constraints in Eq.~\eqref{eq:finite-ray-dual-rays}, define
\begin{equation}
 J_W^{(m)}(a,\bar S)
 :=\Tr a+\sum_\lambda\Tr\bar S_\lambda,
 \qquad
 \varepsilon_W:=U_W^{(m)}-J_W^{(m)}(a,\bar S)\ge0.
\label{eq:finite-ray-dual-gap}
\end{equation}
When the finite-ray primal constraints admit a solution, the primal and dual have the same optimal value $U_W^{(m)}$; Appendix~\ref{app:finite-ray} records the finite-dimensional strong-duality argument. Consequently, for every $\varepsilon>0$ there exist $(a,\bar S)$ satisfying Eq.~\eqref{eq:finite-ray-dual-rays} with $\varepsilon_W<\varepsilon$. Such pairs may violate positivity at unsampled rays. Figure~\ref{fig:projective-mesh-illustration} illustrates the distinction between the continuous family of projective directions and its finite sampling. The next step is to restore positivity for all projective directions.

\begin{figure}[H]
\centering
\includegraphics[width=0.88\linewidth]{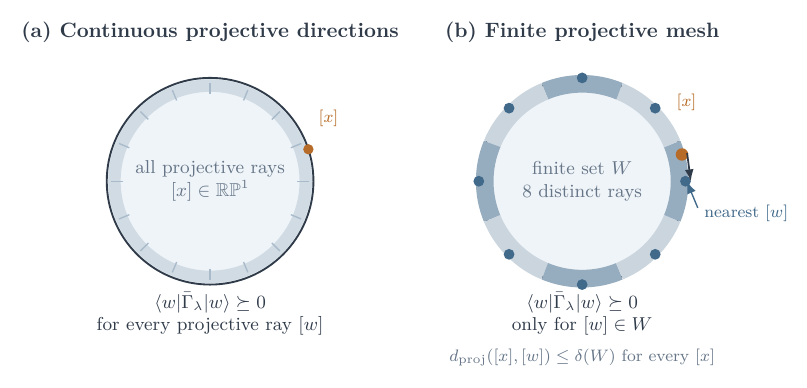}
\caption{Sampling a continuum of projective directions with a finite family of rays. For clarity the drawing uses the one-dimensional real projective space $\mathbb{RP}^1$, represented by a circle (antipodal vectors identify the same ray). The shaded arcs around the eight distinct sampled rays cover the entire circle. In general dimension, the covering radius used in our bounds is the projective trace-norm distance $\delta(W)$ in Eq.~\eqref{eq:projective-radius}, not a Euclidean radius in affine coordinates. Projective coverage alone does not ensure that the dual constraint holds at unsampled directions.}
\label{fig:projective-mesh-illustration}
\end{figure}

\subsection{Quadratic structure of the dual constraint}
\label{subsec:quadratic-geometry}

For a ray with $w_0\neq0$, define the affine coordinates $y_i=w_i/w_0$. These ratios give the estimator values reconstructed in Appendix~\ref{app:finite-ray}. Fix $(a,\{\bar S_\lambda\})$ satisfying the sampled dual constraints in Eq.~\eqref{eq:finite-ray-dual-rays}. Because $\rho\succ0$, Eq.~\eqref{eq:block-state} implies $\bar R_\lambda\succ0$ on every reduced block, so the following inverse square roots are defined:
\[
 A_{i,\lambda}:=\sum_{j=1}^da_{ij}\bar D_{j,\lambda},
 \qquad
 B_{i,\lambda}:=\bar R_\lambda^{-1/2}A_{i,\lambda}\bar R_\lambda^{-1/2},
 \qquad
 \widetilde S_\lambda:=\bar R_\lambda^{-1/2}\bar S_\lambda\bar R_\lambda^{-1/2}.
\]
The continuous dual constraint in Eq.~\eqref{eq:red-dual-constraint} requires the operator $\langle w|\bar\Gamma_\lambda(a,\bar S_\lambda)|w\rangle$ to be positive semidefinite for every real $w$ and every sector $\lambda$. For $w_0\neq0$, we divide this operator by $w_0^2$. This gives
\[
 \frac{\langle w|\bar\Gamma_\lambda(a,\bar S_\lambda)|w\rangle}{w_0^2}
 =\|y\|^2\bar R_\lambda-\sum_{i=1}^d y_i A_{i,\lambda}-\bar S_\lambda.
\]
Applying a congruence transformation with $\bar R_\lambda^{-1/2}$ changes the coefficient of $\|y\|^2$ from $\bar R_\lambda$ to the identity operator. We obtain
\begin{equation}
 \widetilde M_\lambda(y)
 :=\|y\|^2I-\sum_{i=1}^dy_iB_{i,\lambda}-\widetilde S_\lambda.
\label{eq:Mtilde}
\end{equation}
The full continuous dual requires $\widetilde M_\lambda(y)\succeq0$ for every $y\in\mathbb R^d$, whereas the finite-ray dual checks this condition only at the affine points induced by the sampled rays.
For a unit vector $u\in\cU_\lambda$, define
\[
 b_{\lambda,u}
 :=\bigl(\langle u|B_{1,\lambda}|u\rangle,\ldots,\langle u|B_{d,\lambda}|u\rangle\bigr)^T,
 \qquad
 y_{\lambda,u}:=\frac12 b_{\lambda,u}.
\]
\Needspace{7\baselineskip}
Consider the scalar function
\[
 q_{\lambda,u}(y):=\langle u|\widetilde M_\lambda(y)|u\rangle.
\]
Completing the square shows that $y_{\lambda,u}$ is its unique minimizer. Moreover,
\begin{equation}
 q_{\lambda,u}(z)-q_{\lambda,u}(y_{\lambda,u})
 =\|z-y_{\lambda,u}\|^2.
\label{eq:quadratic-identity}
\end{equation}
Equation~\eqref{eq:quadratic-identity} is exact and is the source of the quadratic error scaling; no Taylor expansion is involved. Figure~\ref{fig:quadratic-repair-geom} illustrates the exact quadratic identity in one affine dimension.

\begin{figure}[H]
\centering
\includegraphics[width=0.82\linewidth]{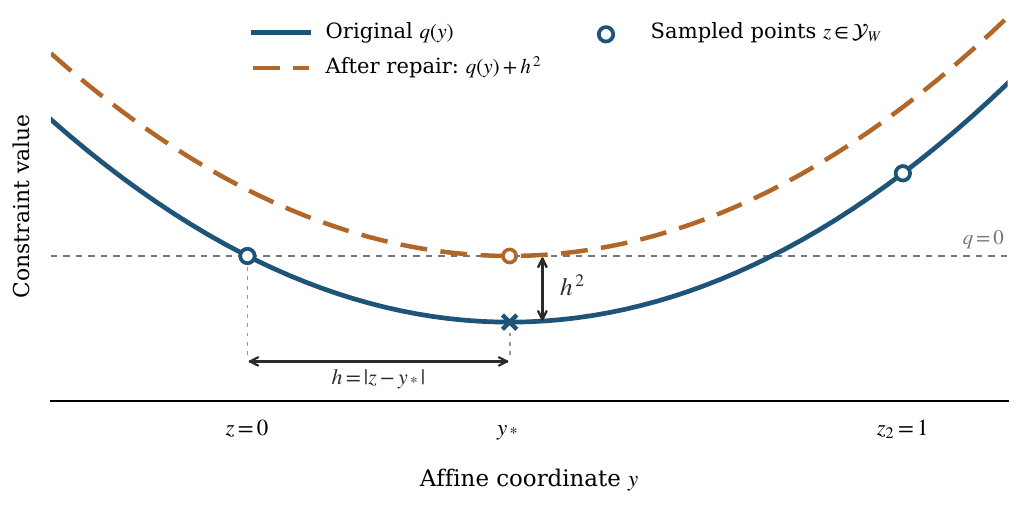}
\caption{Quadratic dual repair for a fixed sector $\lambda$, unit vector $u$, and one affine coordinate $y$. The illustrative function $q(y)=(y-0.4)^2-0.16$ is nonnegative at the sampled points $z=0,1$ but has minimum $-0.16$ between them. With $h=|0-0.4|=0.4$, the exact identity $q(z)-q(y_*)=h^2$ gives the conservative upward shift $q(y)+h^2\ge0$. The full matrix-valued repair uses a uniform $h_\lambda$ over all unit vectors $u$; its relation to the projective mesh radius follows from Eq.~\eqref{eq:eta} under the stated conditions.}
\label{fig:quadratic-repair-geom}
\end{figure}

\subsection{Dual repair and two-sided bounds}
\label{subsec:dual-repair}

Let
\[
 \mathcal Y_W:=\left\{\left(\frac{w_{s,1}}{w_{s,0}},\ldots,\frac{w_{s,d}}{w_{s,0}}\right): w_s\in W,\ w_{s,0}\ne0\right\}
\]
be the affine mesh induced by rays with nonzero zeroth component, and define
\[
 K_\lambda(a):=\{y_{\lambda,u}:\|u\|=1\},
 \qquad
 h_\lambda^2
 :=\sup_{y\in K_\lambda(a)}\min_{z\in\mathcal Y_W}\|z-y\|^2.
\]
Thus $h_\lambda$ is the worst-case Euclidean distance from a sector minimizer to the sampled affine mesh. Fix a unit vector $u$, and choose a nearest sampled point $z\in\mathcal Y_W$ to $y_{\lambda,u}$. The sampled constraint in Eq.~\eqref{eq:finite-ray-dual-rays} gives $q_{\lambda,u}(z)\ge0$. By the exact identity~\eqref{eq:quadratic-identity},
\begin{equation}
 q_{\lambda,u}(y_{\lambda,u})
 =q_{\lambda,u}(z)-\|z-y_{\lambda,u}\|^2
 \ge-h_\lambda^2.
\label{eq:sector-repair-minimum}
\end{equation}
Because $y_{\lambda,u}$ minimizes $q_{\lambda,u}$, this inequality holds with $q_{\lambda,u}(y)$ on the left for every $y$. Varying $u$ therefore gives
\[
 \widetilde M_\lambda(y)\succeq-h_\lambda^2I
 \qquad\forall y\in\mathbb R^d.
\]
Hence
\begin{equation}
 \bar S_\lambda'
 :=\bar S_\lambda-h_\lambda^2\bar R_\lambda
\label{eq:sector-repair}
\end{equation}
satisfies the full continuous dual constraint in Eq.~\eqref{eq:red-dual-constraint}. This subtraction compensates for the deficit allowed by the mesh estimate. The affine argument covers directions with $w_0\neq0$. For $w_0=0$, the original dual constraint reduces to $\langle w|\widetilde G|w\rangle\bar R_\lambda\succeq0$ and does not depend on $\bar S_\lambda$. Thus the repaired variables satisfy the dual constraint at every projective direction. If $p_\lambda:=\Tr\bar R_\lambda$, then $\sum_\lambda p_\lambda=1$ and
\begin{equation}
 J_W^{(m)}(a,\bar S)-\sum_\lambda p_\lambda h_\lambda^2
 \le C^{(m)}[I_d]\le U_W^{(m)}.
\label{eq:sector-two-sided}
\end{equation}
A global bound follows by controlling all sector minimizers with the projective mesh radius.

\subsection{Quadratic accuracy and mesh complexity}
\label{subsec:mesh-complexity}

Define the projective covering radius, which measures the worst-case distance from any projective direction to its nearest sampled ray,
\begin{equation}
 \delta(W)
 :=\max_{\|x\|=1}\min_{w\in W}
 \bigl\||x\rangle\langle x|-|w\rangle\langle w|\bigr\|_1.
\label{eq:projective-radius}
\end{equation}
Because the radius in Eq.~\eqref{eq:projective-radius} is written in terms of rank-one projectors, it is unchanged by $x\mapsto-x$ or $w\mapsto-w$. For $C\ge0$ and $0\le\delta<2/\sqrt{1+C^2}$, define
\begin{equation}
 \eta(C,\delta)
 :=\frac{(1+C^2)\delta}{2\sqrt{1-\delta^2/4}-C\delta}.
\label{eq:eta}
\end{equation}
The restriction ensures that the denominator is positive. For fixed $C$ and $\delta\to0$, we have $\eta(C,\delta)=\frac{1+C^2}{2}\delta+O(\delta^2)$. Lemma~\ref{lem:projective-affine} shows that $\eta(C,\delta)$ bounds the affine-coordinate error for points of norm at most $C$ when the projective mesh has radius at most $\delta$. Let $(a,\{\bar S_\lambda\})$ satisfy the sampled dual constraints in Eq.~\eqref{eq:finite-ray-dual-rays}. Define the full-space normalized operators
\[
 B_i^{\mathrm{full}}(a)
 :=(\rho^{\otimes m})^{-1/2}
 \left(\sum_{j=1}^da_{ij}D_j^{(m)}\right)
 (\rho^{\otimes m})^{-1/2}.
\]
The Hayashi--Ouyang compact-search radius~\cite{HayashiOuyang2023} is
\begin{equation}
 C_2^{\mathrm{HO}}(a)
 :=\frac12\left\|\sum_{i=1}^d\bigl(B_i^{\mathrm{full}}(a)\bigr)^2\right\|^{1/2}.
\label{eq:C2-HO}
\end{equation}
The radius $C_2^{\mathrm{HO}}(a)$ bounds the norms of the affine minimizers $y_{\lambda,u}$. The normalized operators $B_i^{\mathrm{full}}(a)$ decompose into Schur--Weyl blocks:
\[
 B_i^{\mathrm{full}}(a)\cong\bigoplus_\lambda B_{i,\lambda}\otimes I_{\cV_\lambda}.
\]
This identity is proved in Eq.~\eqref{eq:B-full-block} of Appendix~\ref{app:identity-repair}, using the state and tangent block decompositions in Eqs.~\eqref{eq:block-state}--\eqref{eq:block-tangent} and the multiplicity weights in Eq.~\eqref{eq:weighted-blocks}. Squaring and summing these block-diagonal operators gives an explicit reduced expression for the compact-search radius:
\begin{equation}
 C_2^{\mathrm{HO}}(a)
 =\frac12\max_{\lambda\in\Ymn}
 \left\|\sum_{i=1}^d B_{i,\lambda}^2\right\|^{1/2}.
\label{eq:C2-HO-reduced}
\end{equation}
Thus $C_2^{\mathrm{HO}}(a)$ can be evaluated on the reduced blocks without constructing the full $m$-copy matrices. Applying the Cauchy--Schwarz inequality, as shown in the same appendix, gives $\|b_{\lambda,u}\|\le 2C_2^{\mathrm{HO}}(a)$. Since $y_{\lambda,u}=\frac12 b_{\lambda,u}$, we obtain
\begin{equation}
 \|y_{\lambda,u}\|\le C_2^{\mathrm{HO}}(a).
\label{eq:identity-radius}
\end{equation}

\begin{theorem}[Quadratic finite-ray bounds]
\label{thm:global-mesh-bound}
Let $W$ be a mesh for which the finite-ray primal constraints admit a solution. Let $(a,\{\bar S_\lambda\})$ satisfy the sampled dual constraints in Eq.~\eqref{eq:finite-ray-dual-rays} for the identity-weight problem, and define $\varepsilon_W$ by Eq.~\eqref{eq:finite-ray-dual-gap}. If
\[
 \delta(W)<\frac{2}{\sqrt{1+C_2^{\mathrm{HO}}(a)^2}},
\]
then
\begin{equation}
 0\le U_W^{(m)}-C^{(m)}[I_d]
 \le
 \varepsilon_W+\eta\!\left(C_2^{\mathrm{HO}}(a),\delta(W)\right)^2.
\label{eq:global-error-bound}
\end{equation}
In particular,
\begin{equation}
 L_W^{(m)}(a,\bar S)
 :=J_W^{(m)}(a,\bar S)-\eta\!\left(C_2^{\mathrm{HO}}(a),\delta(W)\right)^2
\label{eq:L-global}
\end{equation}
satisfies
\begin{equation}
 L_W^{(m)}(a,\bar S)\le C^{(m)}[I_d]\le U_W^{(m)},
 \qquad
 U_W^{(m)}-L_W^{(m)}(a,\bar S)
 =\varepsilon_W+\eta\!\left(C_2^{\mathrm{HO}}(a),\delta(W)\right)^2.
\label{eq:global-enclosure-width}
\end{equation}
\end{theorem}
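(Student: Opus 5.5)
The plan is to combine the sector-wise repair bound~\eqref{eq:sector-two-sided} with a uniform estimate $h_\lambda\le\eta(C_2^{\mathrm{HO}}(a),\delta(W))$ for every $\lambda$. Once this estimate holds, $\sum_\lambda p_\lambda h_\lambda^2\le\eta^2\sum_\lambda p_\lambda=\eta^2$, because $\sum_\lambda p_\lambda=1$. Then~\eqref{eq:sector-two-sided} gives $L_W^{(m)}(a,\bar S)\le J_W^{(m)}(a,\bar S)-\sum_\lambda p_\lambda h_\lambda^2\le C^{(m)}[I_d]$. The upper bound $C^{(m)}[I_d]\le U_W^{(m)}$ is~\eqref{eq:upper-bound}. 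The enclosure width $U_W^{(m)}-L_W^{(m)}=\varepsilon_W+\eta^2$ is immediate from the definitions~\eqref{eq:finite-ray-dual-gap} and~\eqref{eq:L-global}. Finally, $U_W^{(m)}-C^{(m)}[I_d]\le U_W^{(m)}-L_W^{(m)}$ gives~\eqref{eq:global-error-bound}, and its left inequality is again~\eqref{eq:upper-bound}.

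The substantive step is the uniform estimate on $h_\lambda$, and I would route it through Lemma~\ref{lem:projective-affine}. Fix $\lambda$ and a unit $u$. By~\eqref{eq:identity-radius}, the minimizer $y:=y_{\lambda,u}$ satisfies $\|y\|\le C:=C_2^{\mathrm{HO}}(a)$. Lift it to the unit ray $x=(|0\rangle+\sum_iy_i|i\rangle)/\sqrt{1+\|y\|^2}$. By definition of $\delta(W)$, some sampled $w\in W$ has $\||x\rangle\langle x|-|w\rangle\langle w|\|_1\le\delta:=\delta(W)$. The lemma, applied under the hypothesis $\delta<2/\sqrt{1+C^2}$, should give three things: $w_0\neq0$, so $w$ contributes a point $z=(w_i/w_0)_i\in\mathcal Y_W$; and $\|z-y\|\le\eta(C,\delta)$. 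Taking the supremum over $u$ gives $h_\lambda\le\eta(C,\delta)$ for every $\lambda$.

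If the lemma must be proved here, I would argue as follows. For unit vectors, the trace distance satisfies $\||x\rangle\langle x|-|w\rangle\langle w|\|_1=2\sqrt{1-\langle x|w\rangle^2}$. Choose the sign of $w$ so that $\langle x|w\rangle\ge\sqrt{1-\delta^2/4}$. Decompose $w=\alpha x+\beta e$ with $e\perp x$ a unit vector, $\alpha\ge\sqrt{1-\delta^2/4}$, and $|\beta|\le\delta/2$. Then $w_0=\alpha x_0+\beta e_0$ with $x_0=(1+\|y\|^2)^{-1/2}$ and $|e_0|\le\|y\|x_0$, since $e_0$ is orthogonal to $x$ within the span relevant to coordinate $0$. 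So $w_0\ge x_0(\alpha-|\beta|C)$, which is positive by the hypothesis on $\delta$.

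The affine difference $z-y$ equals $\beta(e_\perp-e_0y)/w_0$, where $e_\perp$ collects the coordinates $1,\ldots,d$ of $e$. A careful bound on $\|e_\perp-e_0y\|\le\sqrt{1+\|y\|^2}$ then gives
\[
\|z-y\|\le\frac{(1+C^2)\,\delta/2}{\sqrt{1-\delta^2/4}-C\delta/2}=\eta(C,\delta).
\]
I expect this last geometric estimate to be the main obstacle. The delicate points are getting the sharp constant in $|e_0|$ and in $\|e_\perp-e_0y\|$, and checking that monotonicity in $\|y\|\le C$ lets one replace $\|y\|$ by $C$ throughout.
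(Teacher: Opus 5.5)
Your proposal is correct and follows the paper's proof. The paper likewise places every sector minimizer in the ball of radius $C_2^{\mathrm{HO}}(a)$ via Eq.~\eqref{eq:identity-radius} and applies Lemma~\ref{lem:projective-affine} to get $h_\lambda\le\eta(C_2^{\mathrm{HO}}(a),\delta(W))$; it then substitutes into Eq.~\eqref{eq:sector-two-sided} using $\sum_\lambda p_\lambda=1$ and reads off the width from $J_W^{(m)}=U_W^{(m)}-\varepsilon_W$. Your optional sketch of the lemma is also sound: orthogonality gives $\langle e_\perp,y\rangle=-e_0$, hence $\|e_\perp-e_0y\|^2=1+e_0^2(1+\|y\|^2)\le1+\|y\|^2$, and the bound is monotone in $|\beta|$ and $\|y\|$, so you reach the same $\eta(C,\delta)$ by direct orthogonal decomposition rather than the paper's angle argument with the Gram-determinant identity.
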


\begin{proof}
By Eq.~\eqref{eq:identity-radius}, every sector minimizer lies in the ball of radius $C_2^{\mathrm{HO}}(a)$.

Lemma~\ref{lem:projective-affine} therefore gives
\[
 h_\lambda^2\le\eta\!\left(C_2^{\mathrm{HO}}(a),\delta(W)\right)^2
 \qquad\text{for every }\lambda.
\]
Substituting this bound into Eq.~\eqref{eq:sector-two-sided} and using $\sum_\lambda p_\lambda=1$ gives the lower bound in Eq.~\eqref{eq:L-global}. Since $J_W^{(m)}(a,\bar S)=U_W^{(m)}-\varepsilon_W$, Eqs.~\eqref{eq:global-error-bound} and~\eqref{eq:global-enclosure-width} follow.
\end{proof}

\begin{corollary}[Quadratic accuracy on sufficiently fine meshes]
\label{cor:second-order}
Fix the copy number $m$ and parameter dimension $d$ under the standing assumptions of Sec.~\ref{subsec:local-model}. Let $\{W_k\}$ be any sequence of finite indexed projective meshes with $\delta(W_k)\to0$. For all sufficiently large $k$, the finite-ray primal is feasible. Moreover, sampled dual-feasible pairs $(a_k,\bar S_k)$ can be chosen with
\[
 0\le U_{W_k}^{(m)}-J_{W_k}^{(m)}(a_k,\bar S_k)\le\delta(W_k)^2,
 \qquad \sup_k C_2^{\mathrm{HO}}(a_k)<\infty,
\]
where the supremum is taken over sufficiently large $k$. For these choices,
\begin{equation}
 U_{W_k}^{(m)}-C^{(m)}[I_d]=O(\delta(W_k)^2),
 \qquad
 U_{W_k}^{(m)}-L_{W_k}^{(m)}(a_k,\bar S_k)=O(\delta(W_k)^2).
\label{eq:unconditional-quadratic}
\end{equation}
\end{corollary}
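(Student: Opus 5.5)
The proof has three parts: primal feasibility on fine meshes, a uniform bound on the dual radius $C_2^{\mathrm{HO}}(a_k)$, and a final appeal to Theorem~\ref{thm:global-mesh-bound}.

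\textbf{Primal feasibility.} I would construct an explicit feasible point of the reduced separable program that uses only finitely many rays, and then perturb those rays onto nearby mesh rays. Since the $D_j^{(m)}$ are linearly independent, there is a locally unbiased strategy with finitely many outcomes. One example is a spectral projective measurement of suitable SLD-type observables, compressed by $\Phi_m^*$ and pinched, with estimator values chosen so that the derivative matrix is the identity. To leave room for perturbation, I would take several such strategies whose derivative response matrices span a neighborhood of $I_d$ in $\mathbb R^{d\times d}$. This gives a feasible point lying in the interior of the image of the linear constraint map, with every active ray satisfying $w_0\neq0$.

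When $\delta(W_k)\to0$, each active ray has a mesh ray within distance $\delta(W_k)$; by Lemma~\ref{lem:projective-affine} its affine coordinates are also close. Moving the weight $X_{s,\lambda}$ onto that ray changes two things. Completeness is restored by rescaling $X_{s,\lambda}$ by $|w_{s,0}|^2/|w'_{s,0}|^2$, which preserves positivity. The derivative matrix changes by $O(\delta)$, and this is corrected by reweighting the convex combination of the spanning strategies. The correction is possible because the response map is open near $I_d$, which is a finite-dimensional implicit-function or openness argument. This shows the finite-ray primal is feasible for all large $k$, and also that $U_{W_k}^{(m)}$ is bounded uniformly in $k$.

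\textbf{Near-optimal duals with bounded radius.} By finite-dimensional strong duality (Appendix~\ref{app:finite-ray}), one can pick sampled dual-feasible pairs with gap at most $\delta(W_k)^2$. The real task, and the main obstacle, is to show $\sup_k C_2^{\mathrm{HO}}(a_k)<\infty$, i.e.\ that $a_k$ stays bounded.

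My first attempt would use Slater-type interiority uniformly in $k$. From the construction above, for every direction $E$ in the unit sphere of $\mathbb R^{d\times d}$ there is a finite-ray feasible point with derivative response $I_d+rE$ for a fixed $r>0$. Its cost $c_{k,E}$ is bounded uniformly in $k$ and $E$. Weak duality on the finite-ray problem, applied to the perturbed right-hand side $I_d+rE$, gives
\[
\Tr a_k+r\langle a_k,E\rangle+\sum_\lambda\Tr\bar S_{k,\lambda}\le c_{k,E}.
\]
Since $J_{W_k}^{(m)}(a_k,\bar S_k)\ge U_{W_k}^{(m)}-\delta^2$, and $U_{W_k}^{(m)}\ge C^{(m)}[I_d]>0$ is bounded below, this yields
\[
r\langle a_k,E\rangle\le c_{k,E}-U_{W_k}^{(m)}+\delta^2 .
\]
Taking the supremum over $E$ bounds $\|a_k\|$ uniformly. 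The bound then transfers to $C_2^{\mathrm{HO}}(a_k)$ through Eq.~\eqref{eq:C2-HO-reduced}, because $\bar R_\lambda\succ0$ makes each $B_{i,\lambda}$ linear in $a$ with a bounded coefficient.

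\textbf{Conclusion.} Set $C:=\sup_k C_2^{\mathrm{HO}}(a_k)$. For large $k$ we have $\delta(W_k)<2/\sqrt{1+C^2}$, and $\eta(C,\delta)=O(\delta)$. Theorem~\ref{thm:global-mesh-bound} together with $\varepsilon_{W_k}\le\delta(W_k)^2$ then gives both estimates in Eq.~\eqref{eq:unconditional-quadratic}. Because $\eta$ is monotone in $C$, we may replace $C_2^{\mathrm{HO}}(a_k)$ by the uniform bound $C$ in these estimates.
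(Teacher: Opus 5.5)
Your proposal is correct. Its overall architecture matches the paper: eventual primal feasibility, a uniform bound on $a_k$, then Theorem~\ref{thm:global-mesh-bound} using the monotonicity of $\eta$ in $C$. Both ingredients, however, are obtained by a different route.

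The paper builds one explicit feasible point on $2d$ mesh rays near $\pm te_i$. Its effects are $\frac1{2d}I+\sigma\sum_jM_{ij}T_{j,\lambda}$, built from the Gram-dual operators $T_{j,\lambda}$, and they carry a mesh-independent positivity margin $E_{i,\sigma,\lambda}\succeq cI$ (Lemma~\ref{lem:fine-mesh-feasible}). Lemma~\ref{lem:uniform-dual} then inserts this single point into the exact primal--dual identity and proceeds in three steps:
\begin{enumerate}
\item the margin $c$ bounds every sampled residual $F_\lambda(z_{i,\sigma})$ by $(K+1)/c$;
\item subtracting residuals at paired rays eliminates $\bar S_\lambda$ and bounds $A_{k,\lambda}$ through $\|B^{-1}\|$;
\item $a$ is recovered from the $A_{k,\lambda}$ via the $T_{j,\lambda}$.
\end{enumerate}

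You instead use interiority in the response space. You take feasible points with every response $I_d+rE$, $\|E\|_{\mathrm F}=1$, at a uniformly bounded cost $\bar c$. Weak duality for the perturbed program, combined with $U_{W_k}^{(m)}\ge0$, then gives $r\langle a_k,E\rangle\le\bar c+\delta(W_k)^2$ directly. This is the standard ``right-hand side in the interior bounds the dual level sets'' argument. It is shorter, needs no positivity margin on the effects and no residual differencing, and bounds only $a$, which is all $C_2^{\mathrm{HO}}$ requires.

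The price is that your feasibility step is existential. It relies on a finite-outcome locally unbiased strategy transported onto the mesh, plus stability of a convex hull of responses containing a ball around $I_d$. That stability follows from a support-function estimate: moving each point by at most $\epsilon$ shrinks the contained radius by at most $\epsilon$. Your constants are therefore less explicit, and the paper also obtains bounds on $\bar S_\lambda$.

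Two small fixes are needed:
\begin{itemize}
\item A projective measurement of SLD-type observables is not, in general, locally unbiased for all $d$ parameters simultaneously. Use, e.g., an informationally complete POVM with the classical estimator $J_c^{-1}\nabla\log p$ instead.
\item The map sending a full-space POVM to $\cK_m$ is $\Psi_m^*$ (Proposition~\ref{prop:stat-equivalence}), not $\Phi_m^*$.
\end{itemize}

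Alternatively, the paper's construction with $M:=(B^T)^{-1}(I_d+rE)$ yields your perturbed feasible points directly for small $r>0$. This makes the openness argument unnecessary.
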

\begin{proof}
Lemma~\ref{lem:fine-mesh-feasible} proves eventual primal feasibility, and finite-SDP strong duality from Appendix~\ref{app:finite-ray} then permits a sampled dual gap at most $\delta(W_k)^2$. Lemma~\ref{lem:uniform-dual} gives the stated uniform bound on $C_2^{\mathrm{HO}}(a_k)$. The strict covering-radius condition of Theorem~\ref{thm:global-mesh-bound} holds for sufficiently large $k$. Applying that theorem proves both estimates in Eq.~\eqref{eq:unconditional-quadratic}.
\end{proof}

For the indexed reference meshes $W_{L,d}$ used in the numerical calculations, whose recursive construction is given in Appendix~\ref{app:projective-affine},
\begin{equation}
 N_W=L^d,
 \qquad
 \delta(W_{L,d})\le\bar\delta_{L,d}:=2\sqrt{1-\cos^{2d}(\pi/L)}
 =\frac{2\pi\sqrt d}{L}+O(L^{-3}).
\label{eq:reference-mesh-scaling}
\end{equation}

\begin{corollary}[Quadratic convergence for the reference meshes]
\label{cor:reference-quadratic}
For fixed $m,d$, the finite-ray primal on $W_{L,d}$ is feasible for all sufficiently large even $L$. With sampled dual-feasible pairs chosen as in Corollary~\ref{cor:second-order},
\[
 0\le U_{W_{L,d}}^{(m)}-C^{(m)}[I_d]=O(L^{-2}),
 \qquad
 U_{W_{L,d}}^{(m)}-L_{W_{L,d}}^{(m)}(a_L,\bar S_L)
 =O(L^{-2})=O(N_W^{-2/d}).
\]
Consequently, for any sufficiently small target width $\epsilon>0$, $O(\epsilon^{-d/2})$ sampled rays suffice to make the theoretical upper--lower enclosure at most $\epsilon$.
\end{corollary}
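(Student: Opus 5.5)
The plan is to specialize Corollary~\ref{cor:second-order} to the sequence $W_k:=W_{L_k,d}$ with $L_k$ running through the even integers, and then to translate the covering radius into powers of $L$ and $N_W$ using Eq.~\eqref{eq:reference-mesh-scaling}.

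First I check that this sequence satisfies the hypothesis $\delta(W_k)\to0$. By Eq.~\eqref{eq:reference-mesh-scaling}, $\delta(W_{L,d})\le\bar\delta_{L,d}=2\sqrt{1-\cos^{2d}(\pi/L)}$, and the right-hand side tends to $0$ as $L\to\infty$. For fixed $d$, the expansion $\cos^{2d}(\pi/L)=1-d\pi^2/L^2+O(L^{-4})$ gives $\bar\delta_{L,d}=2\pi\sqrt d/L+O(L^{-3})$. Corollary~\ref{cor:second-order} then yields three facts for all sufficiently large even $L$: the finite-ray primal is feasible; sampled dual-feasible pairs $(a_L,\bar S_L)$ exist with gap at most $\delta(W_{L,d})^2$ and $\sup_L C_2^{\mathrm{HO}}(a_L)<\infty$; and both differences in Eq.~\eqref{eq:unconditional-quadratic} are $O(\delta(W_{L,d})^2)$. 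Nonnegativity of $U_W^{(m)}-C^{(m)}[I_d]$ is Eq.~\eqref{eq:upper-bound}.

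Second, I substitute $\delta(W_{L,d})^2\le\bar\delta_{L,d}^2=4\pi^2d/L^2+O(L^{-4})$. This gives $O(L^{-2})$ for both the one-sided error and the enclosure width $U-L_W$. Since $N_W=L^d$, we have $L^{-2}=N_W^{-2/d}$, which is the stated form in terms of the ray count. Explicitly, $U-L_W=\varepsilon_W+\eta(C,\delta)^2\le\delta^2+\eta(C^\ast,\delta)^2$, where $C^\ast:=\sup_L C_2^{\mathrm{HO}}(a_L)$. This uses that $\eta$ is increasing in $C$ and in $\delta$, and that $\eta(C^\ast,\delta)=\tfrac{1+C^{\ast2}}{2}\delta+O(\delta^2)$. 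The resulting bound has the form $\kappa L^{-2}$ with a constant $\kappa$ depending only on $m$ and $d$, valid for $L\ge L_0$.

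Finally, for the complexity statement, I choose the smallest even $L$ with $L\ge L_0$ and $\kappa L^{-2}\le\epsilon$, namely $L\approx\sqrt{\kappa/\epsilon}$ up to rounding to an even integer. For sufficiently small $\epsilon$ the threshold $L_0$ is automatically exceeded, and $N_W=L^d=O(\epsilon^{-d/2})$. I expect no real obstacle here, since all substantive work is contained in Corollary~\ref{cor:second-order} and its lemmas. The only care needed is bookkeeping: the constants hidden in $O(\cdot)$ come from the uniform radius $C^\ast$, so the sampled dual pairs must be those supplied by Corollary~\ref{cor:second-order}, and restricting $L$ to even values does not affect the limit $\delta\to0$.
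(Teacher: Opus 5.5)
Your argument is correct and is essentially the paper's own route: the corollary follows by applying Corollary~\ref{cor:second-order} to the even-$L$ sequence $W_{L,d}$ and substituting $\delta(W_{L,d})\le\bar\delta_{L,d}=2\pi\sqrt d/L+O(L^{-3})$ and $N_W=L^d$ from Eq.~\eqref{eq:reference-mesh-scaling}. Your use of the monotonicity of $\eta$ in $C$ and $\delta$ and your ray count for a target width $\epsilon$ just make this bookkeeping explicit; the one correction is that your constant $\kappa$ depends on the fixed statistical model as well as on $m$ and $d$.
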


The feasibility and boundedness results underlying these corollaries are proved in Appendix~\ref{app:uniform-mesh}. Their constants depend on the fixed statistical model and copy number $m$. The results do not assert uniformity as $m$ increases.

\section{Numerical results}
\label{sec:numerics}

We use the two-parameter qubit local model
\begin{equation}
 \rho_r=\frac12(I+r\sigma_z),
 \qquad
 D_1=\frac12\sigma_z,
 \qquad
 D_2=\frac12\sigma_x,
 \qquad
 G=I_2,
\label{eq:numerical-qubit-model}
\end{equation}
with $r\in\{0,0.5\}$, where $\sigma_x$ and $\sigma_z$ are the Pauli matrices and $r$ is the $z$-component of the Bloch vector. We use this model throughout the numerical section. Our main focus is the tight precision bound: the quadratic mesh correction and the finite-ray dual values across copy numbers. We first present the NH calculations to validate the Schur--Weyl reduction and establish a numerical benchmark for the tight calculations that follow. Table~\ref{tab:status-numeric-quantities} distinguishes mathematical bounds from evaluated or solver-reported quantities.

\begin{table}[H]
\centering
\small
\caption{Mathematical and numerical quantities used in the finite-copy comparisons.}
\label{tab:status-numeric-quantities}
\begin{tabular}{@{}p{0.24\linewidth}p{0.67\linewidth}@{}}
\toprule
Quantity & Mathematical role and condition \\
\midrule
$C^{(m)}[I_d]$, $C_{\mathrm{NH}}^{(m)}$ & Exact tight optimum and exact NH relaxation optimum, respectively; $C_{\mathrm{NH}}^{(m)}\le C^{(m)}[I_d]$. Plotted NH values approximate the exact optimum. \\
$U_W^{(m)}$ & Exact finite-ray primal optimum, defined when the sampled primal is feasible; an upper bound on $C^{(m)}[I_d]$. \\
$J_W^{(m)}(a,\bar S)$, $\varepsilon_W$ & Sampled dual objective and finite-ray primal--dual gap for an exactly sampled-dual-feasible pair; $J_W$ alone need not be a lower bound on the continuous problem. \\
$L_W^{(m)}(a,\bar S)$ & Rigorous lower bound on $C^{(m)}[I_d]$ when sampled dual feasibility and the covering-radius condition of Theorem~\ref{thm:global-mesh-bound} hold. \\
$R_W^{(m)}$ & Correction formula evaluated using the upper radius bound $\bar\delta_{L,2}$. \\
$\widehat D_W^{(m)}$ & Numerical dual objective returned for a fixed mesh. \\
\bottomrule
\end{tabular}
\end{table}

\subsection{NH validation and large-copy benchmark}
\label{subsec:nh-numerics}

For the qubit model in Eq.~\eqref{eq:numerical-qubit-model}, write $C_{\mathrm{NH}}^{(m)}:=C_{\mathrm{NH,full}}^{(m)}[I_2]=C_{\mathrm{NH,red}}^{(m)}[I_2]$. The full-space NH positive-semidefinite block matrix has size $3\cdot2^m$. For qubits, the Schur--Weyl sectors may be labeled by total spin $j$, and the corresponding reduced block has dimension $2j+1$~\cite{Sagan2001}. Hence the reduced NH constraints have sizes $3(2j+1)$ and largest block $3(m+1)$. We first compare the two formulations for $m=1,\ldots,5$. Since the leading asymptotic MSE scales as $1/m$, we multiply the NH cost by $m$ and report $mC_{\mathrm{NH}}^{(m)}$. This quantity can be compared with the asymptotic Holevo benchmark for the same local model. Table~\ref{tab:NH-validation} shows agreement between the full and reduced formulations to the numerical precision achieved by the solvers.

\begin{table}[H]
\centering
\small
\caption{Small-copy numerical comparison of the full and Schur--Weyl-reduced NH formulations. Entries are numerical approximations to the normalized optima $mC_{\mathrm{NH}}^{(m)}$.}
\label{tab:NH-validation}
\begin{tabular}{c@{\qquad}cc@{\qquad}cc}
\toprule
& \multicolumn{2}{c}{$r=0$} & \multicolumn{2}{c}{$r=0.5$}\\
$m$ & full & reduced & full & reduced\\
\midrule
1 & 3.999997 & 3.999997 & 3.482051 & 3.482051\\
2 & 2.999976 & 2.999986 & 2.516091 & 2.516075\\
3 & 2.835505 & 2.835505 & 2.328549 & 2.328545\\
4 & 2.684747 & 2.684746 & 2.192318 & 2.192273\\
5 & 2.619359 & 2.619378 & 2.121033 & 2.121022\\
\bottomrule
\end{tabular}
\end{table}

The largest absolute difference between the displayed full and reduced values in Table~\ref{tab:NH-validation} is approximately $4.5\times10^{-5}$ (at $m=4$, $r=0.5$). The exact equality of the formulations follows from Eq.~\eqref{eq:NH-equality}.

The reduced NH SDP can be solved for much larger copy numbers than the full-space NH SDP. Figure~\ref{fig:NH-large-copy} shows the normalized reduced NH values through $m=100$ for $r=0$ and through $m=25$ for $r=0.5$. For this benchmark, the symmetric logarithmic derivative (SLD) Fisher information matrix is $J^{\mathrm{SLD}}(r)=\operatorname{diag}((1-r^2)^{-1},1)$. The mean SLD commutator vanishes, so the model satisfies the weak commutativity condition. Therefore, the Holevo bound is $C_H(r)=\Tr[(J^{\mathrm{SLD}}(r))^{-1}]=2-r^2$~\cite{Holevo2011,Demkowicz2020}. The computed finite-copy NH values decrease toward this asymptotic benchmark over the displayed copy numbers.

\begin{figure}[H]
\centering
\includegraphics[width=0.78\textwidth]{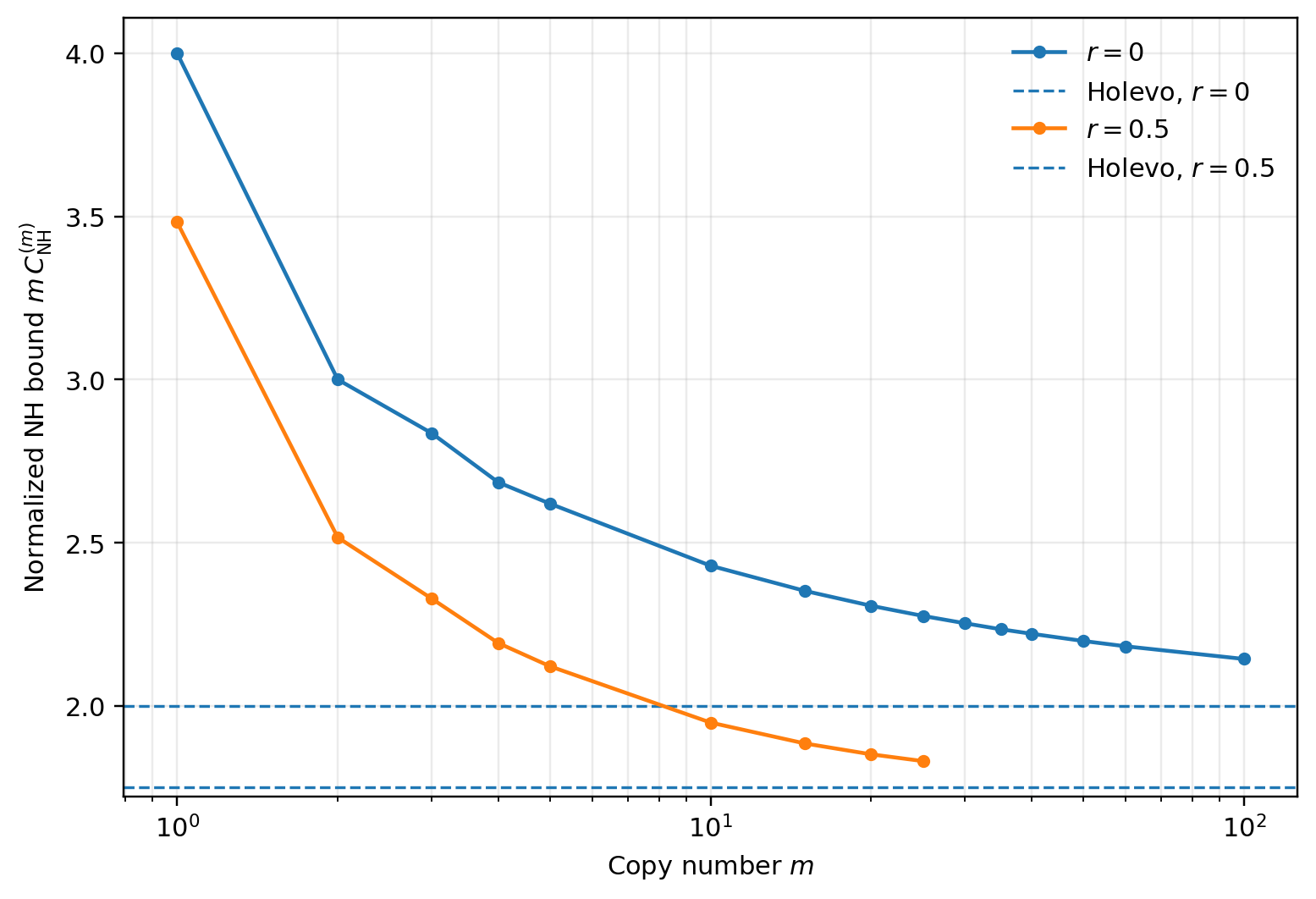}
\caption{NH values multiplied by the copy number $m$ for the Schur--Weyl-reduced qubit SDP. The dashed horizontal levels are the Holevo values $C_H(0)=2$ and $C_H(0.5)=1.75$.}
\label{fig:NH-large-copy}
\end{figure}

The calculation illustrates the practical change in the accessible finite-copy regime after removing permutation multiplicities. At $m=100$, for example, the largest reduced NH block has size $303$, while the corresponding full-space NH block-matrix size is $3\cdot2^{100}$.

\subsection{Quadratic mesh correction for the tight bound}
\label{subsec:tight-mesh-numerics}

For the tight program we use the same model in Eq.~\eqref{eq:numerical-qubit-model} and numerically solve the finite-ray dual for $m=2,3$ and $r=0,0.5$. Let $a_W$ denote the multiplier matrix returned by the finite-ray dual solve on mesh $W$. We evaluate the quadratic repair penalty
\[
 R_W^{(m)}:=\eta\!\left(C_2^{\mathrm{HO}}(a_W),\bar\delta_{L,2}\right)^2.
\]
The function $\eta(C,\delta)$ in Eq.~\eqref{eq:eta} converts a projective covering-radius bound into an affine-coordinate error bound, and $C_2^{\mathrm{HO}}(a)$ in Eq.~\eqref{eq:C2-HO} controls the norms of the quadratic minimizers. Appendix~\ref{app:projective-affine} gives the explicit upper bound $\bar\delta_{L,2}\ge\delta(W)$. Under the smallness condition of Theorem~\ref{thm:global-mesh-bound}, the difference between the exact finite-ray primal upper bound and the repaired dual lower bound is $\varepsilon_W+\eta(C_2^{\mathrm{HO}}(a_W),\delta(W))^2$ for a feasible primal and a sampled-dual-feasible pair. Substituting $\bar\delta_{L,2}$ yields $R_W^{(m)}$ as a conservative correction when the theorem's strict smallness condition is satisfied.

The horizontal axis uses $1/\bar\delta_{L,2}$ so that refinement proceeds from left to right. With $N_W=L^2$, the displayed resolutions are $L\in\{36,44,52,64,80,96,112,128,144,160\}$ for $m=2$ and $L\in\{36,40,44,48,52,56,64,72,80\}$ for $m=3$.

\begin{figure}[H]
\centering
\includegraphics[width=0.82\textwidth]{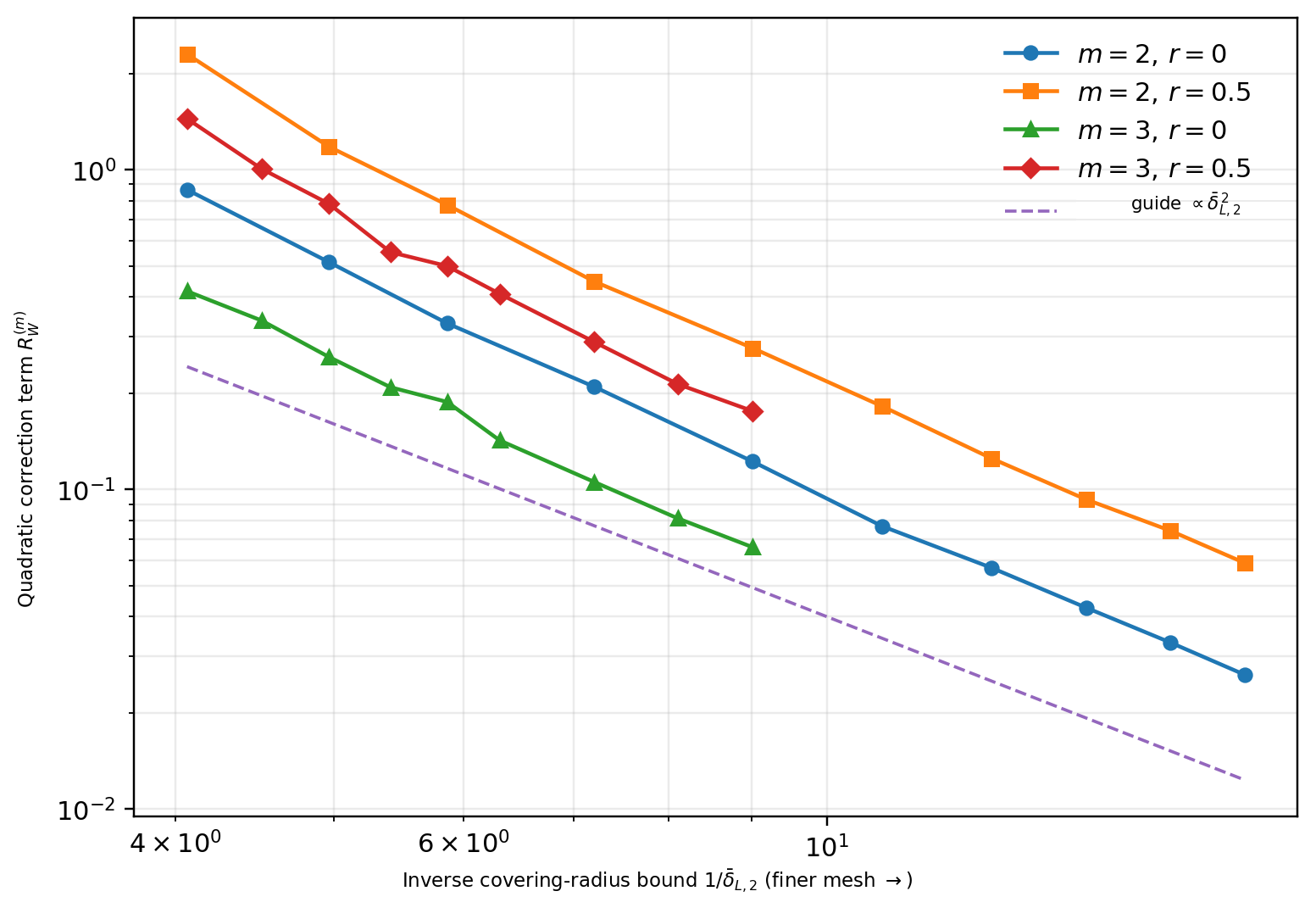}
\caption{Quadratic mesh-correction term $R_W^{(m)}=\eta(C_2^{\mathrm{HO}}(a_W),\bar\delta_{L,2})^2$ for two and three copies.}
\label{fig:quadratic-scaling}
\end{figure}

At the finest displayed meshes, the ratio $R_W^{(m)}/\bar\delta_{L,2}^2$ is approximately $8.53$ at $r=0$ and $19.09$ at $r=0.5$ when $m=2$. For $m=3$, the ratios are approximately $5.35$ at $r=0$ and $14.27$ at $r=0.5$. These ratios change only modestly under further refinement, consistent with the quadratic scaling in Corollary~\ref{cor:second-order}.

\subsection{Fixed-mesh tight computation through ten copies}
\label{subsec:tight-copy-hierarchy}

We next keep the two-parameter reference mesh fixed at resolution $L=48$. Write $W_{48}:=W_{48,2}$ for this mesh; it has $N_{W_{48}}=2304$ sampled vectors and $\bar\delta_{48,2}\simeq0.18479$ as a covering-radius upper bound. We solve the reduced finite-ray dual problem for $m=1,\ldots,10$ and denote its solver objective by $\widehat D_{W_{48}}^{(m)}$. This solver objective numerically approximates the common primal--dual optimal value $U_{W_{48}}^{(m)}$ of the finite-ray SDP. All displayed runs terminate successfully, with the smallest eigenvalue
of the sampled positive-semidefinite constraint matrices at the level
of the solver tolerance.

Figure~\ref{fig:tight-copy-hierarchy} shows the normalized values $m\widehat D_{W_{48}}^{(m)}$.  Independent Schur--Weyl NH values are shown at the copy numbers for which the existing NH benchmark was evaluated on the same model, and the Holevo values are included only as asymptotic reference levels.  The finite-ray dual values decrease over the displayed copy numbers.  At $m=10$ we obtain
\[
 \begin{array}{c|cc}
  & r=0 & r=0.5\\
 \hline
 m\widehat D_{W_{48}}^{(m)} & 2.435688 & 1.957005\\
 mC_{\mathrm{NH}}^{(m)} & 2.429240 & 1.948554
 \end{array}
\]
The corresponding normalized differences are approximately $6.45\times10^{-3}$ and $8.45\times10^{-3}$. The computed finite-ray dual objectives lie close to the independently computed NH benchmark over the copy numbers considered.

\begin{figure}[H]
\centering
\begin{minipage}[t]{0.485\textwidth}
\centering
\includegraphics[width=\linewidth]{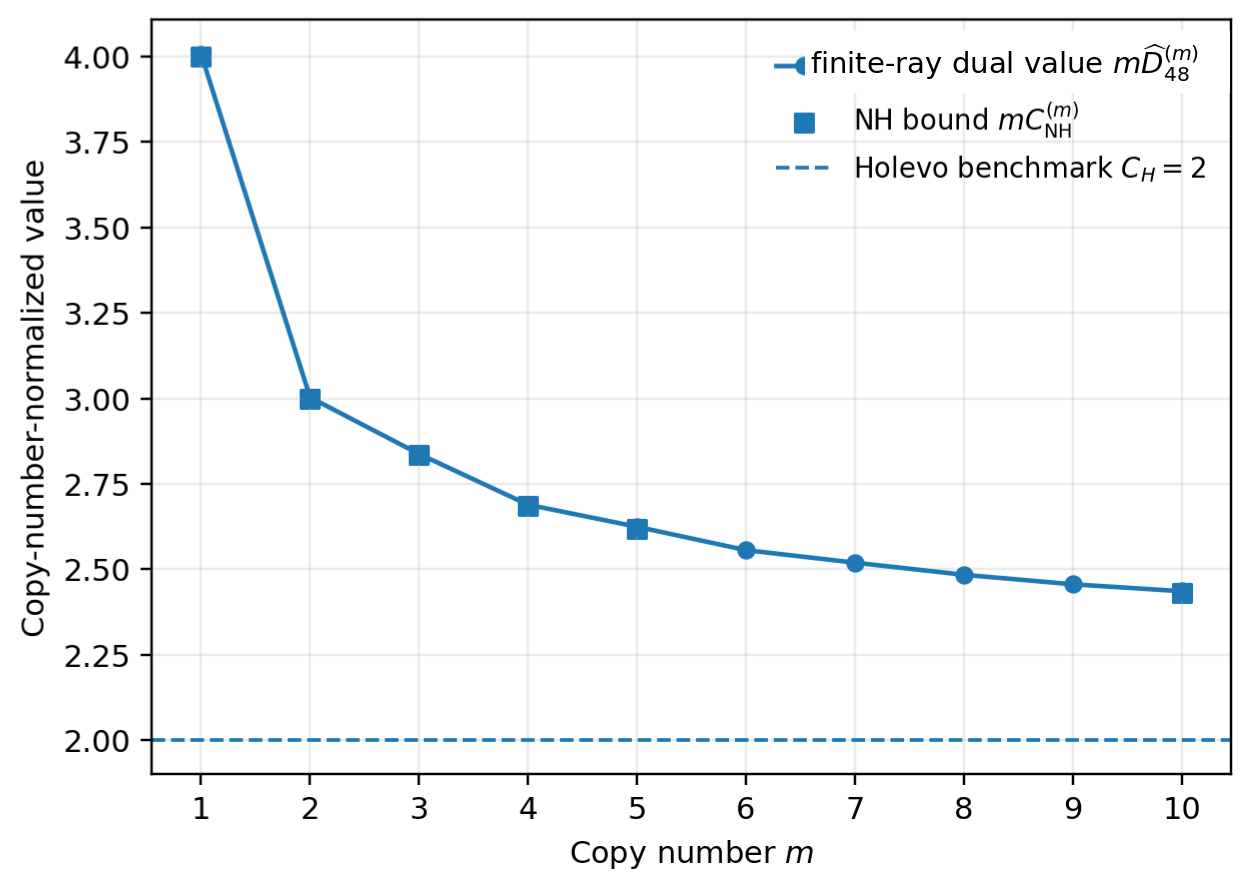}\\[-0.3em]
\small (a) $r=0$
\end{minipage}\hfill
\begin{minipage}[t]{0.485\textwidth}
\centering
\includegraphics[width=\linewidth]{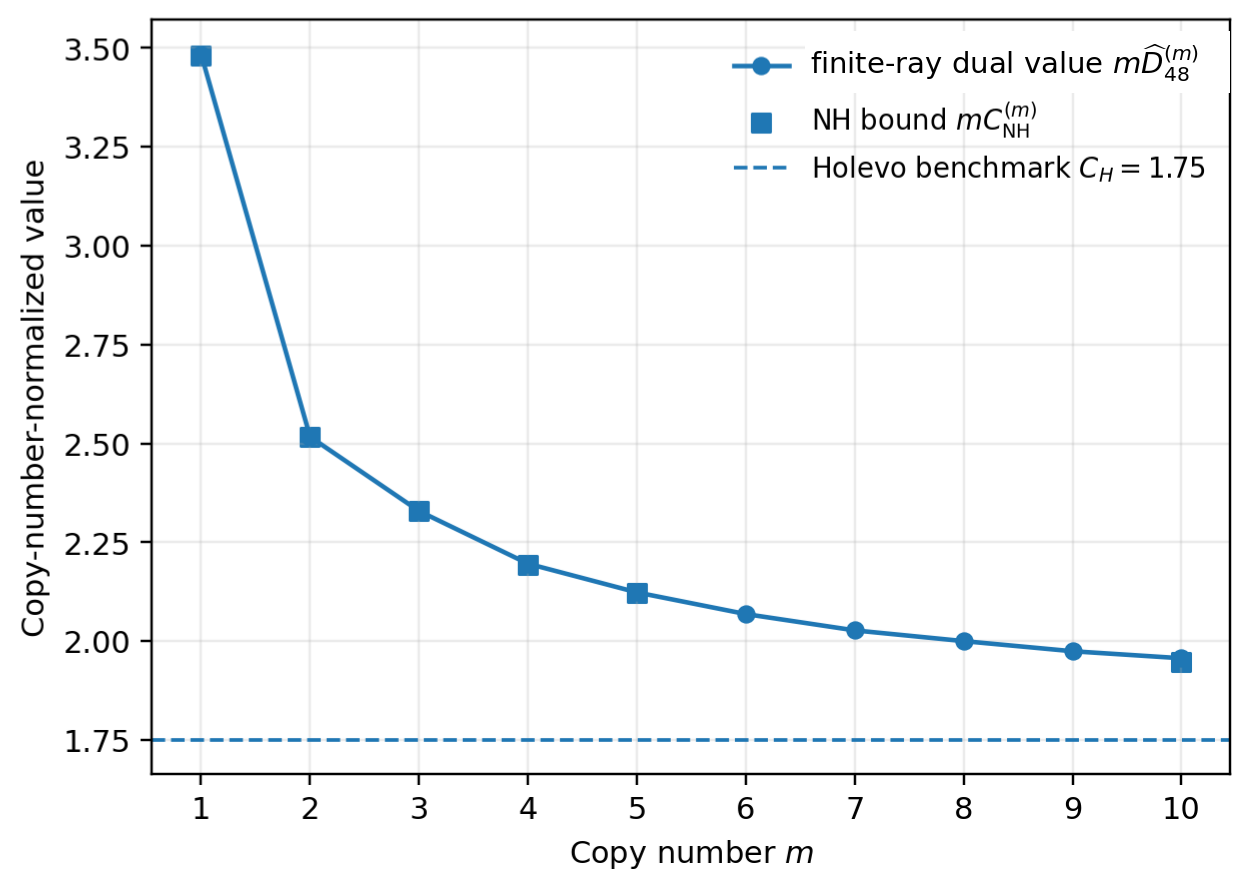}\\[-0.3em]
\small (b) $r=0.5$
\end{minipage}
\caption{Comparison after multiplication by the copy number $m$ on the fixed $L=48$ reference mesh, shown separately for the two radii. In each panel, the connected curve gives the numerical finite-ray dual values $m\widehat D_{W_{48}}^{(m)}$ for $m=1,\ldots,10$, square markers give independently computed NH bounds at $m=1,\ldots,5,10$, and the horizontal dashed line is the corresponding Holevo benchmark. The NH values are shown only at the copy numbers where they were independently evaluated.}
\label{fig:tight-copy-hierarchy}
\end{figure}

For $m=2$ on the $L=48$ mesh, Appendix~\ref{app:certified-finite-ray} gives certified enclosures of the finite-ray SDP optimum for $r=0$ and $r=0.5$, each with width below $5.1\times10^{-7}$.

\clearpage
\section{Applications to noisy multiparameter quantum sensing}
\label{sec:applications}

Whereas Sec.~\ref{sec:numerics} examines numerical accuracy and scaling on a reference qubit model, here we apply the reduced programs to three noisy sensing scenarios.
Model A estimates two rotation angles under known depolarizing noise; Model B
studies the same sensing task with known amplitude damping; and Model C jointly
estimates a phase and an unknown damping probability. In all three scenarios,
the same unknown parameters are shared by $m$ independently prepared probes,
noise acts independently on each copy, and arbitrary collective measurements
on the $m$-copy system are permitted. We use $G=I_2$ in the stated physical
parameter coordinates and multiply each finite-copy cost by $m$ before reporting it.

The exact optimum $C_{\mathrm{NH}}^{(m)}[I_2]$ is a mathematical lower
bound on the attainable finite-copy cost, and we report numerical approximations
to this optimum. Candidate POVM costs are obtained by reconstructing reduced
measurements from approximate finite-ray primal solutions
(Appendix~\ref{app:finite-ray}) and lifting them to the full $m$-copy system
with the adjoint channel $\Phi_m^*$. For each POVM, we optimize the locally
unbiased classical estimator using the inverse classical Fisher information
matrix and report the normalized trace of its MSE. We compare these costs
with the asymptotic Holevo reference $C_H$. The reconstructed strategies are
checked numerically for positivity, completeness, and local unbiasedness.

For reference, in the two-parameter qubit models below, the asymptotic
Holevo cost for $G=I_2$ can be evaluated by minimizing
\begin{equation}
 \sum_{i=1}^2\Tr(\rho X_i^2)
 +2\left|\operatorname{Im}\Tr(\rho X_1X_2)\right|
\label{eq:apps-holevo-evaluation}
\end{equation}
over Hermitian $X_i$ subject to
$\Tr(\rho X_i)=0$ and $\Tr(D_j X_i)=\delta_{ij}$.
The reference values below use the local states and derivatives displayed
for each model, rather than finite-copy solver outputs.

\subsection{Model A: two-axis rotation sensing with depolarizing noise}
\label{subsec:apps-v5-A}

\noindent\textit{Physical model.}
Consider a qubit initially in $|0\rangle$, on which two small, unknown
rotations are encoded about orthogonal transverse axes. Depolarizing noise,
with a known strength $\epsilon$, reduces the Bloch-vector length while
preserving rotational symmetry in the transverse plane. The two rotations
are associated with incompatible local observables, so their joint
precision depends on the allowed measurement strategy. This setting is
closely related to the few-copy collective-measurement experiment and
numerical study of Conlon et al.~\cite{Conlon2023}.

At the zero-rotation operating point, the state and tangent operators are
\begin{equation}
 \rho_{\mathrm A}=\frac{I+r\sigma_z}{2},\qquad
 D_x=-\frac r2\sigma_y,\qquad
 D_y=\frac r2\sigma_x,\qquad r=1-\epsilon\in(0,1).
 \label{eq:apps-v5-A-model}
\end{equation}
Here the unknown parameters are $(\theta_x,\theta_y)$, not $\epsilon$.
The full state family may be written
$\rho_{\boldsymbol\theta}=U_{\boldsymbol\theta}\rho_{\mathrm A}
U_{\boldsymbol\theta}^{\dagger}$ with
$U_{\boldsymbol\theta}=\exp[-i(\theta_x\sigma_x+\theta_y\sigma_y)/2]$.
For the numerical comparison below, different $\epsilon$ values represent
different, separately calibrated noise conditions.
For the Bloch vector $(0,0,r)$, the derivative constraints in
Eq.~\eqref{eq:apps-holevo-evaluation} fix the transverse components
of the two estimators. Their longitudinal components only increase
the cost. Evaluating the two transverse components gives
\begin{equation}
 C_{H,\mathrm A}=\frac{2}{r^2}+\frac{2}{r}
 =\frac{2(1+r)}{r^2},
\label{eq:apps-holevo-A}
\end{equation}
which equals $60$ and $220$ at $\epsilon=0.8$ and $0.9$,
respectively.

\noindent\textit{Numerical results.}
The general Schur--Weyl (SW) reduced NH SDP reproduces the known single-
and two-copy reference results. At $\epsilon=0.5$, its normalized three-copy value is
$3C_{\mathrm{NH}}^{(3)}[I_2]=12.716229$, consistent with the value reported in
Ref.~\cite{Conlon2023}; the four- to seven-copy results likewise agree
within the graphical precision of that work. To test the computational
advantage of the general reduction, we compared full-space and
SW-reduced versions of the same NH SDP under matched solver settings.
At $m=5$, three independent cold-process SCS runs gave median total-runtime
speedups of $7.58$ for $\epsilon=0.5$ and $4.03$ for $\epsilon=0.8$.
For $\epsilon=0.5$ and $\epsilon=0.8$, the solver-only speedups were
$37.14$ and $16.86$, respectively.
Figure~\ref{fig:apps-v5-A-speed} reports these two distinct timing measures.

\begin{figure}[H]
 \centering
 \includegraphics[width=0.72\linewidth]{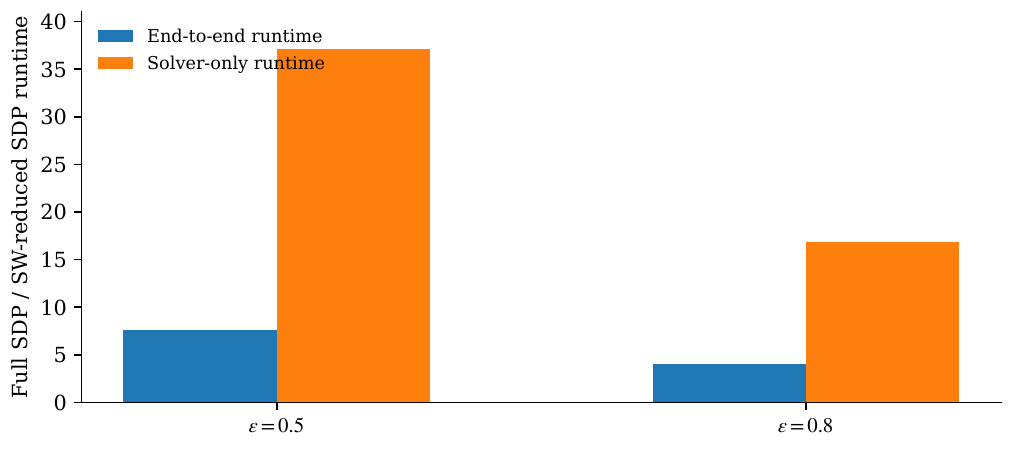}
 \caption{Model A: full-space versus general SW-reduced NH SDP at
 $m=5$, using SCS with three independent cold-process repetitions.
 End-to-end runtime and solver-only runtime are shown separately.
 No additional model-specific reduction is used in this comparison.}
 \label{fig:apps-v5-A-speed}
\end{figure}

For the large-copy calculation, we use additional symmetry specific to
this rotation model, beyond the general SW reduction. This calculation
is distinct from the matched full-space versus general SW-reduced SDP
benchmark at $m=5$; the additional simplification is not needed for
that benchmark. Figure
\ref{fig:apps-v5-A-large} and Table~\ref{tab:apps-v5-A-tail} show the
large-copy NH excess over the asymptotic Holevo reference,
$100(mC_{\mathrm{NH}}^{(m)}[I_2]-C_H)/C_H$, on a logarithmic scale. For instance, at $m=100$,
$100C_{\mathrm{NH}}^{(100)}[I_2]=60.041791$ versus $C_H=60$ for $\epsilon=0.8$, while
$100C_{\mathrm{NH}}^{(100)}[I_2]=222.377626$ versus $C_H=220$ for $\epsilon=0.9$.
Table~\ref{tab:apps-v5-A-tail} also reports the results for $\epsilon=0.5$.
The $m=100$ value is included to indicate the scale of the gap and is not
used to infer a convergence rate.

\begin{figure}[H]
 \centering
 \includegraphics[width=0.82\linewidth]{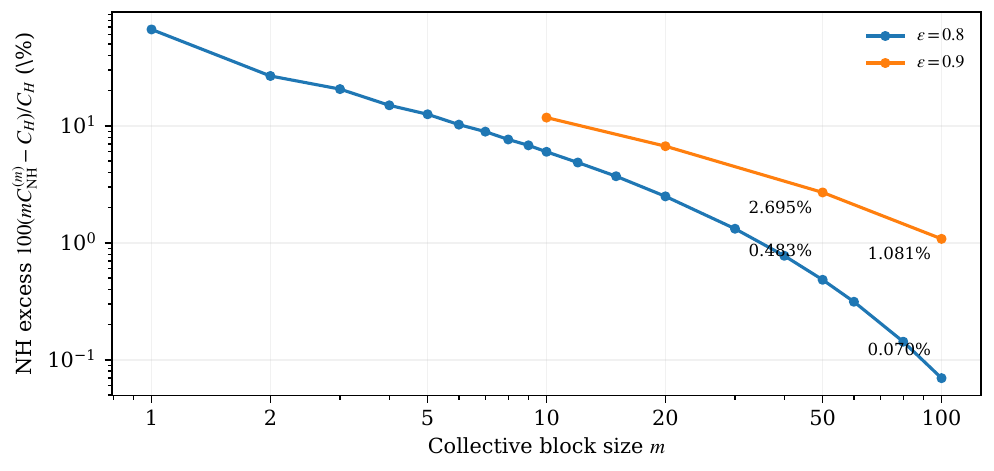}
 \caption{Model A: computed percentage excess of the finite-copy NH
 optimum estimate above the asymptotic Holevo reference. Logarithmic axes and
 high-copy numerical annotations make small differences visible.
 The largest-copy results exploit additional symmetry specific to this
 model; no convergence exponent is fitted.}
 \label{fig:apps-v5-A-large}
\end{figure}

\begin{table}[H]
 \centering\small
 \caption{Model A: relative percentage excess $100(mC_{\mathrm{NH}}^{(m)}[I_2]-C_H)/C_H$ over the
 asymptotic Holevo reference, not absolute mean-square error.}
 \label{tab:apps-v5-A-tail}
 \begin{tabular}{lrrr}
 \toprule
 Known noise strength & $m=20$ & $m=50$ & $m=100$\\
 \midrule
 $\epsilon=0.5$ & $0.033906\%$ & $9.1285\times10^{-5}\%$ & $\sim 3\times10^{-8}\%$\\
 $\epsilon=0.8$ & $2.494\%$ & $0.483\%$ & $0.070\%$\\
 $\epsilon=0.9$ & $6.694\%$ & $2.695\%$ & $1.081\%$\\
 \bottomrule
 \end{tabular}
\end{table}

These data extend a physically motivated finite-copy calculation far
beyond the original few-copy regime. The matched full/SW tests quantify
the advantage of the general reduction, while the larger-copy study
illustrates what can be achieved when additional model symmetry is
available.

\subsection{Model B: two-axis rotation sensing with amplitude damping}
\label{subsec:apps-v5-B}

\noindent\textit{Physical model.}
Unlike isotropic depolarization, amplitude damping describes directional
relaxation, for example decay of a qubit from $|1\rangle$ to $|0\rangle$.
Related multiparameter rotation-estimation problems under amplitude-damping
noise have been studied in Ref.~\cite{ConlonLamAssad2023}.
To study the two-rotation problem under this different physical noise,
we prepare $|1\rangle$, encode the same two unknown transverse rotations,
and subsequently apply amplitude damping with a known probability
$p$. The channel has Kraus operators
\begin{equation}
 K_0=|0\rangle\langle0|+\sqrt{1-p}|1\rangle\langle1|,\qquad
 K_1=\sqrt p\,|0\rangle\langle1|.
 \label{eq:apps-v5-B-channel}
\end{equation}
The corresponding zero-rotation local model is
\begin{equation}
 \rho_{\mathrm B}=\frac{I+(2p-1)\sigma_z}{2},\qquad
 D_x=\frac{\sqrt{1-p}}2\sigma_y,\qquad
 D_y=-\frac{\sqrt{1-p}}2\sigma_x.
 \label{eq:apps-v5-B-model}
\end{equation}
Unlike Model C below, $p$ is calibrated and not estimated. We emphasize
$p=1/2$: the operating-point state is maximally mixed, but its
rotation derivatives remain nonzero, so both angles are locally
identifiable. At $p=1/2$, the state is $I/2$
and the tangent Bloch vectors have length $1/\sqrt2$ in orthogonal
directions. Equation~\eqref{eq:apps-holevo-evaluation} then gives
$C_{H,\mathrm B}=2+2=4$; its imaginary term vanishes because
the Bloch vector at this operating point is zero.

\noindent\textit{Numerical results.}
We independently constructed the amplitude-damping SW blocks and solved
the general SW-reduced NH SDP at $p=1/2$ through $m=20$. The $m=50$
and $m=100$ computations additionally use symmetry specific to this
operating point; they are distinct from the general SW-reduced
calculations at $m\le20$. Figure
\ref{fig:apps-v5-B-large} and Table~\ref{tab:apps-v5-B-large} show the
normalized NH cost, with the asymptotic Holevo reference $C_H=4$.
At $m=100$, the computed normalized NH value is $4.284746$, about $7.12\%$ above that
reference. Since the exact NH optimum is a mathematical lower bound on the
attainable finite-copy value, this numerical result indicates a remaining
precision limitation even for a hypothetical 100-copy collective measurement.

\begin{table}[H]
 \centering\small
 \caption{Model B at $p=1/2$: computed normalized finite-copy NH values approximating $mC_{\mathrm{NH}}^{(m)}[I_2]$.
 The asymptotic Holevo reference is $C_H=4$ for all rows.
 The largest-copy entries use additional model-specific symmetry.}
 \label{tab:apps-v5-B-large}
 \begin{tabular}{rrrrrrrr}
 \toprule
 $m$ & 1 & 3 & 5 & 10 & 20 & 50 & 100\\
 \midrule
 $mC_{\mathrm{NH}}^{(m)}[I_2]$ & 8.000000 & 5.671009 & 5.238619 & 4.858170 & 4.612472 & 4.395643 & 4.284746\\
 \bottomrule
 \end{tabular}
\end{table}

\begin{figure}[H]
 \centering
 \includegraphics[width=0.88\linewidth]{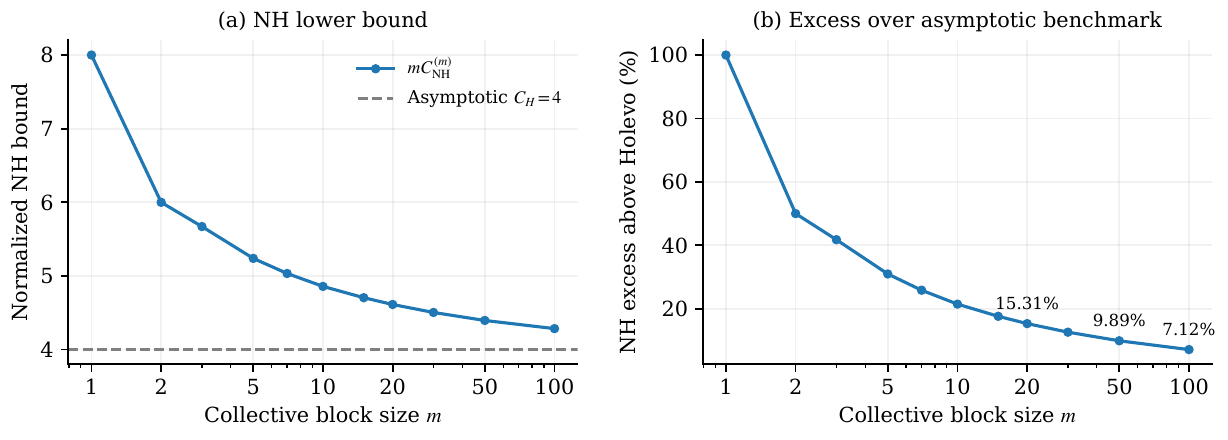}
 \caption{Model B, $p=1/2$: (a) normalized NH cost and horizontal
 asymptotic Holevo reference $C_H=4$; (b) percentage NH excess above
 that reference, with the $m=20,50,100$ values labelled.}
 \label{fig:apps-v5-B-large}
\end{figure}

To examine candidate finite-copy measurement performance, we also
solved finite-direction SW-reduced tight programs for $m=3,5,7$,
recovered candidate collective POVMs, and optimized their classical
locally unbiased estimators. At $m=3,5,7$, respectively, the numerically reconstructed candidate POVMs
have normalized computed costs $5.671030$, $5.238640$, and $5.036238$,
compared with computed NH optimum estimates $5.671009$, $5.238619$, and
$5.033869$ (Fig.~\ref{fig:apps-v5-B-povm}).

\begin{figure}[H]
 \centering
 \includegraphics[width=0.72\linewidth]{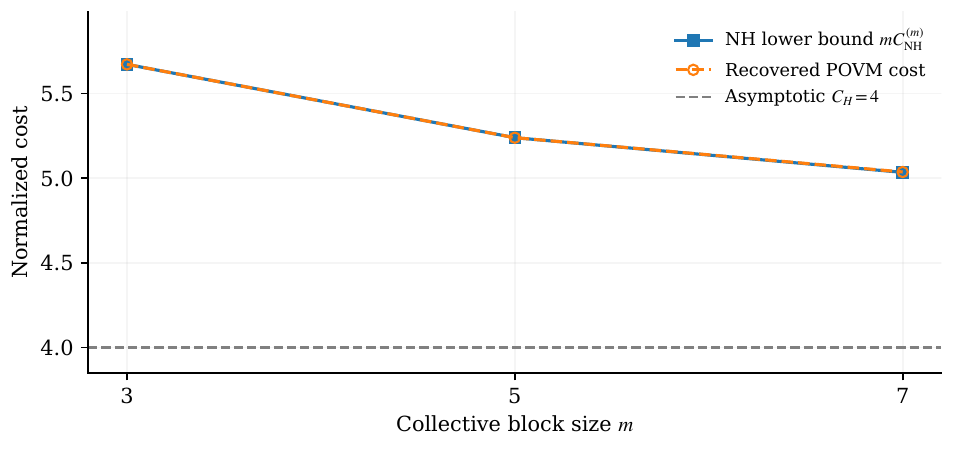}
 \caption{Model B, $p=1/2$: computed normalized NH optimum estimates versus
 numerically reconstructed candidate collective-POVM costs at $m=3,5,7$.
 The horizontal reference $C_H=4$ is asymptotic, not a finite-copy
 measurement result.}
 \label{fig:apps-v5-B-povm}
\end{figure}

Table~\ref{tab:apps-v5-B-diagnostics} reports the solver statuses and numerical
residuals for completeness and local unbiasedness. For $m=3$ and $m=7$, the
solver returned \texttt{optimal\_inaccurate}, indicating an accuracy warning.
\begin{table}[H]
 \centering\small
 \caption{Model B numerical solver diagnostics. The repair term is not a certified interval width.}
 \label{tab:apps-v5-B-diagnostics}
 \begin{tabular}{rrlrrr}
 \toprule
 $m$ & $L$ & Solver status & Completeness error & Unbiasedness error & Repair term\\
 \midrule
 3 & 96 & \texttt{optimal\_inaccurate} & $4.219\cdot10^{-15}$ & $6.550\cdot10^{-15}$ & $0.443365$\\
 5 & 96 & \texttt{optimal} & $7.106\cdot10^{-15}$ & $9.548\cdot10^{-15}$ & $0.423962$\\
 7 & 48 & \texttt{optimal\_inaccurate} & $3.331\cdot10^{-15}$ & $4.441\cdot10^{-16}$ & $2.311394$\\
 \bottomrule
 \end{tabular}
\end{table}

The candidate POVM costs are close to the computed NH values at $m=3,5,7$.
The finite-mesh correction is too conservative to yield narrow rigorous
intervals for the tight value in these calculations.

\subsection{Model C: joint phase and unknown damping estimation}
\label{subsec:apps-v5-C}

\noindent\textit{Physical model.}
In Models A and B the noise parameters are assumed calibrated. In
practice, a sensor may have to estimate its signal phase and noise
strength simultaneously. The simultaneous estimation of phase and
amplitude-damping parameters has also been studied in qubit
models~\cite{Razavian2020}. Model C addresses this situation by preparing
$|+\rangle=(|0\rangle+|1\rangle)/\sqrt2$, encoding a phase
$\phi$ with $U_\phi=\exp(-i\phi\sigma_z/2)$, and then applying the
amplitude-damping channel of Eq.~\eqref{eq:apps-v5-B-channel} with
unknown probability $p$. The output family is
\begin{equation}
 \rho_{\phi,p}=\frac12\left[I+
 \sqrt{1-p}\big(\cos\phi\,\sigma_x+\sin\phi\,\sigma_y\big)
 +p\sigma_z\right].
 \label{eq:apps-v5-C-model}
\end{equation}
At the operating point $\phi=0$, the local derivatives are
\begin{equation}
 D_\phi=\frac{\sqrt{1-p}}2\sigma_y,\qquad
 D_p=-\frac{1}{4\sqrt{1-p}}\sigma_x+\frac12\sigma_z.
 \label{eq:apps-v5-C-derivatives}
\end{equation}
Thus phase variation changes the orientation of the Bloch vector,
whereas varying the unknown damping parameter also changes the state
spectrum. Accordingly, the probability of obtaining a particular SW
spin sector depends on $p$ but not on $\phi$. This sector label is
itself a source of information about the unknown noise strength.
To calculate the Holevo reference, set $q=1-p$ and use
$X_i=\boldsymbol{x}_i\cdot\boldsymbol{\sigma}
-(\boldsymbol{s}\cdot\boldsymbol{x}_i)I$ with
$\boldsymbol{s}=(\sqrt q,0,p)$. The derivative constraints allow
$\boldsymbol{x}_\phi=(a,1/\sqrt q,a/(2\sqrt q))$ and
$\boldsymbol{x}_p=(b,0,1+b/(2\sqrt q))$.
Substituting these expressions into the Holevo objective
\eqref{eq:apps-holevo-evaluation} yields a quadratic contribution
$(1+p)a^2/4$; the absolute-value contribution does not depend on $a$.
Thus minimization first sets $a=0$, leaving the one-variable problem
\begin{equation}
 \min_{b\in\mathbb R}\left\{\frac1q+1-p^2
 +\frac{1+p}{4}b^2+q^{3/2}b
 +2\left|1+\frac{1-2p}{2\sqrt q}b\right|\right\}.
\label{eq:apps-holevo-C-min}
\end{equation}
On the branch where the expression inside the absolute value is
positive, differentiating the objective with respect to $b$ gives the
stationarity condition
\[
 \frac{1+p}{2}b+q^{3/2}+\frac{1-2p}{\sqrt q}=0.
\]
Its solution is
$b=2(-p^2+4p-2)/((1+p)\sqrt{1-p})$. At $p=0.3$ and $p=0.7$,
the expression inside the absolute value at this solution equals
$277/455$ and $193/255$, respectively. Both are positive, so the
stationarity equation was evaluated on the correct branch. Since the
objective in Eq.~\eqref{eq:apps-holevo-C-min} is convex in $b$, this
stationary point is a global minimizer at each of these two operating
points. Consequently,
\begin{equation}
 C_{H,\mathrm C}(p)=\frac{p(17-24p+8p^2)}{1-p^2}
 \quad(p=0.3,\ 0.7),
\label{eq:apps-holevo-C}
\end{equation}
which yields $1578/455$ and $1442/255$, respectively.

\noindent\textit{Numerical results.}
We studied two distinct operating points, $p=0.3$ and $p=0.7$, at
$m=2,3,5$, solving general SW-reduced NH problems and finite-direction
tight programs. The finite-direction tight programs supply candidate
collective POVMs whose
classical locally unbiased estimators are further optimized. Table
\ref{tab:apps-v5-C-values} and Fig.~\ref{fig:apps-v5-C-precision}
compare their normalized costs. The asymptotic normalized Holevo references are
$C_H(0.3)\simeq3.468132$ and $C_H(0.7)\simeq5.654902$, shown as
horizontal lines rather than assigned to any finite-copy row.

\begin{table}[H]
 \centering\small
 \caption{Model C: computed normalized NH optimum estimates approximating $mC_{\mathrm{NH}}^{(m)}[I_2]$ and
 the numerical costs of reconstructed candidate collective POVMs. The Holevo reference
 is asymptotic and is specified separately in the text.}
 \label{tab:apps-v5-C-values}
 \begin{tabular}{crrr}
 \toprule
 $p$ & $m$ & $mC_{\mathrm{NH}}^{(m)}[I_2]$ & POVM cost\\
 \midrule
 $0.3$ & 2 & 3.569273 & 3.569621\\
       & 3 & 3.514602 & 3.515776\\
       & 5 & 3.479626 & 3.480075\\
 \midrule
 $0.7$ & 2 & 5.801422 & 5.802452\\
       & 3 & 5.725225 & 5.725818\\
       & 5 & 5.672978 & 5.673249\\
 \bottomrule
 \end{tabular}
\end{table}

\begin{figure}[H]
 \centering
 \includegraphics[width=0.88\linewidth]{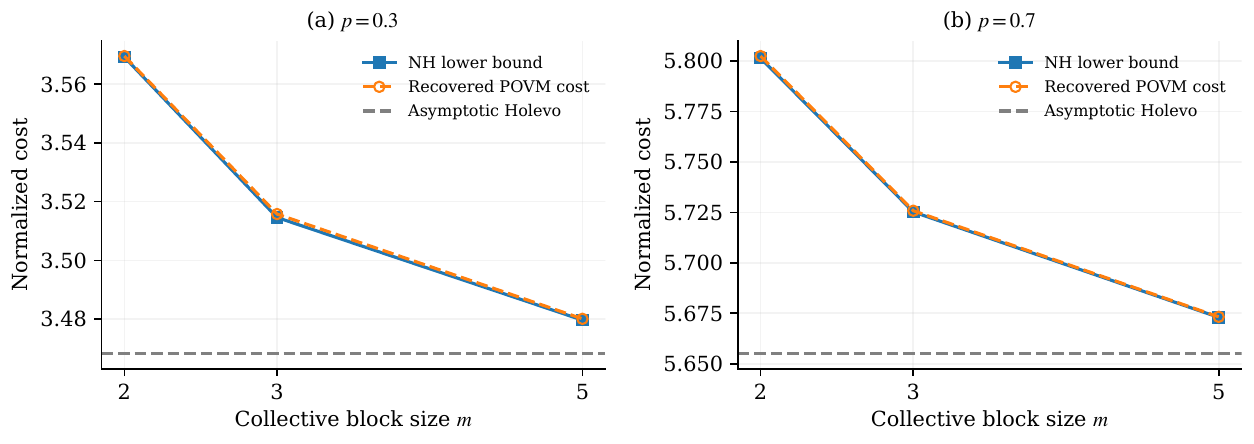}
 \caption{Model C: finite-copy NH benchmark and numerically
 reconstructed POVM performance for $p=0.3,0.7$. Horizontal dashed
 lines are the corresponding asymptotic Holevo references.}
 \label{fig:apps-v5-C-precision}
\end{figure}

To quantify the information carried by the sector label itself, write
$p_\lambda(p):=\Tr\bar R_{\lambda,p}$ for its probability in the reduced
$m$-copy state. The classical Fisher information and its fraction of the
total $m$-copy SLD $pp$ information are
\begin{equation}
 J_{\mathrm{sector},pp}^{(m)}(p)
 :=\sum_{\lambda:p_\lambda(p)>0}
 \frac{[\partial_p p_\lambda(p)]^2}{p_\lambda(p)},
 \qquad
 f_m(p):=\frac{J_{\mathrm{sector},pp}^{(m)}(p)}
 {m J_{\mathrm{SLD},pp}^{(1)}(p)}.
\label{eq:apps-v5-C-sector-FI}
\end{equation}
At $\phi=0$, the Bloch vector of Eq.~\eqref{eq:apps-v5-C-model} is
$(\sqrt{1-p},0,p)$; its single-copy SLD information is
$J_{\mathrm{SLD},pp}^{(1)}(p)=(1+p)/[4p(1-p)]$ for $0<p<1$.
For example, at $m=2$ the antisymmetric sector has probability
$p_-(p)=p(1-p)/4$, while the symmetric sector has probability $1-p_-(p)$.
Therefore,
\[
 J_{\mathrm{sector},pp}^{(2)}(p)
 =\frac{[\partial_p p_-(p)]^2}{p_-(p)[1-p_-(p)]},
 \qquad \partial_p p_-(p)=\frac{1-2p}{4}.
\]
At $p=0.3$ these expressions give
$J_{\mathrm{sector},pp}^{(2)}=0.201030\ldots$ and
$f_2(0.3)=0.064948\ldots$, or $6.49\%$.
For $m=3,5$ the corresponding fractions at $p=0.3$ are
$9.17\%$ and $11.65\%$; at $p=0.7$ the fractions for
$m=2,3,5$ are $4.97\%$, $7.01\%$, and $8.91\%$.
Figure~\ref{fig:apps-v5-C-sector} displays these results. These
percentages quantify a Fisher-information contribution, not
a proportional decrease in the final MSE.

\begin{figure}[H]
 \centering
 \includegraphics[width=0.72\linewidth]{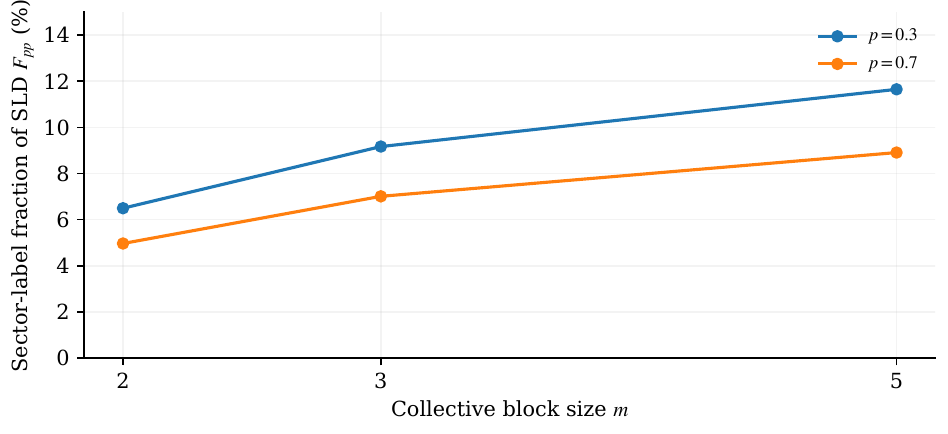}
 \caption{Model C: the classical Fisher information about the unknown
 damping probability obtained from the Schur-sector label alone,
 divided by the total $m$-copy SLD $pp$ information.}
 \label{fig:apps-v5-C-sector}
\end{figure}

Although $p=0.3$ and $p=0.7$ give the same one-copy state purity,
$\|\boldsymbol r\|^2=1-p+p^2=0.79$, their parameter sensitivities
and achievable precision costs differ. Model C therefore illustrates
why both the SW sector weights and their parameter derivatives
must be included when the noise strength itself is unknown.

\section{Discussion}
\label{sec:discussion}

The Schur--Weyl reduced model allows us to study exact finite-copy attainable precision without explicitly representing the permutation multiplicity spaces. For every POVM on the full $m$-copy model, there is a POVM on the reduced model, and conversely, with the same outcome probabilities for every $\theta$. Therefore, Schur--Weyl reduction preserves the optimization over arbitrary collective POVMs without approximating the tight precision value. The reduced tight formulation still involves a continuum of dual constraints. For any sufficiently fine finite-ray mesh, the constructive feasibility and uniform dual bounds establish quadratic convergence of its tight-value approximation at each fixed copy number. Within the same reduced model, the Nagaoka--Hayashi SDP provides a simpler relaxation that allows computations at larger copy numbers.

More generally, since the reduction preserves the statistical experiment itself, the reduced model can also be used to study other measurement optimization problems on i.i.d. quantum state families. The common reduction argument may extend to further conic formulations when their cone structures and constraints are compatible with the reduction and recovery maps.

The resulting complexity statements concern different asymptotic directions. For fixed one-copy dimension, Corollary~\ref{cor:copy-number-scaling} gives polynomial growth of the reduced operator representation with the copy number. For a fixed finite-copy tight problem, Corollary~\ref{cor:reference-quadratic} establishes the sufficient ray scaling $O(\epsilon^{-d/2})$ for theoretical enclosures on sufficiently fine reference meshes. The constants in the second statement have not been shown to be uniform in $m$, so these two scalings do not give a joint complexity theorem.

Our results apply to finite-dimensional i.i.d. quantum state models with arbitrary collective POVMs. The fixed-mesh calculations through ten copies provide an initial numerical view of the finite-ray dual values relative to the NH and Holevo benchmarks; they do not determine an asymptotic convergence law. Two extensions are therefore natural within the present setting: sector-adapted meshes that exploit the block dependence of the continuous minimizers, and a systematic study of the approach of the finite-copy attainable precision value to the asymptotic Holevo regime.

\appendix

\section{Schur--Weyl representation details and recovery channels}
\label{app:sw-details}

This appendix gives supporting details for the finite-copy statistical model and its Schur--Weyl reduction. We first verify that the standing linear-independence assumption is inherited by every copy number.
\begin{lemma}[Independence of tensor-power tangents]
\label{app:tangent-independence}
If $D_1,\ldots,D_d$ are linearly independent, then $D_1^{(m)},\ldots,D_d^{(m)}$ are linearly independent for every $m\ge1$.
\end{lemma}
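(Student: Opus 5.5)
To prove Lemma~\ref{app:tangent-independence}, I will suppose a linear relation $\sum_i c_i D_i^{(m)}=0$ with real coefficients $c_i$ and show that all $c_i$ vanish. By linearity of Eq.~\eqref{eq:m-copy-tangents}, setting $D_c:=\sum_i c_iD_i$ gives
\[
 \sum_i c_iD_i^{(m)}=\sum_{r=1}^m\rho^{\otimes(r-1)}\otimes D_c\otimes\rho^{\otimes(m-r)}.
\]
It therefore suffices to show that this operator vanishes only when $D_c=0$. Once that is established, the linear independence of $D_1,\ldots,D_d$ forces $c=0$.

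The key step is a partial trace. I will apply $\Tr_{2,\ldots,m}$, the trace over all copies except the first, and use $\Tr\rho=1$ together with $\Tr D_c=0$. The latter holds because each $D_i$ is the derivative of a trace-one family. In the $r=1$ term, the remaining factors $\rho^{\otimes(m-1)}$ trace to $1$, leaving $D_c$. Every term with $r\ge2$ contains $D_c$ on some traced-out copy, so it contributes $\rho\cdot\Tr D_c=0$. The partial trace of the operator is therefore exactly $D_c$, and so $\sum_ic_iD_i^{(m)}=0$ implies $D_c=0$.

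If one preferred not to use the tracelessness of $D_i$, an alternative route would be to pair the operator with $I\otimes\cdots\otimes I$. I do not expect to need this. The only point requiring care, and it is a mild one, is recording why $\Tr D_i=0$: differentiate $\Tr\rho_\theta=1$ at $\theta=0$, which is legitimate under the differentiability assumption in Sec.~\ref{subsec:local-model}. Nothing else is an obstacle; full rank of $\rho$ is not needed. The case $m=1$ is trivial.
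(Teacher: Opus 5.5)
Your proof is correct and follows the paper's argument: trace out copies $2,\ldots,m$, then use $\Tr\rho=1$ and $\Tr D_i=0$ to get $\Tr_{2,\ldots,m}\sum_i c_iD_i^{(m)}=\sum_i c_iD_i$, and conclude from the linear independence of the $D_i$. Your aside about pairing with $I\otimes\cdots\otimes I$ would only produce the scalar $m\Tr D_c$, so it cannot replace this argument, but you rightly do not rely on it.
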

\begin{proof}
Tracing out copies $2,\ldots,m$ in Eq.~\eqref{eq:m-copy-tangents} gives $\Tr_{2,\ldots,m}D_i^{(m)}=D_i$ because $\Tr\rho=1$ and $\Tr D_i=0$. Hence $\sum_ic_iD_i^{(m)}=0$ implies $\sum_ic_iD_i=0$, and therefore $c_i=0$ for all $i$.
\end{proof}

For completeness, recall the representation facts underlying the reduction. A partition $\lambda\vdash m$ is a non-increasing sequence of positive integers summing to $m$; $\ell(\lambda)$ denotes its number of rows. The restriction $\ell(\lambda)\le n$ in Eq.~\eqref{eq:young-index-set} is the usual one from Schur--Weyl duality on $(\mathbb C^n)^{\otimes m}$. The multiplicity $f_\lambda=\dim\cV_\lambda$ is given by the hook-length formula~\cite{Sagan2001}; here $\lambda'$ denotes the conjugate partition,
\[
 f_\lambda
 =\frac{m!}{\prod_{(r,c)\in\lambda}h_\lambda(r,c)},
 \qquad
 h_\lambda(r,c)=\lambda_r-c+\lambda_c'-r+1.
\]
The two commuting group actions take the form below, where $P_\pi$ is the unitary that permutes the $m$ tensor factors according to $\pi\in S_m$:
\[
 U^{\otimes m}\simeq U_\lambda(U)\otimes I_{\cV_\lambda},
 \qquad
 P_\pi\simeq I_{\cU_\lambda}\otimes V_\lambda(\pi)
\]
on each sector. Thus operators commuting with all permutations act trivially on $\cV_\lambda$, which gives Eqs.~\eqref{eq:block-state}--\eqref{eq:block-tangent}.

We next verify the channel identities used in Sec.~\ref{sec:sw-reduction}. The map $\Phi_m$ is block compression followed by partial trace over $\cV_\lambda$, summed over orthogonal sectors, and is therefore CPTP. The map $\Psi_m$ dephases across $\bigoplus_\lambda\cU_\lambda$ and attaches $\tau_\lambda=I_{\cV_\lambda}/f_\lambda$ in each sector, so it is also CPTP. Using the state block decomposition in Eq.~\eqref{eq:block-state} and the multiplicity weights in Eq.~\eqref{eq:weighted-blocks},
\begin{align*}
 \Phi_m(\rho_\theta^{\otimes m})
 &=\bigoplus_\lambda f_\lambda R_{\lambda,\theta},\\
 \Psi_m\Phi_m(\rho_\theta^{\otimes m})
 &=\bigoplus_\lambda f_\lambda R_{\lambda,\theta}\otimes\frac{I_{\cV_\lambda}}{f_\lambda}
 =\rho_\theta^{\otimes m}.
\end{align*}
Differentiation gives the tangent identities in Eq.~\eqref{eq:tangent-recovery}. These identities explain the factors $f_\lambda$ in the weighted blocks $\bar R_{\lambda,\theta}$ and $\bar D_{i,\lambda}$.

\section{Preservation of the tight dual value under Schur--Weyl reduction}
\label{app:tight-cone}

The common primal reduction was proved in Theorem~\ref{thm:common-reduction}. Here we record the corresponding dual argument needed for Corollary~\ref{cor:tight-reduction}. On the full model, the Hayashi--Ouyang tight dual~\cite{HayashiOuyang2023} has objective $\Tr a+\Tr S$ and constraint
\begin{equation}
 \Gamma_m(a,S)
 :=\widetilde G\otimes\rho^{\otimes m}
 -\sum_{i,j=1}^d a_{ij}Q_i\otimes D_j^{(m)}
 -|0\rangle\langle0|\otimes S
 \in\cS_{\SEP}^*(\cR:\cH^{\otimes m}).
\label{eq:Gamma-full-app}
\end{equation}
Under the regularity assumptions of Sec.~\ref{subsec:local-model}, its optimal value equals $C^{(m)}[G]$~\cite{HayashiOuyang2023}.

Let $(a,S)$ satisfy the full dual constraint in Eq.~\eqref{eq:Gamma-full-app} and set $S_{\mathrm{red}}:=\Phi_m(S)=\bigoplus_\lambda\bar S_\lambda$. Using the exact state and tangent recovery identities in Eqs.~\eqref{eq:family-recovery}--\eqref{eq:tangent-recovery},
\begin{equation}
 (\id_{\cR}\otimes\Phi_m)\Gamma_m(a,S)
 =\bigoplus_\lambda\bar\Gamma_\lambda(a,\bar S_\lambda).
\label{eq:dual-block-transfer}
\end{equation}
A completely positive map on the physical factor sends the dual separable cone into the corresponding reduced dual cone. Indeed, if $Z\in\cS_{\SEP}^*(\cR:\cH^{\otimes m})$, $A\in\Sym_+(\cR)$, and $B\succeq0$, then
\[
 \Tr[(\id_{\cR}\otimes\Phi_m)(Z)(A\otimes B)]
 =\Tr[Z(A\otimes\Phi_m^*(B))]\ge0.
\]
Thus Eq.~\eqref{eq:dual-block-transfer} takes a full feasible dual pair to a reduced feasible pair. Since $\Phi_m$ is trace preserving,
\[
 \Tr S_{\mathrm{red}}=\Tr\Phi_m(S)=\Tr S,
\]
so this transfer also preserves the dual objective $\Tr a+\Tr S$.

Conversely, suppose that $(a,\{\bar S_\lambda\})$ satisfies every reduced dual constraint. Put
\[
 Z_{\mathrm{red}}:=\bigoplus_\lambda
    \bar\Gamma_\lambda(a,\bar S_\lambda),
 \qquad
 S:=\Psi_m\!\left(\bigoplus_\lambda\bar S_\lambda\right).
\]
The block conditions in Eq.~\eqref{eq:red-dual-constraint} imply $Z_{\mathrm{red}}\in\cS_{\SEP}^*(\cR:\cK_m)$. The recovery identities for the state and tangents then give the operator equality
\begin{equation}
 (\id_{\cR}\otimes\Psi_m)(Z_{\mathrm{red}})
 =\widetilde G\otimes\rho^{\otimes m}
  -\sum_{i,j}a_{ij}Q_i\otimes D_j^{(m)}
  -|0\rangle\langle0|\otimes S
 =\Gamma_m(a,S).
\label{eq:dual-reverse-transfer}
\end{equation}
Because $\Psi_m$ is completely positive, applying the same dual-cone argument used above, with $\Psi_m^*$ in place of $\Phi_m^*$, shows that $\Gamma_m(a,S)\in\cS_{\SEP}^*(\cR:\cH^{\otimes m})$. This is the required full dual feasibility. Finally, trace preservation gives
\[
 \Tr S=\Tr\Psi_m\!\left(\bigoplus_\lambda\bar S_\lambda\right)
 =\sum_\lambda\Tr\bar S_\lambda.
\]
The multiplier $a$ is unchanged, so the reverse transfer also preserves the objective. Both transfers preserve objective values and feasibility, and hence the full and reduced dual optimal values are equal. The full dual equals $C^{(m)}[G]$ by the cited full-space strong-duality result; together with Theorem~\ref{thm:common-reduction}, this establishes Eq.~\eqref{eq:red-equality}.

\section{Finite-ray reconstruction and strong duality}
\label{app:finite-ray}

Let $S_+:=\{s:w_{s,0}\neq0\}$. If $w_{s,0}=0$, then $\langle w_s|Q_i|w_s\rangle=0$ for every $i$, while $\langle w_s|\widetilde G|w_s\rangle>0$ for every nonzero direction in the parameter subspace. For $w_{s,0}=0$, the matrices $X_{s,\lambda}$ contribute to neither the completeness nor the derivative constraints. Any nonzero $X_{s,\lambda}\succeq0$ increases the objective. Therefore, an optimal solution may be chosen with $X_{s,\lambda}=0$ for these rays.

For $s\in S_+$ define
\[
 \Pi_{s,\lambda}:=|w_{s,0}|^2X_{s,\lambda},
 \qquad
 \widehat\theta_i(s):=\frac{w_{s,i}}{w_{s,0}}.
\]
Define the POVM operator on the reduced space $\cK_m$ by
\[
 \Pi_s:=\bigoplus_{\lambda\in\Ymn}\Pi_{s,\lambda}.
\]
All sums over $s$ in this reconstruction are over $S_+$. The completeness constraint~\eqref{eq:finite-completeness} gives $\sum_s\Pi_{s,\lambda}=I_{\cU_\lambda}$ for every $\lambda$. Thus, $\{\Pi_s\}_{s\in S_+}$ is a POVM on $\cK_m$. The derivative constraint~\eqref{eq:finite-LU} gives
\[
 \sum_s\widehat\theta_i(s)\Tr[D_{j,\mathrm{red}}^{(m)}\Pi_s]=\delta_{ij}.
\]
The objective in Eq.~\eqref{eq:finite-ray-primal} is the weighted second moment of these estimator values. To impose the zero mean condition for a locally unbiased estimator, define
\[
 \mu_i:=\sum_s\widehat\theta_i(s)\Tr[\rho_{\mathrm{red}}^{(m)}\Pi_s],
 \qquad
 \widehat\theta_i'(s):=\widehat\theta_i(s)-\mu_i.
\]
Because $\sum_s\Tr[D_{j,\mathrm{red}}^{(m)}\Pi_s]=\Tr D_{j,\mathrm{red}}^{(m)}=0$, this recentering leaves the derivative normalization unchanged and enforces zero mean. Moreover,
\[
 V^{(m)}[\Pi,\widehat\theta']
 =V^{(m)}[\Pi,\widehat\theta]-\mu\mu^T,
\]
so the weighted MSE cannot increase. Proposition~\ref{prop:stat-equivalence} then transfers the reconstructed strategy to the full $m$-copy model. This establishes the finite-ray upper bound $C^{(m)}[I_d]\le U_W^{(m)}$ in Eq.~\eqref{eq:upper-bound}. Under the standing regularity assumptions, one can choose a finite set of rays for which the finite-ray primal constraints admit a solution. These rays can be constructed from a finite-outcome locally unbiased strategy using vectors proportional to $(1,\widehat\theta(x))$.

The finite-ray primal and dual are ordinary finite-dimensional SDPs. If the finite-ray primal constraints admit a solution, dual Slater is immediate: choose $a=0$ and $\bar S_\lambda=-tI_{\cU_\lambda}$ with $t>0$. Substituting this choice into the sampled constraints~\eqref{eq:finite-ray-dual-rays} gives
\[
 \langle w_s|\bar\Gamma_\lambda(0,-tI)|w_s\rangle
 =\langle w_s|\widetilde G|w_s\rangle\bar R_\lambda
 +t|w_{s,0}|^2I_{\cU_\lambda}\succ0.
\]
Strict positivity follows from the second term when $w_{s,0}\ne0$ and from $G\succ0$ together with $\bar R_\lambda\succ0$ when $w_{s,0}=0$. Standard SDP dual Slater therefore gives zero duality gap. Hence the dual objective can approach the finite-ray primal optimal value arbitrarily closely while satisfying Eq.~\eqref{eq:finite-ray-dual-rays}. This is the form used in Theorem~\ref{thm:global-mesh-bound}.

\section{Quadratic repair, projective-to-affine estimates, and general weights}
\label{app:quadratic-proof}

\subsection{Exact completion of squares and sectorwise repair for identity weight}
\label{app:identity-repair}

The scalar function $q_{\lambda,u}(y)$ and its minimizer $y_{\lambda,u}=\frac12 b_{\lambda,u}$ are defined in Sec.~\ref{subsec:quadratic-geometry}. The main text obtains the exact quadratic identity~\eqref{eq:quadratic-identity} by completing the square and uses it to establish the sectorwise repair~\eqref{eq:sector-repair-minimum}. For the global estimate, it remains to bound the locations of these minimizers uniformly over sectors and unit vectors $u$.

The weighted-block definitions in Eq.~\eqref{eq:weighted-blocks} give $\bar R_\lambda=f_\lambda R_\lambda$ and $\bar D_{i,\lambda}=f_\lambda D_{i,\lambda}$. The multiplicity factor $f_\lambda$ cancels when the tangent operators are normalized by $\bar R_\lambda^{-1/2}$. Together with the state and tangent block decompositions in Eqs.~\eqref{eq:block-state}--\eqref{eq:block-tangent}, this gives
\begin{equation}
 B_i^{\mathrm{full}}(a)
 \cong\bigoplus_\lambda B_{i,\lambda}\otimes I_{\cV_\lambda}.
\label{eq:B-full-block}
\end{equation}
Hence, for every unit $u\in\cU_\lambda$,
\begin{align}
 \|b_{\lambda,u}\|^2
 &=\sum_i\langle u|B_{i,\lambda}|u\rangle^2\\
 &\le\left\langle u\middle|\sum_iB_{i,\lambda}^2\middle|u\right\rangle\\
 &\le\left\|\sum_iB_{i,\lambda}^2\right\|
 \le4\bigl(C_2^{\mathrm{HO}}(a)\bigr)^2,
\end{align}
which gives Eq.~\eqref{eq:identity-radius}.

\subsection{Projective-to-affine bridge and reference mesh}
\label{app:projective-affine}

\begin{lemma}[Projective-to-affine bridge]
\label{lem:projective-affine}
Let $C\ge0$ and let $\delta:=\delta(W)$ be the projective radius defined in Eq.~\eqref{eq:projective-radius}. If
\[
 \delta<\frac{2}{\sqrt{1+C^2}},
\]
then every $y\in\mathbb R^d$ with $\|y\|\le C$ has a sampled affine point $z\in\mathcal Y_W$. Its Euclidean distance from $y$ is bounded by the projective-to-affine error function $\eta(C,\delta)$ defined in Eq.~\eqref{eq:eta}:
\[
 \|y-z\|\le\eta(C,\delta)
 =\frac{(1+C^2)\delta}{2\sqrt{1-\delta^2/4}-C\delta}.
\]
\end{lemma}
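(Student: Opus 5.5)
The plan is a direct geometric argument: lift $y$ to a unit ray, pick the nearest sampled ray, and convert projective closeness into affine closeness through an explicit decomposition. Fix $y\in\mathbb R^d$ with $\|y\|\le C$. Set $\alpha:=(1+\|y\|^2)^{-1/2}$ and let $x:=\alpha(|0\rangle+\sum_i y_i|i\rangle)$, a unit vector in $\cR$ with $x_0=\alpha$ and affine coordinates exactly $y$. By the definition~\eqref{eq:projective-radius} there is $w\in W$ with $\bigl\||x\rangle\langle x|-|w\rangle\langle w|\bigr\|_1\le\delta$.

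The first step is to translate this bound into an overlap bound. For real unit vectors the trace distance of two rank-one projectors is $2\sqrt{1-\langle x|w\rangle^2}$. Replacing $w$ by $-w$ if necessary, which does not change the ray or its affine point, we get $c':=\langle x|w\rangle\ge\sqrt{1-\delta^2/4}$. We then write
\[
 w=c'x+sv,\qquad v\perp x,\quad \|v\|=1,\quad s=\sqrt{1-c'^2}\le\delta/2 .
\]

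The second step is an exact formula for the affine error. Orthogonality of $v$ and $x$ gives $|v_0|\le\sqrt{1-x_0^2}=\alpha\|y\|$. Hence
\[
 w_0=c'\alpha+sv_0\ge\alpha\bigl(c'-s\|y\|\bigr)\ge\alpha\Bigl(\sqrt{1-\delta^2/4}-\tfrac{C\delta}{2}\Bigr)>0.
\]
The last inequality is exactly the hypothesis $\delta<2/\sqrt{1+C^2}$, rewritten as $\frac{C\delta}{2}<\sqrt{1-\delta^2/4}$. In particular $w_0\ne0$, so $z:=(w_1/w_0,\ldots,w_d/w_0)\in\mathcal Y_W$. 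A direct computation, using $x_i/x_0=y_i$, gives
\[
 z_i-y_i=\frac{w_i x_0-w_0x_i}{w_0x_0}=\frac{s\,(v_ix_0-v_0x_i)}{w_0\,x_0}.
\]
The vector $(v_ix_0-v_0x_i)_{i=1}^d$ consists of some of the Plücker coordinates of $v\wedge x$. Its Euclidean norm is therefore at most $\|v\wedge x\|=1$; this can also be checked directly from Lagrange's identity.

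Combining these bounds yields
\[
 \|z-y\|\le\frac{s}{\alpha^2(c'-s\|y\|)}=\frac{(1+\|y\|^2)\,s}{\sqrt{1-s^2}-s\|y\|}.
\]
The right-hand side is increasing in $s$ on the range where the denominator is positive, and it is increasing in $\|y\|$. Substituting $s\le\delta/2$ and $\|y\|\le C$ gives precisely $\eta(C,\delta)$ from Eq.~\eqref{eq:eta}.

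The main point needing care is the denominator estimate, where the sign of $v_0$ could make $w_0$ small. The choice of $w$ with $c'\ge0$, together with the orthogonality bound $|v_0|\le\alpha\|y\|$, is what makes the stated smallness condition both sufficient and sharp. The wedge-norm bound and the monotonicity in $s$ and $\|y\|$ are routine checks.
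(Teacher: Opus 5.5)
Your proposal is correct and is essentially the paper's argument. Both use the same lift $x(y)$, the same sign choice, the same decomposition $w=\cos\beta\,x+\sin\beta\,q$, and the same bound on the zeroth component: your $w_0\ge x_0(c'-s\|y\|)$ is the paper's $w_0\ge\cos(\arctan\|y\|+\beta)$ without angles. Your Pl\"ucker/wedge bound on $(v_ix_0-v_0x_i)_{i}$ matches the paper's step of discarding the nonnegative Gram-determinant term in the trace-distance identity, and the final monotonicity step is the same.
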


\begin{proof}
Write $x(y)=(1,y)/\sqrt{1+\|y\|^2}$. By the definition of the projective covering radius, choose normalized $w\in W$ with $\||x(y)\rangle\langle x(y)|-|w\rangle\langle w|\|_1\le\delta$ and choose the sign of $w$ so that $\langle x(y),w\rangle\ge0$. Set $\gamma:=\arcsin(\delta/2)$ and let $\beta:=\arccos\langle x(y),w\rangle\in[0,\gamma]$ be the actual angle. Then $w=\cos\beta\,x(y)+\sin\beta\,q$ for some unit $q\perp x(y)$ (with arbitrary such $q$ if $\beta=0$). Define $\alpha:=\arctan\|y\|$. The assumption on $\delta$ implies $\alpha+\gamma<\pi/2$. Therefore,
\[
 w_0\ge\cos(\alpha+\beta)\ge\cos(\alpha+\gamma)>0.
\]
Hence, the affine coordinate $z=w_{1:d}/w_0$ is well defined. The identity
\[
 \frac14\bigl\||x(y)\rangle\langle x(y)|-|x(z)\rangle\langle x(z)|\bigr\|_1^2
 =\frac{\|y-z\|^2+\|y\|^2\|z\|^2-(y^Tz)^2}{(1+\|y\|^2)(1+\|z\|^2)}
\]
and the nonnegativity of the Gram-determinant term give, since $w_0>0$ and thus $w=x(z)$,
\begin{align*}
 \|y-z\|
 &\le\frac{\sin\beta}{\cos\alpha\,w_0}
 &&\text{(nonnegative Gram-determinant term)},\\
 &\le\frac{\delta/2}{\cos\alpha\cos(\alpha+\gamma)}
 &&\text{(}\sin\beta\le\delta/2,\quad w_0\ge\cos(\alpha+\gamma)\text{)}.
\end{align*}
The right-hand side is increasing for $\alpha\in[0,\arctan C]$; evaluating at $\alpha=\arctan C$ gives the claimed formula.
\end{proof}

Following the recursive construction of Hayashi and Ouyang~\cite[Sec.~5.4]{HayashiOuyang2023}, we define the reference mesh for even integers $L\ge4$ by the indexed families of unit vectors
\[
 \begin{aligned}
 W_{L,1}&:=\bigl\{(\cos(2\pi j/L),\sin(2\pi j/L)):\ j=0,\ldots,L-1\bigr\},\\
 W_{L,d}&:=\bigl\{(\cos(2\pi j/L)\,v,\sin(2\pi j/L)):\
    v\in W_{L,d-1},\ j=0,\ldots,L-1\bigr\},\quad d\ge2.
 \end{aligned}
\]
Here $W_{L,d}$ is an indexed list of $L^d$ sampled vectors in $\mathbb R^{d+1}$; some vectors may represent the same projective ray. Accordingly, $N_W=L^d$ counts the sampled constraints with their original indexing, rather than the number of distinct rank-one projectors. This redundancy does not change the set of variables satisfying the sampled dual constraints. The worst-case projective overlap is
\[
 F(W):=\min_{\|x\|=1}\max_{w\in W}|\langle x,w\rangle|.
\]
The inductive estimate in Ref.~\cite[Sec.~5.4, Eq.~(107)]{HayashiOuyang2023} gives $F(W_{L,d})\ge\cos^d(\pi/L)$. Since normalized real vectors satisfy
\[
 \bigl\||x\rangle\langle x|-|w\rangle\langle w|\bigr\|_1
 =2\sqrt{1-|\langle x,w\rangle|^2},
\]
the covering radius in Eq.~\eqref{eq:projective-radius} obeys
\[
 \delta(W_{L,d})=2\sqrt{1-F(W_{L,d})^2}
 \le\bar\delta_{L,d}:=2\sqrt{1-\cos^{2d}(\pi/L)}
 =\frac{2\pi\sqrt d}{L}+O(L^{-3})
\]
for fixed $d$ as $L\to\infty$. This proves Eq.~\eqref{eq:reference-mesh-scaling} and supplies a computable upper bound on the covering radius; equality need not hold. All resolutions used in Sec.~\ref{sec:numerics} are even, and the numerical implementation uses $\bar\delta_{L,2}=2\sqrt{1-\cos^4(\pi/L)}$ rather than a separately optimized exact covering radius.

\subsection{Extension to arbitrary positive-definite weights}
\label{app:general-G}

Let $U_W^{(m)}[G]$ denote the optimal value of the finite-ray tight program with weight $G\succ0$; the main-text notation is $U_W^{(m)}=U_W^{(m)}[I_d]$. The identity-weight argument extends directly to fixed $G\succ0$. In affine coordinates the normalized scalar constraint becomes
\[
 q_{\lambda,u}^{G}(y)
 =y^TGy-b_{\lambda,u}^Ty-s_{\lambda,u},
\]
with minimizer $y_{\lambda,u}^{G}=\tfrac12G^{-1}b_{\lambda,u}$ and exact identity
\[
 q_{\lambda,u}^{G}(z)-q_{\lambda,u}^{G}(y_{\lambda,u}^{G})
 =(z-y_{\lambda,u}^{G})^TG(z-y_{\lambda,u}^{G}).
\]
Define $C_G(a):=\|G^{-1}\|_{\op}C_2^{\mathrm{HO}}(a)$.
The same block-norm estimate gives $\|y_{\lambda,u}^{G}\|\le C_G(a)$. The corresponding nonasymptotic extension is as follows.

\begin{proposition}[General positive-definite weight]
\label{prop:general-G}
Let $G\succ0$, and let $W$ be a mesh for which the finite-ray primal constraints admit a solution. Let $(a,\{\bar S_\lambda\})$ satisfy the $G$-weighted dual block positivity constraints at every sampled ray, as specified in Eq.~\eqref{eq:finite-ray-dual-rays}. Define $\varepsilon_W[G]:=U_W^{(m)}[G]-J_W^{(m)}(a,\bar S)\ge0$. If
\[
 \delta(W)<\frac{2}{\sqrt{1+C_G(a)^2}},
\]
then
\[
 0\le U_W^{(m)}[G]-C^{(m)}[G]
 \le
 \varepsilon_W[G]+\lambda_{\max}(G)\,
 \eta\!\left(C_G(a),\delta(W)\right)^2.
\]
Consequently,
\[
 J_W^{(m)}(a,\bar S)-\lambda_{\max}(G)\eta\!\left(C_G(a),\delta(W)\right)^2
 \le C^{(m)}[G]\le U_W^{(m)}[G].
\]
For fixed $m$ and fixed $G\succ0$, the difference between the upper and lower bounds is $O(\delta(W)^2)$ whenever the sampled dual constraints hold, $\varepsilon_W[G]=O(\delta(W)^2)$, and $C_G(a)$ remains bounded.
\end{proposition}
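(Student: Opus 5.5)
The plan is to repeat the identity-weight argument of Sec.~\ref{subsec:dual-repair} and Theorem~\ref{thm:global-mesh-bound}, changing only the norms where $G$ enters. First, for $w_0\neq0$ we divide the sampled operator $\langle w|\bar\Gamma_\lambda(a,\bar S_\lambda)|w\rangle$ by $w_0^2$ and conjugate by $\bar R_\lambda^{-1/2}$. This gives
\[
 \widetilde M^G_\lambda(y)=(y^TGy)I-\sum_i y_iB_{i,\lambda}-\widetilde S_\lambda .
\]
For each unit $u\in\cU_\lambda$, the scalar $q^G_{\lambda,u}(y)=\langle u|\widetilde M^G_\lambda(y)|u\rangle$ has the form displayed before the proposition. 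Completing the square gives the exact identity with the $G$-quadratic form. Sampled dual feasibility gives $q^G_{\lambda,u}(z)\ge0$ at every $z\in\mathcal Y_W$. Hence
\[
 \min_y q^G_{\lambda,u}(y)\ge -\min_{z\in\mathcal Y_W}(z-y^G_{\lambda,u})^TG(z-y^G_{\lambda,u})\ge-\lambda_{\max}(G)\,\min_{z}\|z-y^G_{\lambda,u}\|^2 .
\]

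Second, we bound the minimizers uniformly. We have $\|y^G_{\lambda,u}\|=\tfrac12\|G^{-1}b_{\lambda,u}\|\le\tfrac12\|G^{-1}\|_{\op}\|b_{\lambda,u}\|$. The Cauchy--Schwarz estimate of Appendix~\ref{app:identity-repair} gives $\|b_{\lambda,u}\|\le2C_2^{\mathrm{HO}}(a)$, and together these give $\|y^G_{\lambda,u}\|\le C_G(a)$. The covering condition $\delta(W)<2/\sqrt{1+C_G(a)^2}$ then lets us apply Lemma~\ref{lem:projective-affine} with $C=C_G(a)$. Every minimizer therefore has a sampled affine point within distance $\eta(C_G(a),\delta(W))$. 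Setting $h^2:=\lambda_{\max}(G)\eta(C_G(a),\delta(W))^2$, varying $u$ yields $\widetilde M^G_\lambda(y)\succeq-h^2I$ for all $y$.

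Third, we define the repaired operator $\bar S'_\lambda=\bar S_\lambda-h^2\bar R_\lambda$. It satisfies the continuous constraint~\eqref{eq:red-dual-constraint} on every ray with $w_0\ne0$. On rays with $w_0=0$ the constraint reads $\langle w|\widetilde G|w\rangle\bar R_\lambda\succeq0$, which holds because $G\succ0$, and it does not involve $\bar S_\lambda$. The pair $(a,\bar S')$ is therefore feasible for the reduced dual. By weak duality and Corollary~\ref{cor:tight-reduction}, $C^{(m)}[G]\ge\Tr a+\sum_\lambda\Tr\bar S'_\lambda$. Since $\sum_\lambda\Tr\bar R_\lambda=1$, the right-hand side equals $J_W^{(m)}(a,\bar S)-h^2$.

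The upper bound $C^{(m)}[G]\le U_W^{(m)}[G]$ follows from the reconstruction in Appendix~\ref{app:finite-ray}. Its recentering step uses $G\succ0$ only through $\Tr[G\mu\mu^T]\ge0$. Substituting $J_W^{(m)}=U_W^{(m)}[G]-\varepsilon_W[G]$ gives both displayed inequalities. The final $O(\delta^2)$ statement follows from $\eta(C,\delta)=\tfrac{1+C^2}{2}\delta+O(\delta^2)$ uniformly for bounded $C$.

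No step is deep. The one needing care is the second: checking that the $G$-weighted minimizer still lies in a ball that Lemma~\ref{lem:projective-affine} can handle, with the correct factor $\|G^{-1}\|_{\op}$. A second point is that replacing the $G$-norm by the Euclidean norm costs exactly the factor $\lambda_{\max}(G)$.
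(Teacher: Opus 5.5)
Your proposal is correct and follows essentially the same argument as the paper's proof. You complete the square in the $G$-quadratic form, place every minimizer $y^{G}_{\lambda,u}$ in the ball of radius $C_G(a)=\|G^{-1}\|_{\op}C_2^{\mathrm{HO}}(a)$, apply Lemma~\ref{lem:projective-affine}, pay the factor $\lambda_{\max}(G)$ when passing to the Euclidean distance, and then repair with $\bar S_\lambda'=\bar S_\lambda-t_G\bar R_\lambda$, using $\sum_\lambda\Tr\bar R_\lambda=1$, the $w_0=0$ rays, weak duality, and the finite-ray reconstruction for the upper bound — exactly the steps the paper takes, with the minimizer-norm bound and the uniformity of $\eta$ over bounded $C$ spelled out slightly more explicitly.
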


\begin{proof}
Set
\[
 t_G:=\lambda_{\max}(G)\,
 \eta\!\left(C_G(a),\delta(W)\right)^2.
\]
The radius estimate places every $y_{\lambda,u}^{G}$ in the Euclidean ball of radius $C_G(a)$. For any sector $\lambda$ and unit $u\in\cU_\lambda$, Lemma~\ref{lem:projective-affine} gives a sampled affine point $z$ with
\[
 \|z-y_{\lambda,u}^{G}\|\le
 \eta\!\left(C_G(a),\delta(W)\right).
\]
Sampled dual feasibility gives $q_{\lambda,u}^{G}(z)\ge0$. By the exact quadratic identity and the operator-norm bound for $G$,
\begin{align}
 q_{\lambda,u}^{G}(y_{\lambda,u}^{G})
 &=q_{\lambda,u}^{G}(z)
  -(z-y_{\lambda,u}^{G})^TG(z-y_{\lambda,u}^{G})\notag\\
 &\ge-\lambda_{\max}(G)\|z-y_{\lambda,u}^{G}\|^2
 \ge-t_G.
\label{eq:G-repair-minimum}
\end{align}
The point $y_{\lambda,u}^{G}$ minimizes $q_{\lambda,u}^{G}$, hence $q_{\lambda,u}^{G}(y)\ge-t_G$ for every $y\in\mathbb R^d$ and every unit $u$. Equivalently, the normalized matrix-valued affine constraint is bounded below by $-t_G I$ in every sector. Define the repaired completeness multipliers by
\begin{equation}
 \bar S_\lambda':=\bar S_\lambda-t_G\bar R_\lambda
 \qquad\text{for every }\lambda.
\label{eq:G-repaired-multiplier}
\end{equation}
In affine coordinates, replacing $\bar S_\lambda$ by $\bar S_\lambda'$ adds $t_G I$ to the normalized dual matrix. Thus the repaired dual constraints hold for all rays with $w_0\neq0$. If $w_0=0$, the dual matrix instead reduces to $(w_{1:d}^TGw_{1:d})\bar R_\lambda\succeq0$, which is unaffected by the choice of $\bar S_\lambda'$. Therefore $(a,\{\bar S_\lambda'\})$ is feasible for the full continuous dual.

Writing $p_\lambda=\Tr\bar R_\lambda$ and using $\sum_\lambda p_\lambda=1$, the repaired objective satisfies
\begin{align}
 \Tr a+\sum_\lambda\Tr\bar S_\lambda'
 &=J_W^{(m)}(a,\bar S)-t_G\sum_\lambda p_\lambda\notag\\
 &=J_W^{(m)}(a,\bar S)-t_G.
\label{eq:G-repair-objective}
\end{align}
Weak duality for the continuous problem and finite-ray primal feasibility now yield
\[
 U_W^{(m)}[G]-\varepsilon_W[G]-t_G
 \le C^{(m)}[G]\le U_W^{(m)}[G].
\]
This proves the two-sided bound and its stated error estimate. For fixed $G$ and bounded $C_G(a)$, Eq.~\eqref{eq:eta} gives $t_G=O(\delta(W)^2)$ as $\delta(W)\to0$, proving the final assertion.
\end{proof}

At the level of the weighted loss, one may also introduce the linear reparametrization $\widetilde\theta=G^{1/2}\theta$ to obtain identity weight. Proposition~\ref{prop:general-G} is stated directly in the original coordinates because it keeps the chosen projective mesh fixed and makes the resulting metric factors explicit.

\section{Feasibility and dual boundedness on sufficiently fine meshes}
\label{app:uniform-mesh}

This section proves the two properties used in Corollary~\ref{cor:second-order}. The first lemma constructs a finite-ray feasible point on every sufficiently fine mesh, with a positive lower bound on its selected measurement effects that is independent of the mesh. The second lemma uses that lower bound to control each sampled dual residual; differences between residuals at paired mesh directions then bound the dual coefficients. These are the two ingredients needed to make the quadratic repair estimate uniform over near-optimal sampled dual solutions.

Throughout, $m,d$ and the full-rank statistical model are fixed, $G=I_d$, and all sums over $\lambda$ run over the finite set $\Ymn$. Recall that
\[
 \sum_\lambda\Tr\bar R_\lambda=1,
 \qquad \sum_\lambda\Tr\bar D_{j,\lambda}=0.
\]
The second equality is a trace identity summed over sectors; it does not require $\Tr\bar D_{j,\lambda}=0$ for each sector.

\begin{lemma}[Feasibility on every sufficiently fine projective mesh]
\label{lem:fine-mesh-feasible}
There are constants $\delta_0>0$ and $c>0$ such that every finite indexed mesh $W$ with $\delta(W)<\delta_0$ admits a feasible solution of Eqs.~\eqref{eq:finite-completeness}--\eqref{eq:finite-LU} supported on at most $2d$ distinct projective directions. The corresponding nonzero normalized effects $E_{i,\sigma,\lambda}:=|w_{i,\sigma,0}|^2X_{i,\sigma,\lambda}$ satisfy $E_{i,\sigma,\lambda}\succeq cI_{\cU_\lambda}$, with $c$ independent of $W$.
\end{lemma}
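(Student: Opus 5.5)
The plan is to build an explicit finite-ray feasible point by perturbing a "balanced" reference strategy. It uses $2d$ target affine points $\pm t\,e_i$ for a large constant $t>0$ fixed independently of $W$, and effects close to $I/(2d)$. First I fix a basis of test operators. By Lemma~\ref{app:tangent-independence} and the identity $\Phi_m(D_i^{(m)})=D_{i,\mathrm{red}}^{(m)}$ together with $\Psi_m\Phi_m(D_i^{(m)})=D_i^{(m)}$, the reduced tangents $D_{1,\mathrm{red}}^{(m)},\ldots,D_{d,\mathrm{red}}^{(m)}$ are linearly independent. Hence there are block-diagonal Hermitian $F_1,\ldots,F_d$ on $\cK_m$ with $\|F_l\|\le1$ and an invertible matrix $T_{jl}:=\Tr[D_{j,\mathrm{red}}^{(m)}F_l]$. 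For a real coefficient matrix $c=(c_{il})$ I then set
\[
 E_{i,\sigma}:=\frac1{2d}\Bigl(I+\sigma\sum_{l}c_{il}F_l\Bigr),\qquad i=1,\ldots,d,\ \sigma=\pm1 .
\]
Completeness, $\sum_{i,\sigma}E_{i,\sigma}=I$, holds blockwise for every $c$. Moreover $E_{i,\sigma}\succeq\frac1{4d}I$ as soon as $\sum_l|c_{il}|\le\frac12$.

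Next I pass to the mesh. Apply Lemma~\ref{lem:projective-affine} with $C=t+1$ to each target $\sigma t e_i$. For $\delta(W)<\delta_0$, this gives a sampled ray $w_{i,\sigma}$ with $w_{i,\sigma,0}>0$ and affine point $z_{i,\sigma}$ satisfying $\|z_{i,\sigma}-\sigma t e_i\|\le\eta(t+1,\delta(W))$. I choose $\delta_0$ so that this bound is at most $\min\{1,t/4\}$. The $2d$ targets are pairwise separated by at least $\sqrt2\,t$, so the selected rays are distinct projective directions. I put $X_{i,\sigma,\lambda}:=E_{i,\sigma,\lambda}/|w_{i,\sigma,0}|^2$ on these rays and zero on all other rays. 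Then Eq.~\eqref{eq:finite-completeness} holds and the normalized effects are exactly $E_{i,\sigma,\lambda}$. Because $\langle w|Q_k|w\rangle=w_0w_k$, the derivative constraint~\eqref{eq:finite-LU} reads $\sum_{i,\sigma}z^k_{i,\sigma}\Tr[D_{j,\mathrm{red}}^{(m)}E_{i,\sigma}]=\delta_{kj}$. The identity part of $E_{i,\sigma}$ contributes nothing, since $\Tr D_{j,\mathrm{red}}^{(m)}=\sum_\lambda\Tr\bar D_{j,\lambda}=0$. The constraint therefore becomes the linear equation $A(z)\,c\,T^{T}=I_d$, where $A_{ki}(z):=\frac1{2d}\sum_\sigma\sigma z^k_{i,\sigma}$.

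The key estimate is $A(z)=\frac td\bigl(I_d+\Delta\bigr)$ with $\|\Delta\|\le\frac{d}{t}\,\eta(t+1,\delta)\cdot O(1)\le\frac12$ after shrinking $\delta_0$. Hence $A(z)$ is invertible and $c=A(z)^{-1}(T^{T})^{-1}$ satisfies $\|c\|\le\frac{2d}{t}\|T^{-1}\|$. The constant $t$ is chosen at the start, depending only on $d$ and $T$, large enough that this forces $\sum_l|c_{il}|\le\frac12$. The lemma then holds with $c=\frac1{4d}$, and both $c$ and $\delta_0$ are independent of $W$.

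The main obstacle I expect is the order of choices. The constant $t$ must be fixed before $\delta_0$, because $\eta(t+1,\cdot)$ depends on it. The perturbation $\Delta$ must then be controlled relative to $t$ rather than absolutely. The bound $\eta(t+1,\delta)\to0$ as $\delta\to0$ for fixed $t$, from Eq.~\eqref{eq:eta}, resolves this.
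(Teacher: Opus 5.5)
Your proposal is correct and is essentially the paper's construction: the paper also fixes $t$ first and picks sampled rays near $\pm t e_i$ via Lemma~\ref{lem:projective-affine}. It then uses effects $\frac{1}{2d}I+\sigma\sum_j M_{ij}T_{j,\lambda}$, where $M=(B^T)^{-1}$ is built from the differences $z_{i,+}-z_{i,-}$ and the $T_{j,\lambda}$ are Gram-dual operators, and since $A(z)=B^T/(2d)$ your solution $c=A(z)^{-1}(T^T)^{-1}$ reduces to this same form with the dual family $\sum_l ((T^T)^{-1})_{jl}F_l$ in place of $T_{j,\lambda}$. The remaining differences are cosmetic: you use normalized test operators with a general invertible matrix instead of the exact Gram dual, and you reuse the letter $c$ for both the coefficient matrix and the positivity constant.
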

\begin{proof}
We first construct sector operators dual to the tangent directions; these operators will be used to enforce the derivative constraints of the finite-ray primal. Define the real matrix
\[
 H_{jk}:=\sum_\lambda\Tr(\bar D_{j,\lambda}\bar D_{k,\lambda}).
\]
The exact Schur--Weyl block representation and Lemma~\ref{app:tangent-independence} imply linear independence of the families $(\bar D_{j,\lambda})_\lambda$. For real $v$, the quantity $v^THv$ is the sum of the traces of the squares of the Hermitian operators $\sum_jv_j\bar D_{j,\lambda}$. It vanishes only when $v=0$. Thus $H$ is positive definite. Define
\begin{equation}
 T_{j,\lambda}:=\sum_{k=1}^d(H^{-1})_{jk}\bar D_{k,\lambda},
 \qquad
 \sum_\lambda\Tr(\bar D_{k,\lambda}T_{j,\lambda})=\delta_{kj}.
\label{eq:mesh-gram-dual}
\end{equation}
We next select pairs of sampled directions near the $2d$ affine target points $\pm te_i$. The separation of each pair will form an invertible matrix, while the size of its inverse will keep the constructed effects positive. Let $T_*:=\max_\lambda(\sum_j\|T_{j,\lambda}\|^2)^{1/2}$, choose $t>2dT_*$, and then choose $0<e<t/(2\sqrt d)$. By Lemma~\ref{lem:projective-affine}, a sufficiently small projective covering radius supplies, for each $i$, sampled rays with nonzero zeroth components and affine coordinates $z_{i,+},z_{i,-}$ obeying
\[
 \|z_{i,+}-te_i\|\le e,
 \qquad \|z_{i,-}+te_i\|\le e.
\]
Indeed, the lemma's error $\eta(t,\delta)$ tends to zero as $\delta\to0$ and its strict domain condition eventually holds. The $2d$ target directions are separated by more than $2e$, so the sampled projective directions can be chosen distinct.

Define the real $d\times d$ matrices
\begin{equation}
 B_{ik}:=z_{i,+,k}-z_{i,-,k},
 \qquad M:=(B^T)^{-1}.
\label{eq:mesh-BM}
\end{equation}
Each row of $B-2tI_d$ has Euclidean norm at most $2e$. Hence
\[
 \|B-2tI_d\|\le2\sqrt d\,e<t,
 \qquad \|M\|\le b:=\frac1{2t-2\sqrt d\,e}<\frac1t.
\]
In particular $B$ is invertible and $B^TM=I_d$.

We now use the inverse matrix $M$ and the tangent-dual operators $T_{j,\lambda}$ to define the selected measurement effects. For every $i$, sign $\sigma\in\{+1,-1\}$, and sector $\lambda$, set
\begin{equation}
 E_{i,\sigma,\lambda}:=\frac1{2d}I_{\cU_\lambda}
 +\sigma\sum_{j=1}^dM_{ij}T_{j,\lambda}.
\label{eq:mesh-constructed-effects}
\end{equation}
The perturbation in this expression has norm at most $\|M\|T_*\le bT_*$. Thus
\[
 E_{i,\sigma,\lambda}\succeq cI_{\cU_\lambda},
 \qquad c:=\frac1{2d}-bT_*>0.
\]
The positivity bound has the same constant $c$ for every sufficiently fine mesh. Summing the positive and negative effects also gives the completeness constraint,
\[
 \sum_{i,\sigma}E_{i,\sigma,\lambda}=I_{\cU_\lambda}.
\]
It remains to verify the derivative constraints. The global trace identity and Eq.~\eqref{eq:mesh-gram-dual} yield
\[
 \sum_\lambda\Tr(\bar D_{j,\lambda}E_{i,\sigma,\lambda})
 =\sigma M_{ij}.
\]
Consequently, with $B^TM=I_d$,
\begin{align*}
 &\sum_{\lambda,i,\sigma}z_{i,\sigma,k}
 \Tr(\bar D_{j,\lambda}E_{i,\sigma,\lambda})\\
 &\quad=\sum_i(z_{i,+,k}-z_{i,-,k})M_{ij}
 =(B^TM)_{kj}=\delta_{kj}.
\end{align*}
Finally, convert these measurement effects into primal variables by putting $X_{i,\sigma,\lambda}:=E_{i,\sigma,\lambda}/|w_{i,\sigma,0}|^2$ at the selected rays and assigning zero operators to all remaining indexed rays. The completeness and derivative identities just proved coincide with Eqs.~\eqref{eq:finite-completeness}--\eqref{eq:finite-LU}, since $\langle w_{i,\sigma}|Q_k|w_{i,\sigma}\rangle=|w_{i,\sigma,0}|^2z_{i,\sigma,k}$. Thus the construction satisfies both primal constraints and the mesh-independent positivity bound.
\end{proof}

\begin{lemma}[Uniform bounds for near-optimal sampled dual solutions]
\label{lem:uniform-dual}
After decreasing $\delta_0$ if necessary, there exists a constant $A_0<\infty$ such that every mesh $W$ with $\delta(W)<\delta_0$ and every sampled dual-feasible pair $(a,\bar S)$ satisfying
\[
 0\le U_W^{(m)}-J_W^{(m)}(a,\bar S)\le1
\]
obey $\|a\|_{\mathrm F}\le A_0$. Consequently $C_2^{\mathrm{HO}}(a)$ and the norms of all $\bar S_\lambda$ are uniformly bounded over these meshes and pairs.
\end{lemma}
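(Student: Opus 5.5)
Let $F:=\{(i,\sigma,\lambda)\}$ be the feasible point of Lemma~\ref{lem:fine-mesh-feasible}, with effects $E_{i,\sigma,\lambda}\succeq cI$ at the selected rays $w_{i,\sigma}$ whose affine coordinates $z_{i,\sigma}$ lie within $e$ of $\pm te_i$. The plan is to combine weak duality against this fixed feasible point with sampled dual feasibility. The first goal is to show that every sampled dual residual at the selected rays is uniformly small. The second is to extract $a$ from differences of residuals at the paired rays.

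\emph{Step 1 (residual bound).} For the sampled dual pair $(a,\bar S)$, write the normalized residuals
\[
 \Delta_{i,\sigma,\lambda}:=\frac{\langle w_{i,\sigma}|\bar\Gamma_\lambda(a,\bar S_\lambda)|w_{i,\sigma}\rangle}{|w_{i,\sigma,0}|^2}
 =\|z_{i,\sigma}\|^2\bar R_\lambda-\sum_k z_{i,\sigma,k}A_{k,\lambda}-\bar S_\lambda\succeq0 .
\]
The identity \eqref{eq:red-tight-bound-identity}, specialized to the finite-ray primal point with $X_{i,\sigma,\lambda}=E_{i,\sigma,\lambda}/|w_{i,\sigma,0}|^2$, gives
\[
 \text{obj}(F)=J_W^{(m)}(a,\bar S)+\sum_{i,\sigma,\lambda}\Tr[\Delta_{i,\sigma,\lambda}E_{i,\sigma,\lambda}] .
\]
The primal objective satisfies $\text{obj}(F)\le\sum_{i,\sigma,\lambda}\|z_{i,\sigma}\|^2\Tr[\bar R_\lambda E_{i,\sigma,\lambda}]\le(t+e)^2\cdot\text{const}$, and this bound is independent of $W$. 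Moreover $J_W\ge U_W-1\ge C^{(m)}[I_d]-1$, by Eq.~\eqref{eq:upper-bound}. Hence the sum of the nonnegative terms is at most a constant $K$. Since $E\succeq cI$ and $\Delta\succeq0$, we get $c\,\Tr\Delta_{i,\sigma,\lambda}\le K$, and therefore $\|\Delta_{i,\sigma,\lambda}\|\le K/c$ uniformly.

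\emph{Step 2 (bounding $a$).} Subtracting the residuals at the paired rays eliminates $\bar S_\lambda$:
\[
 \Delta_{i,-,\lambda}-\Delta_{i,+,\lambda}=(\|z_{i,-}\|^2-\|z_{i,+}\|^2)\bar R_\lambda+\sum_k B_{ik}A_{k,\lambda}.
\]
The left side and the $\bar R_\lambda$ term are uniformly bounded. Since $B$ is invertible with $\|B^{-1}\|=\|M\|\le b$, each $A_{k,\lambda}=\sum_j a_{kj}\bar D_{j,\lambda}$ is bounded in norm uniformly in $\lambda$ and $k$. Pairing with the tangent-dual operators, we have $\sum_\lambda\Tr(A_{k,\lambda}T_{j,\lambda})=a_{kj}$ by Eq.~\eqref{eq:mesh-gram-dual}. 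This recovers $|a_{kj}|\le\sum_\lambda\|A_{k,\lambda}\|_1\|T_{j,\lambda}\|$, which is bounded, and so gives $A_0$. Then $C_2^{\mathrm{HO}}(a)$ is bounded by Eq.~\eqref{eq:C2-HO-reduced}, since the $B_{i,\lambda}$ depend linearly on $a$ through fixed operators. Finally, $\bar S_\lambda=\|z_{i,+}\|^2\bar R_\lambda-\sum_k z_{i,+,k}A_{k,\lambda}-\Delta_{i,+,\lambda}$ is bounded.

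The main obstacle is Step 1. We must make sure that the constants ($t$, $e$, $c$, $b$, and the primal objective of $F$) are genuinely mesh-independent. This requires fixing $t$ and $e$ first and only then shrinking $\delta_0$, as in Lemma~\ref{lem:fine-mesh-feasible}. We also need a mesh-independent lower bound on $J_W$, for which we use $C^{(m)}[I_d]\le U_W$.
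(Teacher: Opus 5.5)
Your proposal is correct and follows essentially the same route as the paper. Both arguments use the exact primal--dual identity against the mesh-independent feasible point of Lemma~\ref{lem:fine-mesh-feasible}, bound each residual via $E_{i,\sigma,\lambda}\succeq cI$, eliminate $\bar S_\lambda$ by pairwise differences inverted through $B$, and recover $a$ from the tangent-dual operators $T_{j,\lambda}$. The only cosmetic difference is that you bound $J_W^{(m)}$ from below using $C^{(m)}[I_d]\le U_W^{(m)}$, whereas the paper combines the gap hypothesis with $0\le U_W^{(m)}\le P_W(X)\le r^2$.
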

\begin{proof}
We first bound the objective of the feasible construction from the preceding lemma. This provides a mesh-independent upper bound on the primal--dual gap for any sampled dual solution satisfying the hypothesis. Use the $2d$ effects constructed in Lemma~\ref{lem:fine-mesh-feasible}. All the constants $t,e,b,c$ are independent of the mesh. Put $r:=t+e$; the selected points satisfy $\|z_{i,\sigma}\|\le r$. The primal objective $P_W(X)$ of this feasible construction obeys
\begin{equation}
 0\le P_W(X)
 =\sum_{\lambda,i,\sigma}\|z_{i,\sigma}\|^2
 \Tr(\bar R_\lambda E_{i,\sigma,\lambda})\le r^2=:K,
\label{eq:mesh-primal-bound}
\end{equation}
where the upper bound follows from $\sum_{i,\sigma}E_{i,\sigma,\lambda}=I$ and $\sum_\lambda\Tr\bar R_\lambda=1$.

For $z\in\mathbb R^d$, define the Hermitian matrices
\begin{equation}
 A_{k,\lambda}:=\sum_j a_{kj}\bar D_{j,\lambda},
 \qquad F_\lambda(z):=\|z\|^2\bar R_\lambda
 -\sum_kz_kA_{k,\lambda}-\bar S_\lambda.
\label{eq:mesh-dual-residual}
\end{equation}
Sampled dual feasibility gives $F_\lambda(z_{i,\sigma})\succeq0$. Substitution of the primal normalization and derivative constraints gives the exact identity
\begin{equation}
 P_W(X)-J_W^{(m)}(a,\bar S)
 =\sum_{\lambda,i,\sigma}\Tr\bigl(F_\lambda(z_{i,\sigma})E_{i,\sigma,\lambda}\bigr).
\label{eq:mesh-primal-dual-identity}
\end{equation}
Next we bound each residual separately. Since $U_W^{(m)}\le P_W(X)$ and the assumed gap is at most one, the nonnegative sum in Eq.~\eqref{eq:mesh-primal-dual-identity} is at most $K+1$. Every summand is nonnegative; moreover, $E_{i,\sigma,\lambda}\succeq cI$ makes it at least $c\Tr F_\lambda(z_{i,\sigma})$. Positivity therefore gives
\begin{equation}
 \|F_\lambda(z_{i,\sigma})\|
 \le\Tr F_\lambda(z_{i,\sigma})
 \le H_0:=(K+1)/c.
\label{eq:mesh-residual-bound}
\end{equation}

We now use pairs of residuals to bound the dual coefficient operators $A_{k,\lambda}$. Subtracting Eq.~\eqref{eq:mesh-dual-residual} at the two selected points for each $i$ eliminates $\bar S_\lambda$ and yields
\begin{equation}
 \sum_kB_{ik}A_{k,\lambda}
 =\bigl(\|z_{i,+}\|^2-\|z_{i,-}\|^2\bigr)\bar R_\lambda
 -F_\lambda(z_{i,+})+F_\lambda(z_{i,-}).
\label{eq:mesh-difference}
\end{equation}
Each right-hand side has norm at most $r^2R_*+2H_0$, with $R_*:=\max_\lambda\|\bar R_\lambda\|$. Inverting $B$ and using $\|B^{-1}\|\le b$ gives
\[
 \|A_{k,\lambda}\|\le A_*:=\sqrt d\,b(r^2R_*+2H_0).
\]
The tangent-dual operators in Eq.~\eqref{eq:mesh-gram-dual} then recover the scalar entries of $a$ from the bounded $A_{k,\lambda}$:
\[
 a_{kj}=\sum_\lambda\Tr(A_{k,\lambda}T_{j,\lambda}).
\]
With $T_1:=\max_j\sum_\lambda\|T_{j,\lambda}\|_1$, the trace inequality gives $|a_{kj}|\le A_*T_1$ and hence $\|a\|_{\mathrm F}\le dA_*T_1=:A_0$, independently of $W$.

Finally, with $a$ bounded independently of the mesh, we bound the remaining dual variables. Using Eq.~\eqref{eq:mesh-dual-residual} at one selected point, the bounds $\|z\|\le r$, $\|A_{k,\lambda}\|\le A_*$, and Eq.~\eqref{eq:mesh-residual-bound} give
\[
 \|\bar S_\lambda\|\le r^2R_*+\sqrt d\,rA_*+H_0.
\]
The compact-search radius $C_2^{\mathrm{HO}}(a)$, defined in Eq.~\eqref{eq:C2-HO}, is a continuous function of $a$ for the fixed full-rank model. It is therefore bounded when $\|a\|_{\mathrm F}\le A_0$.
\end{proof}

\clearpage
\section{Certified two-copy finite-ray SDP examples}
\label{app:certified-finite-ray}

We report certified enclosures for the finite-ray SDP optimum in two
instances of the primal and sampled dual defined in
Eqs.~\eqref{eq:finite-ray-primal},
\eqref{eq:finite-completeness}, and \eqref{eq:finite-LU}.
Both instances use the two-parameter qubit model
of Eq.~\eqref{eq:numerical-qubit-model}, the weight matrix $G=I_2$,
copy number $m=2$, and the projective reference mesh $W_{48,2}$.
These enclosures bound the finite-ray SDP optimum $U_{W_{48,2}}^{(2)}$;
they do not, by themselves, certify a lower bound for the continuum tight
value $C^{(2)}[I_2]$.

The following outward-rounded endpoints are valid:
\begin{equation}
\begin{array}{c|cc}
 r & \text{Certified lower bound for }U_{W_{48,2}}^{(2)}
   & \text{Certified upper bound for }U_{W_{48,2}}^{(2)} \\
 \hline
 0   & 1.501442773224 & 1.501443170418 \\
 0.5 & 1.258557482498 & 1.258557983087
\end{array}
\label{eq:certified-two-copy-finite-ray}
\end{equation}
The interval widths are less than $4.0\times10^{-7}$ and
$5.1\times10^{-7}$, respectively.
The endpoints are certified by outward interval evaluation and are not
merely the stopping tolerances of a numerical SDP solver.

The certificate for $r=0.5$ uses the triplet and singlet blocks
\begin{align}
 \bar R_t&=\operatorname{diag}(9,3,1)/16,
 & \bar R_s&=3/16,\\
 \bar D_{1,t}&=\operatorname{diag}(3,-1,-1)/4,
 & \bar D_{1,s}&=-1/4,\\
 \bar D_{2,t}&=\frac{1}{4\sqrt2}
 \begin{pmatrix}0&3&0\\3&0&1\\0&1&0\end{pmatrix},
 & \bar D_{2,s}&=0.
\end{align}
For the sampled dual, rationalized multiplier matrices are adjusted
to ensure strict feasibility.
At all $48^2=2304$ indexed mesh directions, the singlet constraint and the
three leading principal minors of the real symmetric triplet constraint are
positive. The trigonometric mesh entries are enclosed by rational intervals
obtained from a rational enclosure of $\pi$ and Taylor remainder bounds.
Consequently, the reported dual objective is a rigorous lower bound on
$U_{W_{48,2}}^{(2)}$.

The primal construction starts with ten rank-one rational triplet effects
and one nonnegative rational singlet effect, each assigned to a grid direction.  To enforce the linear
constraints exactly, the construction uses four additional triplet grid rays with
affine coordinates $(\pm t,0)$ and $(0,\pm t)$,
where $t=2+\sqrt3$, and a singlet ray at the origin.
Let
\begin{equation}
 T_1=\frac{16}{11}\bar D_{1,t},\qquad
 T_2=\frac85\bar D_{2,t}.
\end{equation}
These operators satisfy
$\operatorname{Tr}(\bar D_{j,t}T_k)=\delta_{jk}$.
Scale the original effects by $1-\tau$ with $\tau=10^{-8}$, and add
four uncorrected triplet anchor effects at the indicated rays, given by
\begin{equation}
 A_{i,\sigma}:=\frac14 I_t+\frac{\sigma}{2t}T_i,
 \qquad E^{(0)}_{i,\sigma,t}=\tau A_{i,\sigma}.
\label{eq:cert-scaled-anchors}
\end{equation}
Let $E^{(0)}_{s,\lambda}$ denote all the uncorrected effects after scaling
the original effects by $1-\tau$ and adding the anchors, including
the singlet anchor. With $z_s$ the affine coordinate of grid ray $s$,
define the triplet completeness residual and the derivative residuals by
\begin{align}
 \Delta C_t&:=\sum_s E^{(0)}_{s,t}-I_t,\label{eq:cert-triplet-residual}\\
 \Delta R_{ij}&:=\sum_{\lambda\in\{t,s\}}\sum_s
 z_{s,i}\Tr(\bar D_{j,\lambda}E^{(0)}_{s,\lambda})
 -\delta_{ij}.\label{eq:cert-derivative-residual}
\end{align}
Thus $\Delta R$ includes contributions from both the triplet and
singlet sectors. The triplet anchor at $\sigma t e_i$ is adjusted by
\begin{equation}
 H_{i,\sigma}=-\frac14\Delta C_t
 -\frac{\sigma}{2t}\sum_{j=1}^2\Delta R_{ij}T_j,
 \qquad \sigma\in\{+1,-1\}.
\label{eq:certified-triplet-anchor-correction}
\end{equation}
The correction obeys the two identities
\begin{align}
 \sum_{i=1}^2\sum_{\sigma=\pm1}H_{i,\sigma}
 &=-\Delta C_t,\label{eq:cert-anchor-completeness}\\
 \sum_{\sigma=\pm1}(\sigma t)\Tr(\bar D_{j,t}H_{k,\sigma})
 &=-\Delta R_{kj}.\label{eq:cert-anchor-derivative}
\end{align}
The first identity follows by summing over $\sigma$, and the second
uses $\Tr(\bar D_{j,t}T_k)=\delta_{jk}$.
The singlet normalization is corrected separately at the zero-cost
ray with affine coordinate zero; this does not change the derivative
residual. Thus completeness and the derivative constraints for local unbiasedness
hold exactly after correction.

For $r=0.5$, rational interval calculations bound each entry of the residuals
by less than $10^{-11}$. Since $\|T_i\|\le12/11$,
the unscaled anchor matrices defined in Eq.~\eqref{eq:cert-scaled-anchors}
satisfy $\lambda_{\min}(A_{i,\sigma})\ge1/4-(12/11)/(2t)>0$.
Consequently, the actual uncorrected anchor effects obey
\begin{equation}
 \lambda_{\min}(E^{(0)}_{i,\sigma,t})\ge
 \tau\left(\frac14-\frac{12/11}{2t}\right).
\label{eq:cert-scaled-anchor-margin}
\end{equation}
The norm of each correction $H_{i,\sigma}$ is strictly smaller than this
margin. The corrected singlet anchor is also positive. Thus every effect in the construction is positive semidefinite, and the
corrected primal objective gives a rigorous upper bound.

For $r=0$, the sampled-dual semidefinite inequalities at all 2304 indexed
directions, positive semidefiniteness of the primal effects, exact satisfaction
of completeness and derivative constraints after anchor correction, and
outward-rounded objective bounds are likewise established. The singlet block
does not contribute to this instance. Its four triplet anchors satisfy the
same correction identities, with the model-specific $T_i$ and a rational
interval positivity margin.

In both instances, numerical candidates are used to construct the certificates,
while feasibility and objective inequalities are established by rational
interval arithmetic rather than inferred from numerical solver tolerances.
These certifications apply only to the two stated finite-ray instances and do
not provide rigorous enclosures for the other copy numbers and parameter
choices reported in this work.

\section*{Acknowledgments}
This work was supported by the Guangdong Provincial Quantum Science Strategic Initiative (Grant No. GDZX2505003).

Generative AI tools were used to assist with drafting and revising parts of the manuscript, improving the clarity of the exposition, and generating code for figures designed by the authors. They also assisted with selected mathematical calculations and checks. The research questions, mathematical framework, main results, and proof strategy were developed by the authors. All AI-assisted text, figure code, and calculations were reviewed and revised where necessary by the authors, who take full responsibility for the contents of this manuscript.

\begingroup
\setlength{\parskip}{0.10em}

\endgroup

\end{document}